\newif\iffancyfont%
  \renewcommand{\dotsm}{{\mathinner{\cdotp\cdotp\cdotp}}}
  \DeclareRobustCommand{\defn}{\mathrel{{\vdotdot}{\equal}}}
  \DeclareFontFamily{U}{dutchcal}{\skewchar \font =45}
  \DeclareFontShape{U}{dutchcal}{m}{n}{<-> s*[1.0] dutchcal-r}{}
  \DeclareFontShape{U}{dutchcal}{b}{n}{<-> s*[1.0] dutchcal-b}{}
  \DeclareSymbolFont{dutchcal}{U}{dutchcal}{m}{n}
  \DeclareSymbolFontAlphabet{\mathcal}{dutchcal}
  \newcommand\tbfigures\null
  \let\uppi\piup
  \let\upDelta\Deltaup
  \newcommand{\defn}{\coloneqq}
\setlist[1]{labelindent=\parindent}
\setlist[description]{font=\sffamily\bfseries,align=right,labelsep=1em}
\numberwithin{equation}{section}
\setlist[enumerate,1]{label=(\roman*),font=\normalfont}
\newcounter{and}
\newdimen{\instindent}
\newcommand{\institute}[1]{\newcommand{\@institute}{#1}}
\newcommand{\inst}[1]{\unskip\smash{$^{#1}$}\setcounter{and}{1}\ignorespaces}
\newcommand{\email}[1]{\href{mailto:#1}{#1}}
\renewcommand{\maketitle}{
  {% title
    \raggedright%
    \LARGE%
    \noindent%
    \bfseries%
    \sffamily%
    \@title%
    \par
  }

  \vspace{1.5\baselineskip}

  {% author
    \raggedright%
    \renewcommand{\and}{\unskip, \ignorespaces}%
    \noindent\ignorespaces\@author\par
  }

  \vspace{0.5\baselineskip}

  {% institute
    \small%
    \parindent=0pt%
    \parskip=0pt%
    \setcounter{and}{1}%
    \renewcommand{\and}{%
      \par\stepcounter{and}%
      \hangindent\instindent%
      \noindent%
      \hbox to \instindent{\hss\smash{$^{\theand}$\enspace}}\ignorespaces%
    }%
    \setbox0=\vbox{\@institute}%
    \ifnum\value{and}>9\relax\setbox0=\hbox{$^{88}$\enspace}%
    \else\setbox0=\hbox{$^{8}$\enspace}\fi%
    \instindent=\wd0\relax%
    \ifnum\value{and}=1\relax%
    \else%
      \setcounter{and}{1}%
      \hangindent\instindent%
      \noindent%
      \hbox to \instindent{\hss\smash{$^{\theand}$}\enspace}\ignorespaces%
    \fi%
    \ignorespaces%
    \@institute\par
  }
}
\renewenvironment{abstract}{
  \addvspace{1.5\baselineskip}%
  \topsep=0pt\partopsep=0pt%
  \trivlist\item[\hspace{\labelsep}\bfseries\sffamily Abstract.]
}{}
\newcommand{\e}{\mathrm{e}}
\newcommand{\im}{\mathrm{i}}
\newcommand{\field}[1][K]{{\mathds{#1}}}
\newcommand{\NN}{{\field[N]}}
\newcommand{\RR}{{\field[R]}}
\newcommand{\CC}{{\field[C]}}
\renewcommand{\Re}{\operatorname{Re}}
\renewcommand{\Im}{\operatorname{Im}}
\newcommand{\dif}{\mathop{}\!\mathrm{d}}
\DeclareMathOperator{\supp}{supp}
\DeclareMathOperator{\sgn}{sgn}
\DeclareMathOperator{\csch}{csch}
\newcommand{\conj}[1]{\overline{#1}}
\DeclarePairedDelimiter{\abs}{\lvert}{\rvert}
\DeclarePairedDelimiter{\norm}{\lVert}{\rVert}
\DeclarePairedDelimiter{\floor}{\lfloor}{\rfloor}
\DeclarePairedDelimiter{\ceil}{\lceil}{\rceil}
\newcommand{\init}{\mathrm{init}}
\newcommand{\fin}{\mathrm{fin}}
\newcommand{\free}{\mathrm{free}}
\newcommand{\tow}{\mathrm{tow}}
\newcommand{\one}{\mathds{1}}
\let\vec\bm
\newcommand{\dotmatrix}[4]{%
  \sbox0{$+$}%
  \sbox2{$\cdot$}%
  \mathbin{\ooalign{%
    \phantom{$+$}\cr%
    \ifx1#1\hidewidth\raise\dimexpr+\ht0/2\relax\hbox{$\cdot$}\hidewidth\cr\fi%
    \ifx1#2\hidewidth\kern\dimexpr-\wd0+\wd2/2\relax\hbox{$\cdot$}\hidewidth\cr\fi%
    \ifx1#3\hidewidth\kern\dimexpr\wd0-\wd2/2\relax\hbox{$\cdot$}\hidewidth\cr\fi%
    \ifx1#4\hidewidth\raise\dimexpr-\ht0/2\relax\hbox{$\cdot$}\hidewidth\cr\fi%
  }}%
}
\newtheoremstyle{nonumberplainnoparens}%
  {\item[\theorem@headerfont\hskip\labelsep ##1\theorem@separator]}%
  {\item[\theorem@headerfont\hskip\labelsep ##1 ##3\theorem@separator]}
\theoremstyle{plain}
\newtheorem{theorem}{Theorem}[section]
\newtheorem{proposition}[theorem]{Proposition}
\newtheorem{lemma}[theorem]{Lemma}
\newtheorem{remark}[theorem]{Remark}
\theoremstyle{nonumberplainnoparens}
\newtheorem{proof}{Proof}
\definecolor{hypercolor}{rgb}{0,0.2,0.7}
\title{The Cosmological Semiclassical Einstein Equation as an Infinite-Dimensional Dynamical System}
\author{
  Hanno Gottschalk\inst{1}
  \and
  Daniel Siemssen\inst{2}
}
\institute{
  Department of Mathematics and Informatics, University of Wuppertal, Gau{\ss}stra{\ss}e 20, 42119 Wuppertal, Germany.
  E-mail:~\email{hanno.gottschalk@uni-wuppertal.de}.
  \and
  Department of Mathematics, University of York, Heslington, York YO10 5DD, United Kingdom.
  E-mail:~\email{daniel.siemssen@york.ac.uk}.
}
\begin{document}

\maketitle

\begin{abstract}
  We develop a comprehensive framework in which the existence of solutions to the semiclassical Einstein equation (SCE) in cosmological spacetimes is shown.
  Different from previous work on this subject, we do not restrict to the conformally coupled scalar field and we admit the full renormalization freedom.
  Based on a regularization procedure, which utilizes homogeneous distributions and is equivalent to Hadamard point-splitting, we obtain a reformulation of the evolution of the quantum state as an infinite-dimensional dynamical system with mathematical features that are distinct from the standard theory of infinite-dimensional dynamical systems (e.g., unbounded evolution operators).
  Nevertheless, applying methods closely related to Ovsyannikov's method, we show existence of maximal/global solutions to the SCE for vacuum-like states, and of local solutions for thermal-like states.
  Our equations do not show the instability of the Minkowski solution described by other authors.
\end{abstract}

%------------------------------------------------------------------------------%
\section{Introduction}

The \emph{semiclassical Einstein equation (SCE)} is the equation
\begin{equation}\label{eq:SCE}
  G_{\mu\nu} = \kappa {\langle T^\mathrm{ren}_{\mu\nu} \rangle}_\omega,
\end{equation}
where $G_{\mu\nu}$ denotes the Einstein tensor for the (classical) metric $g_{\mu\nu}$, $\kappa$ is a gravitational coupling constant constant (conventionally, $\kappa = 8\uppi\mathrm{G}$), and $\langle T^\mathrm{ren}_{\mu\nu} \rangle_\omega$ is the renormalized stress-energy tensor for a quantum field theory (QFT) in the state $\omega$.
That is, matter is described by a quantum field and gravity is described by a classical Lorentzian manifold.
The SCE has been studied since the early 1960's by a number of authors, see~\cite{hack:book,wald:qft-book,birrell-davies} for an overview.
It is typically introduced in an ad hoc manner as a minimal change of the classical Einstein equation by replacing the stress-energy tensor of a classical field by that of a quantum field to take into account the quantum nature of matter.
In particular, it is not considered a fundamental equation but rather an approximation of a more fundamental theory within some domain of validity that is sufficiently remote from the Planck scale.
Some possible derivations of the SCE from a quantum gravity are critically discussed in Sect.~II.B of~\cite{flanagan-wald}.

Many equations in QFT are plagued by ultraviolet divergences and the SCE is no different, because (na\"ively) the expectation value of the stress-energy tensor involves the evaluation of singular quantum fields at a point.
As already discussed in~\cite{wald:back-reaction}, a renormalization of the stress-energy tensor needs to be coordinate independent and thus has to follow the principle of general covariance.

A procedure which satisfies these requirements is the point-splitting formalism by Christensen~\cite{christensen:1,christensen:2}.
Here one subtracts the singular part, given by the Hadamard parametrix (essentially a Lorentzian version of the heat kernel), from the two-point function of the state.
For this reason one restricts the class of states to Hadamard states, viz., states with two-point functions that match the Hadamard parametrix up to smooth contributions.
(Actually it is sufficient that the leading singularities of the two-point function match those of the Hadamard parametrix, as in the case of adiabatic states~\cite{junker-schrohe}.)

Due to the ambiguity in the regularization procedure (satisfying certain conditions or axioms~\cite{hollands-wald:wick,hollands-wald:time-order,hollands-wald:stress-energy}), a renormalization freedom arises.
Some terms renormalize the gravitational constant or the cosmological constant by a finite amount.
Thus it can be seen that the wide-spread belief that quantum matter automatically leads to a very large cosmological constant is not correct; instead the cosmological constant corresponds to a renormalization freedom.
Other terms contain higher than second order derivatives in the metric.
Such terms are likely to change the entire characteristic of the SCE, especially with respect to the classical Einstein equation, which is only of second order, and there seems to be to be no justifiable reason why these higher order terms should be discarded~\cite{wald:trace}.
We would like to mention, however, the method of order reduction by which higher order derivatives in the SCE can be replaced by lower order derivatives in a systematic way, see~\cite{flanagan-wald} and references therein.

Moreover, it was noted early on, see e.g.~\cite{christensen:1,wald:trace}, that the renormalized stress-energy tensor is not traceless, even if the classical action of the quantum field is conformally invariant.
In fact, in this case one obtains the famous \emph{trace (or conformal) anomaly}
\begin{equation*}
  \langle T^\mathrm{ren} \rangle = g^{\mu\nu} \langle T_{\mu\nu}^\mathrm{ren} \rangle = c_1 R^2 + c_2 R_{\mu\nu} R^{\mu\nu} + c_3 R_{\mu\nu\lambda\sigma} R^{\mu\nu\lambda\sigma} + c_4 \Box R + \text{renormalization freedom},
\end{equation*}
where the constants~$c_\bullet$ depend on spin of the (free) quantum field.
Notably, the right-hand side does not depend on the quantum state at all, and contains higher than second order derivatives in the metric.

In the following we shall specialize our discussion to the special case of a free scalar field on flat \emph{Friedmann--Lemaître--Robertson--Walker (FLRW)} spacetimes $M = I \times \RR^3$, $I \subset \RR$, with the metric (in conformal time~$\tau$ and with signature convention ${-}{+}{+}{+}$)
\begin{equation}\label{eq:flrw}
  g = a(\tau)^2 \bigl(-\dif\tau^2 + \dif\vec{x}^2\bigr),
\end{equation}
where $a(\tau) > 0$ is called the scale factor and $\dif\vec{x}^2$ denotes the Euclidean metric on~$\RR^3$.
This case already shows many features that distinguish it from QFT on the maximally symmetric Minkowski and de Sitter spacetime, or static spacetimes.
Additionally, these spacetimes are of importance in cosmology as they represent the observed homogeneous and isotropic structure of our universe at the scale of several megaparsec, as well as the observed flatness \cite{weinberg}.

Despite some mathematical problems in the pre-1990's literature on the SCE, by the beginning of the 1980's the approach has been developed to a stage where numerical solutions and cosmological applications of the SCE were in reach.
This situation was exploited by Anderson in a series of four papers~\cite{anderson:1,anderson:2,anderson:3,anderson:4} starting with the conformally coupled and massless scalar field following up prior work by Starobinski~\cite{starobinsky}.
Depending on the values of the aforementioned renormalization parameters, Anderson discovered a very rich behavior of the SCE ranging from big bang, big bounce to divergence of the scale factor to infinity in finite time.
The non-conformally coupled case was investigated by analytical and numerical methods by Suen~\cite{suen:stability1,suen:stability2}.
In his work he discusses an instability of Minkowski spacetime as a global solution to the SCE (in the sense of continuous dependence of the solution on its initial data).
Further analytical and numerical work on the stability of the SCE has been performed by H{\"a}nsel and Verch~\cite{hansel-phd}.

As shown by Fulling, Sweeny and Wald in~\cite{fulling-sweeny-wald} (with some improved results by Radzikowski~\cite{radzikowski,radzikowski-verch} and others), a state which is Hadamard on a time-slice around a Cauchy surface is Hadamard on the entire globally hyperbolic manifold containing this surface.
From the perspective of the SCE, this result gives an important hint for the well-posedness of the former equation ensuring that the stress-energy tensor will remain well-defined on the entire spacetime manifold.

The proper description of the stress-energy tensor in the mathematically rigorous framework paved the way for new cosmological investigations on the SCE.
Dappiaggi, Fredenhagen and Pinamonti~\cite{dappiaggi-fredenhagen-pinamonti} investigated the stability of the SCE on FLRW spacetimes using certain effective large-mass approximations of the quantum state.
As a major milestone in the mathematical theory of the SCE, Pinamonti proved the existence of solutions for the trace equation for short times in the conformally coupled case for certain states defined on past null infinities~\cite{pinamonti}.
This has been significantly extended by Pinamonti and one of us~\cite{pinamonti-siemssen} to full solutions of the SCE on FLRW spacetimes, including the energy constraint with initial values specified on a Cauchy surface, but still limited to the conformally coupled scalar field, and a particular choice of renormalization parameters and quantum state.
Other recent work related to the SCE include: \cite{nicolai}, where numerical solutions to the equations presented here are analyzed; \cite{paolo}, where yet another approach to solving the SCE is developed; \cite{juarez=aubry-miramontes-sudarsky}, where a simplified semiclassical problem is studied as an initial value problem; \cite{sanders} and \cite{juarez=aubry:static}, where solutions in static spacetimes are analyzed.

We remark that in this paper we deal with the SCE without a classical Klein--Gordon field.
However, as such a field can be important in inflationary cosmology (see e.g.~\cite{hack:book,mukhanov:book,weinberg}, we give remarks on the minor changes that are needed to include it without changing the results of our paper.

\subsection{Outline}

After this introduction, in the second section we first outline the quantization of the (real, free) scalar field in curved spacetimes with emphasis on flat FLRW spacetimes.
In particular, we present an initial value formulation for homogeneous isotropic states for the quantum scalar field in FLRW spacetimes \cite{luders-roberts}.
Then, we develop a point-splitting regularization specially adapted to FLRW spacetimes and show its equivalence to the conventional Hadamard point-splitting.

In the third section, we briefly discuss the renormalized stress-energy tensor for the quantized scalar field.
Since the trace and the energy component of the stress-energy tensor play an important role for the (semiclassical) Einstein equation in FLRW spacetimes, we present the corresponding expressions.
We also discuss the problem posed by higher derivatives present in the semiclassical theory, partially due to regularization and renormalization.

Next, in the fourth section, we show how to formulate and solve a dynamical system for the coincidence limit of the regularized two-point function and its derivatives.
This system shows various interesting physical and mathematical features:
First, it hides the higher derivatives present in the regularization, thereby circumventing one of the challenges faced when solving the SCE.
Next, it distinguishes a class of `generalized' vacuum states from other classes of states such as thermal states.
Finally, its evolution is given by a (generally unbounded) evolution operator which can be understood as acting between differently weighted sequence spaces.
As was pointed out to us after the completion of this work, the construction of the evolution operator can be seen as an application of Ovsyannikov's method \cite{ovsyannikov,friesen} of solving Cauchy problems in scales of Banach spaces.\footnote{Ovsyannikov's method is essentially an abstract generalization of the Cauchy--Kovalevskaya theorem. In applications to hyperbolic equations, for example, it makes it possible to treat solutions which are analytic in spatial directions but (different than in the Cauchy--Kovalevskaya approach) only continuously differentiable in the time direction.}
The fact that this evolution operator exists at all is intimately tied to the hyperbolicity of the Klein--Gordon operator, which expresses itself in our dynamical system through the nilpotence of a certain matrix.
Depending on the choice of the weights in the sequence spaces, the evolution can be shown to exist for all time (geometrically growing weights) or for a finite amount of time (factorially growing weights).

Finally, in the fifth section, we use the just developed dynamical system for the quantum state to solve the SCE.
In fact, we consider an abstract class of quasilinear equations which includes as a special case the SCE.
For this abstract equation we show existence and uniqueness of solutions, as well as continuous dependence on the initial values and parameters of the equation.
Existence is shown in finite time intervals with a priori bounds depending on the choice of initial values, in particular the initial values for the quantum state.
We analyze this equation for various possible choices of parameters in the case of the SCE in FLRW spacetimes.
In general, the resulting equation is of fourth order but in the special case of conformal coupling it can reduce to a second order equation.
We also remark on the instability of Minkowski spacetime stating that arbitrarily small perturbations in the initial conditions can lead to finite effects in the solution of the energy equation \cite{suen:stability1,suen:stability2} -- no such instability appears in our approach based on the trace equation, which we show to depend continuously on the initial data.
At the end, we explain a method of constructing physical initial data and give a short outlook on future research topics.
Let us state our main results, summarizing the contents of Thms.~\ref{thm:traced_sce} and~\ref{thm:initial_data}:
\begin{theorem*}
  There exists a non-empty set of self-consistent initial data for the SCE for flat FLRW spacetimes, and, given such self-consistent initial data, the SCE has locally a unique smooth solution which depends continuously on the initial data and parameters.
  Furthermore, the quantum state part of the solution is of Hadamard type.
  For a class of `vacuum-like' initial data for the state, the solution is maximal or even global.
\end{theorem*}

In the appendix we list several auxiliary results (concerning homogeneous distributions, weighted sequence spaces, combinatorial inequalities and Synge's world function) used throughout this work.

%------------------------------------------------------------------------------%
\section{Scalar field on flat FLRW}
\label{sec:scalar}

\subsection{Klein--Gordon equation}

Consider the (homogeneous) Klein--Gordon equation
\begin{equation}\label{eq:klein-gordon}
  K \phi \defn (\Box + \xi R + m^2) \phi = 0
\end{equation}
with mass~$m \geq 0$ and curvature coupling~$\xi$ ($\xi=0$ is called \emph{minimal coupling} and $\xi=\frac16$ \emph{conformal coupling}).
We define the d'Alembert operator as $\Box \defn -g^{\mu\nu} \nabla_{\!\mu} \nabla_{\!\nu}$, where $\nabla$ is the covariant derivative and we employ Einstein summation convention.

In conformal time~$\tau$, the \emph{flat} FLRW metric takes the form~\eqref{eq:flrw}.
The d'Alembertian for this metric is $\Box = a^{-3} (\partial_\tau^2 - a^{-1} a'' - \upDelta) a$,
where we denote derivatives with respect to conformal time by primes and $\upDelta$ is the Laplace operator on $\RR^3$.
Introducing the conformally rescaled field $\varphi$ and the potential $V$,
\begin{equation*}
  \varphi \defn a \phi,
  \quad
  V \defn (6\xi-1) \frac{a''}{a} + a^2 m^2,
\end{equation*}
the Klein--Gordon equation~\eqref{eq:klein-gordon} can thus be written as $(\partial_\tau^2 - \upDelta + V) \varphi = 0$, where we used $R = 6 a^{-3} a''$.
Further rewritten in Hamiltonian form, this equation becomes
\begin{equation}\label{eq:1st-order-KG}
  \partial_\tau \begin{pmatrix}
    \varphi \\ \pi
  \end{pmatrix}
  =
  \begin{pmatrix}
    0 & 1 \\ \upDelta - V & 0
  \end{pmatrix}
  \begin{pmatrix}
    \varphi \\ \pi
  \end{pmatrix},
  \quad
  \pi \defn \varphi'.
\end{equation}

\subsection{Quantization}

For the quantization of the (real, free) scalar field $\phi$ on a globally hyperbolic spacetime $(M,g)$, we follow the algebraic approach, see e.g.\ \cite{dimock,hack:book,khavkine-moretti}:
We generate a (non-commutative, unital) $*$-algebra~$\mathcal{A}$ by `smeared' \emph{quantum fields} $\hat\phi(f)$ for $f \in C^\infty_{\mathrm c}(M)$ satisfying
\begin{enumerate}
  \item linearity: $\hat\phi(\alpha f + \beta g) = \alpha \hat\phi(f) + \beta \hat\phi(g)$,
  \item hermiticity: $\hat\phi(f)^* = \hat\phi(\conj{f})$,
  \item Klein--Gordon equation: $\hat\phi(Kf) = 0$,
  \item canonical commutation relations (CCR): $[\hat\phi(f),\hat\phi(g)] = -\im \langle f \,|\, G^\mathrm{PJ} g \rangle \one$,
\end{enumerate}
where $f,g \in C^\infty_{\mathrm c}(M)$ and $\alpha,\beta \in \CC$.
Moreover, $\langle \cdot\,|\,\cdot \rangle$ is the canonical $L^2$ product on the spacetime, and $G^\mathrm{PJ}$ denotes the uniquely defined Pauli--Jordan propagator~\cite{bar-ginoux-pfaffle,derezinski-siemssen}, namely the difference of forward (retarded) and backward (advanced) propagator of the Klein--Gordon operator~$K$.

A \emph{state} on $\mathcal{A}$ is a linear functional $\omega : \mathcal{A} \to \CC$, which is
\begin{enumerate}
  \item normalized [$\omega(\one) = 1$] and
  \item positive [$\omega(a^* a) \geq 0$ for all $a \in \mathcal{A}$],
\end{enumerate}
The two-point function $\omega_2$ of $\omega$ is defined as $\omega_2(f, g) \defn \omega(\hat\phi(f)\hat\phi(g))$.
Due to the positivity of the state we have $\omega_2(\conj{f}, f) \geq 0$.
Meanwhile, the canonical commutation relations imply
\begin{equation*}
  \omega_2(f,g)-\omega_2(g,f) = -\im \langle f \,|\, G^\mathrm{PJ} g \rangle.
\end{equation*}

On Minkowski spacetime, and more generally on static spacetimes, one additionally imposes a spectrum condition (positive frequency condition) for the state, which distinguishes a vacuum state.
On generic spacetimes, no `natural' preferred state exists~\cite{fewster-verch} and (for free fields) the spectrum condition is replaced by a condition on the singular structure of the two-point function.
Typically one requires that the state is a \emph{Hadamard state}, viz., its two-point function satisfies the \emph{microlocal spectrum condition} -- a condition on the smooth wave-front set~\cite{radzikowski}.
In applications it is sometimes useful to relax the microlocal spectrum condition.
For example, on FLRW spacetimes one often considers the class of \emph{adiabatic states} which are obtained via a WKB-type approach~\cite{parker:1,junker-schrohe}.
These states satisfy a Sobolev version of the microlocal spectrum condition~\cite{junker-schrohe}.

\begin{remark}\label{rem:background-fields}
  Here and in the following we assume that the state $\omega$ has a vanishing one-point function $\omega(\hat\phi(f))=0$.
  Thus we do not distinguish between the two-point function and the connected two-point function.
  A non-vanishing one-point function $\phi^\mathrm{bg}(\tau,\vec{x}) \defn \omega(\hat\phi(\tau,\vec{x}))$ is interpreted as a classical Klein--Gordon `background' field.
  In the context of homogeneous and isotropic spacetimes, $\phi^\mathrm{bg}(\tau)$ does not depend on $\vec{x}$.
  Setting $\varphi^\mathrm{bg} \defn a\phi^\mathrm{bg}$ and $\pi^\mathrm{bg} = \partial_\tau \varphi^\mathrm{bg}$, we see that the dynamics of the additional two degrees of freedom introduced by the classical Klein--Gordon field is given by~\eqref{eq:1st-order-KG}, where the spatial Laplacian $\upDelta$ can be omitted.
\end{remark}

\subsection{Two-point functions}

Consider the two-point function $\omega_2$ of a homogeneous and isotropic state.
That is, it holds that
\begin{equation}\label{eq:omega2-homogeneous}
  \omega_2\bigl((\tau,\vec{x}),(\eta,\vec{y})\bigr) = \omega_2\bigl(\tau, \eta, r = \abs{\vec{x}-\vec{y}}\bigr).
\end{equation}

Since the antisymmetric part of the two-point function is fixed by the commutator `function' (viz., the Pauli--Jordan propagator $G^\mathrm{PJ}$), $\omega_2$ is completely determined by its symmetric part
\begin{equation}\label{eq:omega2-symm}
  \frac12 \bigl( \omega_2(\tau, \eta, r) + \omega_2(\eta, \tau, r) \bigr).
\end{equation}
Therefore, the Cauchy data at conformal time~$\tau$ of a homogeneous and isotropic state can be given by~\eqref{eq:omega2-symm} and its first time derivatives.

We define
\begin{equation}\label{eq:G-data}
  \mathcal{G}(\tau,r)
  \defn
  \begin{pmatrix}
    \mathcal{G}_{\varphi\varphi}(\tau,r) \\
    \mathcal{G}_{(\varphi\pi)}(\tau,r) \\
    \mathcal{G}_{\pi\pi}(\tau,r)
  \end{pmatrix}
  \defn
  \lim_{\eta \to \tau}
  \begin{pmatrix}
    \one \\
    \frac12 (\partial_\tau + \partial_\eta) \\
    \partial_\tau \partial_{\eta}
  \end{pmatrix}
  a(\tau) a(\eta) \omega_2(\tau,\eta,r),
\end{equation}
which represents the Cauchy data of the two-point function at~$\tau$.
Using Synge's rule (to pull derivatives into the coincidence limit for the time variables) and the Hamiltonian form of the Klein--Gordon equation~\eqref{eq:1st-order-KG}, one can then rewrite the equations of motion for the two-point function as
\begin{equation}\label{eq:G-dynamics}
  \partial_\tau \mathcal{G} = \begin{pmatrix}
    0 & 2 & 0 \\ \upDelta_r - V & 0 & 1 \\ 0 & 2(\upDelta_r - V)  & 0
  \end{pmatrix} \mathcal{G},
\end{equation}
where $\upDelta_r \defn r^{-2} \partial_r r^2 \partial_r$ is the (three dimensional) radial Laplacian.

It is sometimes convenient to perform calculations in momentum space.
We define the mode functions
\begin{equation}\label{eq:G-Fourier}
  \widehat{\mathcal{G}}(\tau,k)
  \defn
  \begin{pmatrix}
    \widehat{\mathcal{G}}_{\varphi\varphi}(\tau,k) \\
    \widehat{\mathcal{G}}_{(\varphi\pi)}(\tau,k) \\
    \widehat{\mathcal{G}}_{\pi\pi}(\tau,k)
  \end{pmatrix}
  \defn 4\uppi \int_0^\infty \mathcal{G}(\tau,r) \frac{\sin(k r)}{k r} r^2 \dif r
\end{equation}
with $k \in [0,\infty)$.
The mode functions are simply the Fourier transform of $\mathcal{G}(\tau,r=\abs{\vec{x}})$ in $\vec{x}$.
Indeed, using the convention $\hat{f}(\vec{k}) \defn \int_{\RR^3} f(\vec{x}) \e^{-\im\vec{k}\cdot\vec{x}} \dif\vec{x}$ for the Fourier transform, we have for radial functions (or distributions) $f(\vec{x}) = f(r = \abs{\vec{x}})$
\begin{equation*}
  \hat{f}(\vec{k}) = \hat{f}(k = \abs{\vec{k}}) = 4\uppi \int_0^\infty f(r) \frac{\sin(kr)}{kr} r^2 \dif r.
\end{equation*}

Conversely, the mode functions specify the two-point function by
\begin{equation}\label{eq:G-Fourier-inv}
  \mathcal{G}(\tau,r) = \frac{1}{2\uppi^2} \int_0^\infty \widehat{\mathcal{G}}(\tau,k) \frac{\sin(k r)}{k r} k^2 \dif k.
\end{equation}
It follows immediately from~\eqref{eq:G-dynamics} that the mode functions solve the dynamical equations
\begin{equation}\label{eq:G-data-fourier}
  \partial_\tau \widehat{\mathcal{G}} = \begin{pmatrix}
    0 & 2 & 0 \\ -(k^2 + V) & 0 & 1 \\ 0 & -2(k^2 + V)  & 0
  \end{pmatrix} \widehat{\mathcal{G}}.
\end{equation}

A straightforward computation shows that $\widehat{\mathcal{J}} \defn \widehat{\mathcal{G}}_{\varphi\varphi} \widehat{\mathcal{G}}_{\pi\pi} - \widehat{\mathcal{G}}_{(\varphi\pi)}^{\;2}$ is a conserved quantity.
This reduces the degrees of freedom in~\eqref{eq:G-data} (and~\eqref{eq:G-data-fourier}) to two.
Moreover, it follows from the positivity of the state that $\widehat{\mathcal{J}} \geq \frac14$ with equality for pure states~\cite{luders-roberts}.

\subsection{A point-splitting regularization}

Many expressions of physical relevance in QFT involve expectation values of products of quantum fields at a point.
Na\"ively, such expressions are ill-defined because of the distributional nature of quantum fields.
By restricting to a class of states which share a common singular structure, we can define a renormalization scheme that allows to make sense of these expressions.

Below we develop a regularization scheme for two-point functions on FLRW spacetimes (or, in fact, for the kernels $\mathcal{G}(\tau,r)$) which carries features of both the \emph{Hadamard point-splitting} method~\cite{christensen:1,christensen:2} and the WKB-type approach named \emph{adiabatic regularization} \cite{parker:1,anderson-parker,bunch:adiabatic}.
More concretely, the aim of this subsection is to define kernels $\mathcal{H}_n(\tau,r)$ such that $\mathcal{G}(\tau,r) - \mathcal{H}_n(\tau,r)$ is sufficiently regular in the limit $r \to 0$.

Let $\mu$ be an arbitrary (but fixed) length scale.
On~$\RR$, define the piecewise function
\begin{equation*}
  k^z_+ \defn \begin{dcases*}
    k^z & if $k > 0$, \\
    0 & if $k \leq 0$.
  \end{dcases*}
\end{equation*}
For $z \in \CC \setminus \{-1,-2,\dotsc\}$ this defines a distribution (by analytic continuation, cf.\ Chap. III.2 of~\cite{hormander:1}).
It can be extended to all $z \in \CC$ by defining for $n \in \NN$,
\begin{equation*}
  \langle k_+^{-n}, f \rangle \defn \frac{1}{(n-1)!} \biggl( -\int_0^\infty \log(\mu k) f^{(n)}(k) \dif k + f^{(n-1)}(0) \sum_{j=1}^{n-1} \frac{1}{j} \biggr),
  \quad
  f \in C^\infty_{\mathrm c}(\RR).
\end{equation*}
We also define for $r \geq 0$ and $z \in \CC$ the distributions
\begin{equation*}
  h_z(r) \defn \frac{\e^{\im z \uppi/2}}{2\uppi^2} \frac{r^{z-2}}{\Gamma(z)} \biggl( \log\Bigl(\frac{r}{\mu}\Bigr) - \psi(z) \biggr),
\end{equation*}
where $\psi$ denotes the Digamma function.
Note that $h_{-2} = -{(\uppi^2r^4)}^{-1}$ and $h_{0} = 1/{(2\uppi^2r^2)}^{-1}$.
Further details about these functions can be found in the appendix (Sect.~\ref{sec:homogeneous}), although without the constant length scale~$\mu$.
Tacitly, the so defined distributions and the resulting regularization scheme depend on $\mu$.

We make the Ansatz (equivalently either in position or momentum space with relations analogous to~\eqref{eq:G-Fourier} and~\eqref{eq:G-Fourier-inv})
\begin{subequations}\label{eq:H-expansion}\begin{align}
  \mathcal{H}_n(\tau,r) &\defn
  \begin{pmatrix*}
    \mathcal{H}_{\varphi\varphi,n}(\tau,r) \\
    \mathcal{H}_{(\varphi\pi),n}(\tau,r) \\
    \mathcal{H}_{\pi\pi,n}(\tau,r)
  \end{pmatrix*}
  =
  \begin{pmatrix}
    0 \\ 0 \\ \gamma_{-1}(\tau)
  \end{pmatrix}
  h_{-2}(r)
  +
  \sum_{j=0}^n
  \begin{pmatrix}
    \alpha_j(\tau) \\ \beta_j(\tau) \\ \gamma_j(\tau)
  \end{pmatrix}
  h_{2j}(r), \\
  \widehat{\mathcal{H}}_n(\tau,k) &\defn
  \begin{pmatrix*}
    \widehat{\mathcal{H}}_{\varphi\varphi,n}(\tau,k) \\
    \widehat{\mathcal{H}}_{(\varphi\pi),n}(\tau,k) \\
    \widehat{\mathcal{H}}_{\pi\pi,n}(\tau,k)
  \end{pmatrix*}
  =
  \begin{pmatrix}
    0 \\ 0 \\ \gamma_{-1}(\tau)
  \end{pmatrix}
  k_+^1
  +
  \sum_{j=0}^n
  \begin{pmatrix}
    \alpha_j(\tau) \\ \beta_j(\tau) \\ \gamma_j(\tau)
  \end{pmatrix}
  k_+^{-(2j+1)},
\end{align}\end{subequations}
where we fix the lowest order parameters to
\begin{equation}\label{eq:H-init-coeffs}
  \alpha_0 = \frac12,
  \quad
  \beta_0 = 0,
  \quad
  \gamma_{-1} = \frac12.
\end{equation}

We wish to find coefficients $\alpha_\bullet$, $\beta_\bullet$ and $\gamma_\bullet$ such that the two constraints
\begin{align}
  \label{eq:H-dynamics}
  \partial_\tau \mathcal{H}_n
  -
  \begin{pmatrix}
    0 & 2 & 0 \\ \upDelta_r - V & 0 & 1 \\ 0 & 2(\upDelta_r - V)  & 0
  \end{pmatrix}
  \mathcal{H}_n
  &=
  \mathcal{O}(r^{2(n-1)})
  \\
  \label{eq:H-pure}
  \widehat{\mathcal{H}}_{\varphi\varphi,n} \widehat{\mathcal{H}}_{\pi\pi,n} - \widehat{\mathcal{H}}_{(\varphi\pi),n}^{\,2} &= \frac{1}{4} + \mathcal{O}(k^{-2(n+1)})
\end{align}
are satisfied.
That is, $\mathcal{H}_n$ satisfies the Klein--Gordon equation and has the properties of the two-point function of a pure state up to any desired order.

Using the homogeneity property~\eqref{eq:homog_h} and~\eqref{eq:H-init-coeffs}, we find
\begin{align*}\MoveEqLeft
  \partial_\tau \mathcal{H}_n
  -
  \begin{pmatrix}
    0 & 2 & 0 \\ \upDelta_r - V & 0 & 1 \\ 0 & 2(\upDelta_r - V)  & 0
  \end{pmatrix}
  \mathcal{H}_n \\
  &=
  \sum_{j=0}^{n-1}
  \begin{pmatrix}
    \alpha_j' - 2 \beta_j \\
    \beta_j' + \alpha_{j+1} + V \alpha_j - \gamma_j \\
    \gamma_j' + 2 \beta_{j+1} + 2 V \beta_j
  \end{pmatrix}
  h_{2j}
  +
  \begin{pmatrix}
    \alpha_n' - 2 \beta_n, \\
    \beta_n' + V \alpha_n - \gamma_n, \\
    \gamma_n' + 2 V \beta_n
  \end{pmatrix}
  h_{2n}.
\end{align*}
Consequently, the coefficients must satisfy the equations
\begin{subequations}\label{eq:coeffs-diffeq}\begin{align}
  \alpha_j' &= 2 \beta_j, \label{eq:coeffs-diffeq-alpha} \\
  \beta_j' &= -\alpha_{j+1} - V \alpha_j + \gamma_j, \label{eq:coeffs-diffeq-beta} \\
  \gamma_j' &= -2 \beta_{j+1} - 2 V \beta_j. \label{eq:coeffs-diffeq-gamma}
\end{align}\end{subequations}
Moreover, the left-hand side of~\eqref{eq:H-pure} evaluates to
\begin{align*}\MoveEqLeft
  \widehat{\mathcal{H}}_{\varphi\varphi,n} \widehat{\mathcal{H}}_{\pi\pi,n} - \widehat{\mathcal{H}}_{(\varphi\pi),n}^{\,2} \\
  &= \alpha_0 \gamma_{-1} + \sum_{j=1}^n \Bigl( \alpha_0 \gamma_j + \alpha_{j+1} \gamma_{-1} - \beta_0 \beta_j + \sum_{i=1}^j (\alpha_i \gamma_{j-i} - \beta_i \beta_{j-i}) \Bigr) k_+^{-2j} + \mathcal{O}\bigl(k_+^{-2(n+1)}\bigr) \\
  &= \frac14 + \frac12 \sum_{l=1}^n \Bigl( \alpha_{j+1} + \gamma_j + 2 \sum_{i=1}^j (\alpha_i \gamma_{j-i} - \beta_i \beta_{j-i}) \Bigr) k_+^{-2j} + \mathcal{O}\bigl(k_+^{-2(n+1)}\bigr).
\end{align*}
Therefore, the coefficients must additionally satisfy the constraint
\begin{equation}\label{eq:coeffs-pure}
  \alpha_{j+1} + \gamma_j = -2 \sum_{i=1}^j (\alpha_i \gamma_{j-i} - \beta_i \beta_{j-i}).
\end{equation}

It is remarkable that (with this constraint) the differential equations~\eqref{eq:coeffs-diffeq} can be solved recursively without solving any integrals:
\begin{proposition}
  The differential equations~\eqref{eq:coeffs-diffeq} with initial values~\eqref{eq:H-init-coeffs} and constraint~\eqref{eq:coeffs-pure} are solved by the recurrence relations
  \begin{subequations}\label{eq:coeffs-recur}\begin{align}
    \alpha_{j+1} &= -\frac12 (V \alpha_j + \beta_j') - \sum_{i=1}^j (\alpha_i \gamma_{j-i} - \beta_i \beta_{j-i}), \label{eq:coeffs-recur-alpha} \\
    \beta_{j+1} &= -\frac14 V' \alpha_j - \frac14 \beta_j'' - V \beta_j, \label{eq:coeffs-recur-beta} \\
    \gamma_j &= \frac12 (V \alpha_j + \beta_j') - \sum_{i=1}^j (\alpha_i \gamma_{j-i} - \beta_i \beta_{j-i}). \label{eq:coeffs-recur-gamma}
  \end{align}\end{subequations}
\end{proposition}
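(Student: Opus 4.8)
The plan is to treat the recurrence relations~\eqref{eq:coeffs-recur}, together with the initial values~\eqref{eq:H-init-coeffs}, as the \emph{definition} of the sequences $(\alpha_j)$, $(\beta_j)$, $(\gamma_j)$: the recursion is well posed because \eqref{eq:coeffs-recur-beta} produces $\beta_{j+1}$, then \eqref{eq:coeffs-recur-gamma} produces $\gamma_j$, then \eqref{eq:coeffs-recur-alpha} produces $\alpha_{j+1}$, each from quantities of index at most $j$. What then has to be checked is that these sequences satisfy the original system~\eqref{eq:coeffs-diffeq} and the constraint~\eqref{eq:coeffs-pure}. Two of the four required identities come for free: adding \eqref{eq:coeffs-recur-alpha} and \eqref{eq:coeffs-recur-gamma} yields $\alpha_{j+1}+\gamma_j=-2\sum_{i=1}^j(\alpha_i\gamma_{j-i}-\beta_i\beta_{j-i})$, which is precisely~\eqref{eq:coeffs-pure}, while subtracting them yields $\gamma_j-\alpha_{j+1}=V\alpha_j+\beta_j'$, which rearranges to~\eqref{eq:coeffs-diffeq-beta}; both hold for every $j\ge0$ with no induction. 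One also records here the low-order values $\alpha_0=\tfrac12$, $\beta_0=0$, $\gamma_0=\tfrac14V$ and $\alpha_1=-\tfrac14V$ obtained from~\eqref{eq:H-init-coeffs}, which are needed below.

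It then remains to establish \eqref{eq:coeffs-diffeq-alpha}, that $\alpha_j'=2\beta_j$, and \eqref{eq:coeffs-diffeq-gamma}, that $\gamma_j'=-2\beta_{j+1}-2V\beta_j$, which I would prove jointly by strong induction on $j$, the base case $j=0$ being a direct check from $\alpha_0=\tfrac12$, $\beta_0=0$ and $\beta_1=-\tfrac18V'$. Abbreviating $S_m\defn\sum_{i=1}^m(\alpha_i\gamma_{m-i}-\beta_i\beta_{m-i})$, so that \eqref{eq:coeffs-recur-alpha} and \eqref{eq:coeffs-recur-gamma} read $\alpha_{m+1}=-\tfrac12(V\alpha_m+\beta_m')-S_m$ and $\gamma_m=\tfrac12(V\alpha_m+\beta_m')-S_m$, one sees that, after differentiating the relevant line of~\eqref{eq:coeffs-recur}, substituting~\eqref{eq:coeffs-recur-beta} for $\beta_{m+1}$ and using the inductively available $\alpha_m'=2\beta_m$, both inductive claims reduce to the single identity \[ S_m' = V\beta_m. \]

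To prove $S_m'=V\beta_m$ I would differentiate the convolution sum term by term and substitute, in each factor, the corresponding equation of motion: the inductive hypotheses~\eqref{eq:coeffs-diffeq-alpha} and~\eqref{eq:coeffs-diffeq-gamma} for the $\alpha_i'$ and $\gamma_{m-i}'$ that appear (these involve only indices $\le m$, the index-$m$ case $\alpha_m'=2\beta_m$ being the one proved first within the step), and the already-established~\eqref{eq:coeffs-diffeq-beta} for the $\beta_i'$ and $\beta_{m-i}'$. Collecting terms, the result regroups into three reindexed convolution sums that telescope once $\beta_0=0$ is used to discard the boundary terms: the $\gamma\beta$-sum collapses to $\beta_m\gamma_0$, the $\alpha\beta$-sum to $-\alpha_1\beta_m$, and the $V\alpha\beta$-sum to $V\alpha_0\beta_m$; inserting $\alpha_0=\tfrac12$, $\alpha_1=-\tfrac14V$, $\gamma_0=\tfrac14V$ gives $S_m'=(\tfrac14V+\tfrac14V+\tfrac12V)\beta_m=V\beta_m$. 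Granting this, differentiating~\eqref{eq:coeffs-recur-alpha} gives $\alpha_{m+1}'=-\tfrac12(V'\alpha_m+V\alpha_m'+\beta_m'')-V\beta_m$, which equals $2\beta_{m+1}$ after substituting $\alpha_m'=2\beta_m$ and~\eqref{eq:coeffs-recur-beta}; likewise differentiating~\eqref{eq:coeffs-recur-gamma} gives $\gamma_m'=\tfrac12(V'\alpha_m+V\alpha_m'+\beta_m'')-V\beta_m=-2\beta_{m+1}-2V\beta_m$. This closes the induction, and with it the proof.

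I expect the only real obstacle to be the bookkeeping in the telescoping identity $S_m'=V\beta_m$: one must track the index ranges carefully when shifting the three partial sums so that the cancellation is exact, and feed the correct equation of motion to the correct factor at the correct index so that the inductive hypothesis is never over-applied. Everything else is routine manipulation — and, as the proposition emphasizes, not a single integration is performed along the way.
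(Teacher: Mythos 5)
Your argument is correct, and its computational core --- the telescoping identity $\partial_\tau \sum_{i=1}^{j}(\alpha_i\gamma_{j-i}-\beta_i\beta_{j-i}) = V\beta_j$, obtained by substituting the equations of motion into the differentiated convolution and using $\beta_0=0$ together with $\gamma_0-\alpha_1=\tfrac12 V$ --- is exactly the computation that carries the paper's proof. What you do differently is reverse the logical direction. The paper starts from the system~\eqref{eq:coeffs-diffeq} plus the constraint~\eqref{eq:coeffs-pure} and \emph{derives} the recurrences by elimination (adding the derivative of~\eqref{eq:coeffs-diffeq-beta} to~\eqref{eq:coeffs-diffeq-gamma} to isolate $\beta_{j+1}$, then combining~\eqref{eq:coeffs-pure} with~\eqref{eq:coeffs-diffeq-beta} to isolate $\alpha_{j+1}$ and $\gamma_j$), and then separately checks that the constraint is consistent with the flow; this establishes that the recurrences are forced, but leaves the sufficiency direction somewhat implicit. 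You instead take~\eqref{eq:coeffs-recur} as the definition of the sequences and verify by strong induction that~\eqref{eq:coeffs-diffeq} and~\eqref{eq:coeffs-pure} hold, which is the more literal reading of ``are solved by'' and makes explicit both the well-posedness of the recursion and the order in which the identities become available within each inductive step (in particular your point that $\alpha_m'=2\beta_m$ must be secured before $\gamma_m'$, since the term $\alpha_m'\gamma_0$ occurs in the differentiated sum at index $m$, is exactly the right caution). Your observation that~\eqref{eq:coeffs-pure} and~\eqref{eq:coeffs-diffeq-beta} are simply the sum and difference of~\eqref{eq:coeffs-recur-alpha} and~\eqref{eq:coeffs-recur-gamma}, needing no induction, matches the paper. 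In short: same key lemma and same algebra, but your organization yields a self-contained verification of sufficiency, whereas the paper's yields the derivation (necessity) plus a conservation-of-constraint argument; as a minor bonus, your version of the telescoping step has the signs right, whereas the paper's displayed intermediate line $2\sum_{i=1}^{j}(\alpha_i\beta_{j-i+1}-\alpha_{i+1}\beta_{j-i})$ carries a sign typo that its own final answer $2(\gamma_0-\alpha_1)\beta_j=V\beta_j$ silently corrects.
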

\begin{proof}
  To find~\eqref{eq:coeffs-recur-beta}, we eliminate $\gamma_j'$ from~\eqref{eq:coeffs-diffeq-gamma} by adding the derivative of~\eqref{eq:coeffs-diffeq-beta} and using~\eqref{eq:coeffs-diffeq-alpha}.
  The other two equations obtained from~\eqref{eq:coeffs-pure} and $-\alpha_{j+1} + \gamma_j = V \alpha_j + \beta_j'$, which follows from~\eqref{eq:coeffs-diffeq-beta}, thus yield~\eqref{eq:coeffs-recur-alpha} and~\eqref{eq:coeffs-recur-gamma}.

  To see that~\eqref{eq:coeffs-pure} is consistent with~\eqref{eq:coeffs-diffeq}, we first subtract~\eqref{eq:coeffs-diffeq-alpha} and~\eqref{eq:coeffs-diffeq-gamma} to obtain $\alpha_{j+1}' + \gamma_j' = -2 V \beta_j$.
  Then we calculate
  \begin{align*}
    \partial_\tau \sum_{i=1}^j (\alpha_i \gamma_{j-i} - \beta_i \beta_{j-i})
    &= \sum_{i=1}^j \bigl( \alpha_i' \gamma_{j-i} + \alpha_i \gamma_{j-i}' - 2 \beta_i' \beta_{j-i} \bigr) \\
    &= 2 \sum_{i=1}^j \bigl( \beta_i \gamma_{j-i} - \alpha_i (\beta_{j-i+1} + V \beta_{j-i}) - (\gamma_i - \alpha_{i+1} - V \alpha_i) \beta_{j-i} \bigr) \\
    &= 2 \sum_{i=1}^j (\alpha_i \beta_{j-i+1} - \alpha_{i+1} \beta_{j-i}) + 2 \gamma_0 \beta_j %\\&
    = 2 (\gamma_0 - \alpha_1) \beta_j = V \beta_j,
  \end{align*}
  where, in the last step, we used that~\eqref{eq:coeffs-diffeq-beta} for $j=0$ implies $\gamma_0 - \alpha_1 = \frac12 V$.
\end{proof}

\subsection{Comparison to Hadamard point-splitting}

The regularization procedure proposed in the previous subsection is equivalent to the typically considered Hadamard point-splitting method, in which a truncation of the Hadamard parametrix is subtracted~\cite{christensen:1,christensen:2}.

The truncated Hadamard parametrix (at order~$n$ and scale~$\lambda$) for the Klein--Gordon equation~\eqref{eq:klein-gordon} is defined as~\cite{fewster-smith,moretti:stress-energy}
% NOTE: 8 instead of 4 because we use Synge's definition of the world function, i.e., HALF the squared geodesic distance
\begin{align*}
  H_n(x,y) \defn \lim_{\varepsilon \to 0^+} \frac{1}{8\uppi^2} \biggl( \frac{\Delta(x,y)^{\frac12}}{\sigma_\varepsilon(x,y)} + \sum_{j=0}^{n-1} v_j(x,y) \sigma(x,y)^j \log\frac{\sigma_\varepsilon(x,y)}{\lambda^2} \biggr),
\end{align*}
where $x,y$ are in a geodesically convex neighbourhood,
\begin{equation*}
  \sigma_{\varepsilon}(x,y) \defn \sigma(x,y) + \im \varepsilon\, \bigl(t(x) - t(y)\bigr) + \frac12 \varepsilon^2,
  \quad
  \varepsilon > 0,
\end{equation*}
is the regularized Synge world function (i.e., half the signed squared geodesic distance) for a time-function~$t$, $\Delta(x,y)$ is the van Vleck--Morette determinant, $v_j(x,y)$ are smooth coefficient functions satisfying the recurrence relations~\cite{moretti:stress-energy,fewster-smith}
\begin{subequations}\label{eq:hadamard-recurrence}\begin{align}
  (K \otimes \one) \Delta^\frac12 &= \bigl( -(\Box \otimes \one) \sigma + \sigma (\Box \otimes \one) - 2 \bigr) v_0, \\
  (K \otimes \one) v_{j-1} &= \bigl( -(\Box \otimes \one) \sigma + \sigma (\Box \otimes \one) + 2j - 2 \bigr) j v_j
\end{align}\end{subequations}
for $j \in \NN$.
These relations are obtained by demanding that the Hadamard parametrix satisfies the Klein--Gordon equation up to an error of order $\sigma^{n+1}$.

Note that the coefficient functions~$v_j$ are entirely determined by the local geometry.
For example, at lowest order $v_1$ is given by
\begin{equation}\label{eq:v1}
  [v_1] = \frac{1}{8} m^4 + \frac{6\xi-1}{24} m^2 R + \frac{(6\xi-1)^2}{288} R^2 - \frac{1}{720} R_{\mu\nu} R^{\mu\nu} + \frac{1}{720} R_{\mu\nu\rho\sigma} R^{\mu\nu\rho\sigma} + \frac{5\xi-1}{120} \Box R.
\end{equation}

Since a FLRW spacetime is spatially isotropic and homogeneous, and the Hadamard parametrix is constructed from local geometric quantities, it inherits these properties so that
\begin{equation*}
  H_n\bigl((\tau,\vec{x}), (\eta,\vec{y})\bigr) = H_n\bigl(\tau, \eta, r=\abs{\vec{x}-\vec{y}}\bigr),
\end{equation*}
analogously to~\eqref{eq:omega2-homogeneous}.
(Naturally, the same holds true for $\sigma$, $\Delta^\frac12$ and $v_j$.)
As before for the two-point function, this implies that the Hadamard parametrix can be represented by
\begin{equation*}
  \tilde{\mathcal{H}}_n(\tau,r)
  \defn
  \begin{pmatrix}
    \tilde{\mathcal{H}}_{\varphi\varphi,n}(\tau,r) \\
    \tilde{\mathcal{H}}_{(\varphi\pi),n}(\tau,r) \\
    \tilde{\mathcal{H}}_{\pi\pi,n}(\tau,r)
  \end{pmatrix}
  \defn
  \lim_{\eta \to \tau}
  \begin{pmatrix}
    \one \\
    \frac12 (\partial_\tau + \partial_\eta) \\
    \partial_\tau \partial_\eta
  \end{pmatrix}
  a(\tau) a(\eta) H_n(\tau,\eta,r).
\end{equation*}

It follows from results in~\cite{eltzner-gottschalk}, that the most singular terms of~$\tilde{\mathcal{H}}_n$ and~$\mathcal{H}_n$ agree, thus justifying the choice~\eqref{eq:H-init-coeffs} of the lowest order coefficients.
(The results of~\cite{eltzner-gottschalk} are stated for cosmological time but a translation to conformal time is not difficult.)
Equivalence of $\tilde{\mathcal{H}}$ and $\mathcal{H}$ to arbitrary order then follows from the fact that both are approximate solutions of the Klein--Gordon equation with the same singular terms (of the form $r^{-j}$ and $r^j \log r$).

For $n \geq 2$, one can compute the expansions
\begin{align*}
  \begin{split}
    \tilde{\mathcal{H}}_{\varphi\varphi,n}(\tau,r) &= \frac{1}{4\uppi^2 r^2} + \frac{1}{48\uppi^2} \frac{a''(\tau)}{a(\tau)} + \mleft( \frac{V(\tau)}{16\uppi^2}  + \frac{3V(\tau)^2+V''(\tau)}{384\uppi^2} r^2 \mright) \log\mleft(\frac{r^2 a(\tau)^2}{2\lambda^2}\mright) \\&\quad + \frac{r^2}{5760\uppi^2} \mleft( 30 V(\tau) \frac{a'(\tau)}{a(\tau)} + 11 \Bigl(\partial_\tau \frac{a'(\tau)}{a(\tau)}\Bigr)^2 - 2 \frac{a''(\tau)^2}{a(\tau)^2} + 12 \frac{a'(\tau)}{a(\tau)} \partial_\tau \frac{a''(\tau)}{a(\tau)} \mright) \\&\quad + \mathcal{O}(r^4),
  \end{split} \\
  \begin{split}
    \tilde{\mathcal{H}}_{(\varphi\pi),n}(\tau,r) &= \frac{V'(\tau)}{32\uppi^2} \log\mleft(\frac{r^2 a(\tau)^2}{2\lambda^2}\mright) + \frac{1}{96\uppi^2} \mleft( 6 V(\tau) \frac{a'(\tau)}{a(\tau)} + \partial_\tau \frac{a''(\tau)}{a(\tau)} \mright) + \mathcal{O}(r^2),
  \end{split} \\
  \begin{split}
    \tilde{\mathcal{H}}_{\pi\pi,n}(\tau,r) &= -\frac{1}{2\uppi^2 r^4} + \frac{V(\tau)}{8\uppi^2 r^2} + \frac{V(\tau)^2+V''(\tau)}{64\uppi^2} \log\mleft(\frac{r^2 a(\tau)^2}{2\lambda^2}\mright) \\&\quad + \frac{1}{192\uppi^2} \mleft( 3 V(\tau)^2 - 6 V(\tau) \frac{a'(\tau)^2}{a(\tau)^2} + 4 V(\tau) \frac{a''(\tau)}{a(\tau)} + 12 V'(\tau) \frac{a'(\tau)}{a(\tau)} + V''(\tau) \mright) \\&\quad + \frac{1}{960\uppi^2} \mleft( \Bigl(\partial_\tau \frac{a'(\tau)}{a(\tau)}\Bigr)^2 + 2 \frac{a''(\tau)^2}{a(\tau)^2} + 4 \partial_\tau^2 \frac{a''(\tau)}{a(\tau)} \mright) + \mathcal{O}(r^2).
  \end{split}
\end{align*}
To obtain this result, one uses a covariant expansion of the coefficient functions $\Delta^\frac12$ and $v_j$, see e.g.~\cite{christensen:1,christensen:2,decanini-folacci}, together with an expansion of Synge's world function, see Sect.~\ref{sec:expansion}.

The formulas above suggest the convention
\begin{equation}\label{eq:mu-lambda-convention}
  \mu^2 = 2\e^{2\gamma-2} \lambda^2 \lambda_0^2,
\end{equation}
where $\gamma$ is the Euler--Mascheroni constant and $\lambda_0 > 0$ is an arbitrary dimensionless constant.
With this convention, we find the differences ($n \geq 2$)
\begin{subequations}\label{eq:reg-differences}\begin{align}
  \tilde{\mathcal{H}}_{\varphi\varphi,n} - \mathcal{H}_{\varphi\varphi,n} \Bigl|_{r=0} &= \frac{V}{8\uppi^2} \log(a \lambda_0) + \frac{1}{48\uppi^2} \frac{a''}{a}, \\
  \begin{split}
    \tilde{\mathcal{H}}_{(\varphi\pi),n} - \mathcal{H}_{(\varphi\pi),n} \Bigl|_{r=0} &= \frac{V'}{16\uppi^2} \log(a \lambda_0) + \frac{1}{96\uppi^2} \mleft( 6 V \frac{a'}{a} + \partial_\tau \frac{a''}{a} \mright),
  \end{split} \\
  \begin{split}
    \tilde{\mathcal{H}}_{\pi\pi,n} - \mathcal{H}_{\pi\pi,n} \Bigl|_{r=0} &= \frac{V^2+V''}{32\uppi^2} \log(a \lambda_0) + \frac{1}{960\uppi^2} \mleft( \Bigl(\partial_\tau \frac{a'}{a}\Bigr)^2 + 2 \frac{a^{\prime\prime\,2}}{a^2} + 4 \partial_\tau^2 \frac{a''}{a} \mright) \\&\quad + \frac{1}{192\uppi^2} \mleft( 3 V^2 - 6 V \frac{a^{\prime\,2}}{a^2} + 4 V \frac{a''}{a} + 12 V' \frac{a'}{a} + V'' \mright),
  \end{split} \\
  \begin{split}
    \upDelta_r (\tilde{\mathcal{H}}_{\varphi\varphi,n} - \mathcal{H}_{\varphi\varphi,n}) \Bigl|_{r=0} &= \frac{3V^2+V''}{32\uppi^2} \mleft( \frac{5}{6} + \log(a \lambda_0) \mright) + \frac{V}{32\uppi^2} \frac{a^{\prime\,2}}{a^2} \\&\quad + \frac{1}{960\uppi^2} \mleft( 11 \Bigl(\partial_\tau \frac{a'}{a}\Bigr)^2 - 2 \frac{a^{\prime\prime\,2}}{a^2} + 12 \frac{a'}{a} \partial_\tau \frac{a''}{a} \mright).
  \end{split}
\end{align}\end{subequations}
We will use these differences in the following section.

%------------------------------------------------------------------------------%
\section{Semiclassical Einstein equation}
\label{sec:sce}

On FLRW spacetimes, it is sufficient to look at the traced SCE
\begin{equation}\label{eq:traced-SCE}
  -R = \kappa {\langle T^\mathrm{ren} \rangle}_\omega,
\end{equation}
where $\langle T^\mathrm{ren} \rangle_\omega \defn g^{\mu\nu} \langle T_{\mu\nu}^\mathrm{ren} \rangle_\omega$ is the trace of the quantum stress-energy tensor, and at the energy-component (viz., the $00$- or $tt$-component)
\begin{equation}\label{eq:constraint-SCE}
  G_{00} = \kappa \langle T_{00}^\mathrm{ren} \rangle_\omega
\end{equation}
The latter equation can be regarded as a constraint equation, which, if imposed at a fixed time, holds everywhere because of the covariant conservation $\nabla^\mu \langle T_{\mu\nu}^\mathrm{ren} \rangle_\omega = 0$ of the stress-energy tensor~\cite{moretti:stress-energy,hollands-wald:stress-energy}, see also Sect.~\ref{sub:energy}.

We will focus most of our attention on the traced equation~\eqref{eq:traced-SCE}, which completely determines the dynamics of the geometry, given by the scale factor.
However, also the energy equation plays an important role; it is relevant when imposing consistent initial conditions.

\subsection{Renormalized stress-energy tensor}

The (classical) stress-energy tensor for a (real) free scalar field is
\begin{equation}\label{eq:stress-energy-classical}\begin{split}
  T_{\mu\nu} &= (1-2\xi) (\nabla_{\!\mu} \phi) (\nabla_{\!\nu} \phi) - \frac12 (1-4\xi) g_{\mu\nu} (\nabla^{\sigma} \phi) (\nabla_{\!\sigma} \phi) - \frac12 g_{\mu\nu} m^2 \phi^2 \\&\quad + \xi \bigr(G_{\mu\nu} \phi^2 - 2 \phi \nabla_{\!\mu} \nabla_{\!\nu} \phi - 2 g_{\mu\nu} \phi \Box \phi \bigr).
\end{split}\end{equation}
To quantize this expression, we replace products of (derivatives of) the classical fields by their renormalized quantum counterparts.
That is, using Hadamard point-splitting, we have for the expectation value of the stress-energy tensor in a state~$\omega$ (cf.\ \cite{moretti:stress-energy,siemssen:phd,hollands-wald:stress-energy}):
\begin{equation}\label{eq:stress-energy-quantum}\begin{split}
  {\langle T_{\mu\nu}^\mathrm{ren} \rangle}_\omega &= (1-2\xi) [(\nabla_{\!\mu} \otimes \nabla_{\!\nu}) \omega_2^\mathrm{reg}] - \frac12 (1-4\xi) g_{\mu\nu} [(\nabla^\sigma \otimes \nabla_{\!\sigma}) \omega_2^\mathrm{reg}] - \frac12 g_{\mu\nu} m^2 [\omega_2^\mathrm{reg}] \\&\quad + \xi \bigl( G_{\mu\nu} [\omega_2^\mathrm{reg}] - 2 [(\one \otimes \nabla_{\!\mu} \nabla_{\!\nu}) \omega_2^\mathrm{reg}] - 2 g_{\mu\nu} [(\one \otimes \Box) \omega_2^\mathrm{reg}] \bigr) \\&\quad + \frac{1}{4\uppi^2} g_{\mu\nu} [v_1] + c_1 m^4 g_{\mu\nu} + c_2 m^2 G_{\mu\nu} + c_3 I_{\mu\nu} + c_4 J_{\mu\nu},
\end{split}\end{equation}
where, for fixed but arbitrary $n \geq 1$, we set $\omega_2^\mathrm{reg} \defn \omega_2 - H_n$, $[\,\cdot\,]$ denotes the coincidence limit (e.g., $[v_1](x) = v_1(x,x)$ is the coincidence limit of the Hadamard coefficient~$v_1$) with implicit parallel transport, $c_\bullet$ are renormalization parameters, and $I_{\mu\nu}$, $J_{\mu\nu}$ are the two fourth order conserved curvature tensors:
\begin{align*}
  I_{\mu\nu} &\defn 2 R R_{\mu\nu} - 2 \nabla_{\!\mu} \nabla_{\!\nu} R - \frac12 g_{\mu\nu} (R^2 + 4 \Box R), \\
  J_{\mu\nu} &\defn 2 R^{\rho\sigma} R_{\rho \mu \sigma \nu} - \nabla_{\!\mu} \nabla_{\!\nu} R - \Box R_{\mu\nu} - \frac12 g_{\mu\nu} (R_{\rho\sigma} R^{\rho\sigma} + \Box R).
\end{align*}
These tensors can be obtained as the variations with respect to the metric of $R^2$ and $R_{\mu\nu} R^{\mu\nu}$ \cite{wald:trace}.
For conformally flat spacetimes like FLRW, it holds $I_{\mu\nu} = 3 J_{\mu\nu}$.

The term involving the Hadamard coefficient~$v_1$ in~\eqref{eq:stress-energy-quantum} ensures that the quantized stress-energy tensor satisfies $\nabla^\mu \langle T_{\mu\nu}^\mathrm{ren} \rangle_\omega = 0$~\cite{moretti:stress-energy,hollands-wald:stress-energy}.
The term $c_1 m^4 g_{a b}$ is a renormalization of the cosmological constant, and $c_2 m^2 G_{a b}$ corresponds to a renormalization of the gravitational coupling constant $\kappa$; the remaining two renormalization terms have no classical interpretation.
A change of the scale~$\lambda$ in the Hadamard parametrix corresponds to a change of the renormalization parameters~\cite{hack:book}, and thus we are in principle at liberty to set~$\lambda$ to any value if we change the renormalization parameters accordingly.

We also remark that in case the state $\omega$ includes a non-zero one-point function (classical Klein--Gordon field), the corresponding contribution to the two point function is $\phi^\mathrm{bg}(x)\phi^\mathrm{bg}(y)$, that is, in~\eqref{eq:stress-energy-quantum} we have to replace $\omega_2^\mathrm{reg}(x,y)$ by $\omega_2^\mathrm{reg}(x,y) + \phi^\mathrm{bg}(x)\phi^\mathrm{bg}(y)$.
After performing the coincidence limit, this leads to an additional contribution given by the classical stress-energy tensor~\eqref{eq:stress-energy-classical} with $\phi$ replaced by $\phi^\mathrm{bg}$.

\subsection{Trace of the stress-energy tensor}

Taking the trace of~\eqref{eq:stress-energy-quantum} yields\footnote{Note that the factor in front of $v_1$ in the corresponding Eq.~(22) of~\cite{eltzner-gottschalk} is incorrect.}% because of inconsistent conventions
\begin{equation}\label{eq:trace-quantum}\begin{split}
  {\langle T^\mathrm{ren} \rangle}_\omega &= \bigl( (6\xi-1) (\xi R + m^2) - m^2 \bigr) [\omega_2^\mathrm{reg}] + (6\xi-1) g^{\mu\nu} [(\nabla_{\!\mu} \otimes \nabla_{\!\nu}) \omega_2^\mathrm{reg}] \\&\quad - \frac{9\xi-2}{2\uppi^2} [v_1] + 4 c_1 m^4 - c_2 m^2 R - (6 c_3 + 2 c_4) \Box R.
\end{split}\end{equation}
where $c_\bullet$ are the same constants as above, and we used that $4 \uppi^2 [(\one \otimes K) \omega_2^\mathrm{reg}] = 3 [v_1]$.

For homogeneous states and isotropic states (i.e., satisfying \eqref{eq:omega2-homogeneous}) on FLRW spacetimes, we thus find
\begin{equation}\label{eq:trace-FLRW}\begin{split}
  {\langle T^\mathrm{ren} \rangle}_\omega &= \bigl( (6\xi-1) (\xi R + m^2) - m^2 \bigr) [\omega_2^\mathrm{reg}] - \frac{6\xi-1}{a^2} \bigl( [(\upDelta \otimes \one) \omega_2^\mathrm{reg}] + [(\partial_\tau \otimes \partial_\tau) \omega_2^\mathrm{reg}] \bigr) \\&\quad - \frac{9\xi-2}{2\uppi^2} [v_1] + 4 c_1 m^4 - c_2 m^2 R - (6 c_3 + 2 c_4) \Box R,
\end{split}\end{equation}
where
\begin{equation*}
  R = 6 \frac{a''}{a^3},
  \quad
  \Box R = 36 \frac{a'' a^{\prime\,2}}{a^7} - 18 \frac{a^{\prime\prime\,2}}{a^6} - 24 \frac{a^{(3)} a'}{a^6} + 6 \frac{a^{(4)}}{a^5}
\end{equation*}
and \eqref{eq:v1} specializes to
\begin{equation*}\begin{split}
  [v_1] &= \frac{m^4}{8} + \frac{1}{60} \mleft( \frac{a^{\prime\,4}}{a^8} - \frac{a'' a^{\prime\,2}}{a^7} \mright) + \frac{(6\xi-1) m^2}{4} \frac{a''}{a^3} + \frac{(6\xi-1)^2}{8} \frac{a^{\prime\prime\,2}}{a^6} \\&\quad + \frac{5\xi-1}{20} \mleft( 6 \frac{a'' a^{\prime\,2}}{a^7} - 3 \frac{a^{\prime\prime\,2}}{a^6} - 4 \frac{a^{(3)} a'}{a^6} + \frac{a^{(4)}}{a^5} \mright).
\end{split}\end{equation*}

In the case of a non-vanishing one-point function yielding the classical field $\phi^\mathrm{bg}$, the following expression needs to be added to $\langle T^\mathrm{ren}\rangle_\omega$:
\begin{equation}\label{eq:trace-classical}
  \bigl( (6\xi-1) (\xi R + m^2) - m^2 \bigr) \bigl(\phi^\mathrm{bg}\bigr)^2 - \frac{6\xi-1}{a^2} \bigl(\dot\phi^\mathrm{bg}\bigr)^2.
\end{equation}

\subsection{Energy component of the stress-energy tensor}
\label{sub:energy}

Next, let us state the energy component of the stress-energy tensor in a homogeneous and isotropic state on an FLRW spacetime:
\begin{equation}\label{eq:T00-FLRW}\begin{split}
  {\langle T_{00}^\mathrm{ren} \rangle}_\omega &= \frac12 [(\partial_\tau \otimes \partial_\tau) \omega_2^\mathrm{reg}] - \frac12 [(\one \otimes \upDelta) \omega_2^\mathrm{reg}] + \frac12 a^2 m^2 [\omega_2^\mathrm{reg}] \\&\quad + \xi \biggl( G_{00} [\omega_2^\mathrm{reg}] + 6 \frac{a'}{a} [(\one \otimes \partial_\tau) \omega_2^\mathrm{reg}] \biggr) \\&\quad - \frac{a^2}{4\uppi^2} [v_1] - c_1 a^2 m^4 + c_2 m^2 G_{00} + (3c_3 + c_4) J_{00},
\end{split}\end{equation}
where
\begin{equation*}
  G_{00} = 3 \frac{a^{\prime\,2}}{a^2},
  \quad
  J_{00} = -24 \frac{a'' a^{\prime\,2}}{a^5} - 6 \frac{a^{\prime\prime\,2}}{a^4} + 12 \frac{a^{(3)} a'}{a^4}.
\end{equation*}

A straightforward calculation shows
\begin{equation}\label{eq:R-from-G00}
  R = \frac{1}{a^2} \mleft( \frac{a}{a'} \partial_\tau G_{00} + 2 G_{00} \mright).
\end{equation}
Consequently we also have
\begin{equation}\label{eq:T-from-T_00}
  {\langle T^\mathrm{ren} \rangle}_\omega = -\frac{1}{a^2} \biggl( \frac{a}{a'} \partial_\tau {\langle T_{00}^\mathrm{ren} \rangle}_\omega + 2 {\langle T_{00}^\mathrm{ren} \rangle}_\omega \biggr),
\end{equation}
that is, we can express the trace in terms of the energy component if $a'$ is bounded away from zero.
If \eqref{eq:traced-SCE} holds, then \eqref{eq:R-from-G00} and \eqref{eq:T-from-T_00} imply that
\begin{equation}\label{eq:G00-T00}
  \partial_\tau \bigl( G_{00} - \kappa {\langle T_{00}^\mathrm{ren} \rangle}_\omega \bigr) = -2 \frac{a'}{a} \bigl( G_{00} - \kappa {\langle T_{00}^\mathrm{ren} \rangle}_\omega \bigr)
\end{equation}
holds.
In particular, this implies that the constraint equation \eqref{eq:constraint-SCE} given by the energy component of the SCE is satisfied everywhere if it is satisfied at one instance of time.
Note also that \eqref{eq:G00-T00} implies that $G_{00} - \kappa {\langle T_{00}^\mathrm{ren} \rangle}_\omega$ is asymptotically pushed to zero if $a' > 0$.

If a non-vanishing one-point function is present, it contributes to $\langle T_{00}^\mathrm{ren} \rangle_\omega$ the following expression:
\begin{equation*}
  \frac12 \bigl(\partial_\tau \phi^{\mathrm{bg}}\bigr)^2 + \frac12 a^2 m^2 \bigl(\phi^\mathrm{bg}\bigr)^2 + \xi \biggl( G_{00} \bigl(\phi^\mathrm{bg}\bigr)^2 + 3 \frac{a'}{a} \partial_\tau \bigl(\phi^\mathrm{bg}\bigr)^2 \biggr).
\end{equation*}

\subsection{Problem of higher derivatives}

Note that the SCE contains, via the renormalized stress-energy tensor, up to fourth order derivatives of the metric and thus, in the case of a FLRW spacetime, fourth order derivatives of the scale factor.
This is in striking contrast to the classical Einstein equation, which contains only second order derivatives of the metric.
Therefore, it can be argued that the SCE has solutions which diverge significantly from solutions with similar initial data for the classical Einstein equation, see e.g.\ the discussion in~\cite{flanagan-wald}.
In particular, it is generally expected that these higher derivatives cause additional unphysical runaway solutions to appear.

A na\"ive strategy to solve the SCE for FLRW spacetimes, i.e., the equation $-R = \langle T^\mathrm{ren} \rangle_\omega$ coupled with the dynamics of the quantum state, would be to move the highest order derivatives of the scale factor to the left-hand side and all remaining terms to the right-hand side.
Unfortunately this is not possible (in the case of non-conformal coupling) because the regularization includes singular terms with fourth order derivatives of the scale factor, and thus it is not clear how to proceed, cf.~\cite{suen:coupling} where the same problem is discussed.
For this reason, earlier approaches to solving the SCE, e.g.~\cite{pinamonti-siemssen}, focused on the conformally coupled case, where this problem can be avoided.

Actually, if one wishes to solve the SCE such that the state is Hadamard, the problem is even more severe.
In that case there is no obvious possibility to specify initial values for the state such that it is Hadamard as long as the spacetime is not known in a neighbourhood of the Cauchy surface.
In the `nicest' scenario, the analytic case, one needs to know \emph{all} the derivatives of metric (here, the scale factor) at the Cauchy surface.
A possible strategy to avoid this problem is to work with adiabatic states, as done e.g.\ in~\cite{pinamonti-siemssen}, but it comes at the cost that the solutions cannot be guaranteed to be smooth.

Below we suggest an alternative approach which relies on a dynamical system for the regularized two-point function and its derivatives in the coincidence limit and thus avoids the above mentioned problems.

\subsection{Stability of solutions}

There are various notions of stability for solutions of differential equations, the most basic but also the most important being the continuous dependence of the solution on its initial data and the parameters\footnote{The notion of parameters can be understood in a very general sense and can encompass `abstract' parameters in some Banach space (or an even more general space).} in the differential equation.
Together with existence and uniqueness of solutions, continuous dependence on the initial data is required for a well-posed problem in the sense of Hadamard.

A differential equation can typically be rewritten in many more or less equivalent forms.
In the case of the SCE on FLRW spacetimes, we have seen in \eqref{eq:T-from-T_00} that the trace can be expressed in terms of the energy component.
In the derivation of that equation we divided by $a'$, and thus it is only valid if $a'$ is bounded away from zero.
Hence, solving SCE in terms of the energy equation leads to problems in the neighbourhood of Minkowski spacetime where $a' = 0$.
This fact is related to the failure of well-posedness of the energy equation near the Minkowski solution described e.g.\ in~\cite{suen:stability1,suen:stability2} as we will also discuss in Sect.~\ref{sub:solved-traced-SCE}.
Since this problem does not appear if the trace equation is taken as the governing dynamical equation, it should be preferred over the energy equation in the generic case.

%------------------------------------------------------------------------------%
\section{Dynamical system for sequences of coincidence limits}
\label{sec:moments}

\subsection{Dynamics}

For $n \in \NN_0$, we define
\begin{equation}\label{eq:moments}
  \mathcal{M}_n \defn \upDelta_r^n (\mathcal{G} - \mathcal{H}_l) \Big|_{r=0},
  \quad
  l \geq n+1.
\end{equation}
That this definition is indeed independent of~$l$ follows immediately from~\eqref{eq:homog_h} and the definition of~$\mathcal{H}_n$:

\begin{proposition}
  For $j \geq l \geq n+1$, we have
  \begin{equation*}
    \upDelta_r^n (\mathcal{H}_j - \mathcal{H}_l) \Big|_{r=0} = 0.
  \end{equation*}
\end{proposition}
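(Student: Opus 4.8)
The plan is to show that the difference $\mathcal{H}_j - \mathcal{H}_l$ is, for $j \geq l \geq n+1$, a linear combination of the homogeneous distributions $h_{2k}$ with $k \geq l$, and that for all such $k$ the operator $\upDelta_r^n$ applied to $h_{2k}$ leaves a distribution whose restriction to $r = 0$ vanishes. The first point is immediate from the Ansatz~\eqref{eq:H-expansion}: the coefficients $\alpha_0, \dots, \alpha_l$, $\beta_0, \dots, \beta_l$, $\gamma_{-1}, \gamma_0, \dots, \gamma_l$ (and the singular term $\gamma_{-1} h_{-2}$) are shared between $\mathcal{H}_j$ and $\mathcal{H}_l$ because the recurrence relations of the preceding Proposition determine the coefficients uniquely from the lowest-order data, so
\begin{equation*}
  \mathcal{H}_j - \mathcal{H}_l = \sum_{k=l+1}^{j} \begin{pmatrix} \alpha_k \\ \beta_k \\ \gamma_k \end{pmatrix} h_{2k}.
\end{equation*}
Thus it suffices to prove that $\upDelta_r^n h_{2k}\big|_{r=0} = 0$ whenever $k \geq n+1$.

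The key step is the homogeneity property~\eqref{eq:homog_h} of the distributions $h_z$: $h_z$ is homogeneous of degree $z - 2$ (up to the logarithmic correction, which for the exponents $z = 2k$ with $k \geq 1$ does not obstruct the argument since $h_{2k}$ is then a genuine function near $r = 0$, vanishing there together with enough of its derivatives). Since $\upDelta_r$ lowers the homogeneity degree by $2$, one computes $\upDelta_r h_z = c_z\, h_{z-2}$ for an explicit constant $c_z$ (to be read off from the defining formula for $h_z$), and hence $\upDelta_r^n h_{2k} = (\text{const})\, h_{2(k-n)}$. For $k \geq n+1$ we have $k - n \geq 1$, so $h_{2(k-n)}$ is proportional to $r^{2(k-n)-2}\log(r/\mu)$ with exponent $2(k-n) - 2 \geq 0$; in particular it is continuous at $r = 0$ and vanishes there (the factor $r^{\geq 0}\log r \to 0$). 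Evaluating at $r = 0$ therefore gives zero, which is the claim. One should double-check the borderline case $k - n = 1$, where the exponent is exactly $0$ and the value is $\lim_{r\to 0} r^0 \log(r/\mu) \cdot (\text{const}) = 0$ only because the prefactor $1/\Gamma(2(k-n)) = 1/\Gamma(2)$ is finite and the $\log$ is tamed — here it is cleanest to invoke the normalization noted after the definition, $h_0 = (2\uppi^2 r^2)^{-1}$, wait: rather, for $k-n=1$ one has $h_2$, and $h_2(r) = \tfrac{\e^{\im\uppi}}{2\uppi^2}\,\tfrac{r^0}{\Gamma(2)}\big(\log(r/\mu) - \psi(2)\big)$, which indeed tends to $0$ as $r \to 0^+$ is false — $\log r \to -\infty$. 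So the borderline case needs more care.

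In fact the resolution of that borderline case is exactly the role played by the constraint $l \geq n+1$ together with the \emph{regularity} built into the construction: recall from~\eqref{eq:H-dynamics} and the surrounding discussion that $\mathcal{G} - \mathcal{H}_l = \mathcal{O}(r^{2(l-1)})$ as $r \to 0$, so $\upDelta_r^n(\mathcal{G} - \mathcal{H}_l)$ is $\mathcal{O}(r^{2(l-1)-2n}) = \mathcal{O}(r^{2(l-n-1)})$, which is continuous (indeed vanishing) at $r=0$ precisely when $l \geq n+1$; since $\mathcal{M}_n$ is defined this way for every admissible $l$, the independence of $l$ — which is what we are proving — follows by comparing the definition at two values $j, l$ and observing both give a well-defined (finite) coincidence limit of the \emph{same} underlying distribution $\mathcal{G}$ minus distributions that differ by the manifestly $r=0$-regular combination above. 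The cleanest writeup is: apply $\upDelta_r^n$ to the displayed difference $\sum_{k=l+1}^j (\cdots) h_{2k}$, use $\upDelta_r^n h_{2k} \propto h_{2(k-n)}$ with $k - n \geq 2$ (strictly, since $k \geq l+1 \geq n+2$), so the exponent $2(k-n) - 2 \geq 2 > 0$ and the limit at $r = 0$ is unambiguously zero with no logarithmic subtlety. The main obstacle is thus purely bookkeeping: tracking that the index shift $k \geq l+1$, not merely $k \geq l$, removes the borderline $h_2$ term, so no delicate distributional evaluation at $r=0$ is needed at all.
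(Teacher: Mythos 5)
Your argument is correct and is precisely the one the paper has in mind when it says the claim ``follows immediately from~\eqref{eq:homog_h} and the definition of~$\mathcal{H}_n$'': the coefficients are truncation-independent, so $\mathcal{H}_j-\mathcal{H}_l$ is a combination of $h_{2k}$ with $k\geq l+1\geq n+2$, and $\upDelta_r^n h_{2k}\propto h_{2(k-n)}$ vanishes at $r=0$ since $2(k-n)-2\geq 2$. Only the final paragraph of your writeup is needed; the detour through the borderline case $k-n=1$ (and the appeal to $\mathcal{G}-\mathcal{H}_l$) can be deleted, since the hypothesis $l\geq n+1$ excludes that case from the start.
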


Occasionally, we call $\mathcal{M}_n$ \emph{moments}.
To understand this nomenclature, consider the momentum space representation of $\mathcal{M}_n$:
\begin{equation*}
  \mathcal{M}_n = (-1)^n \int_0^\infty k^{2(n+1)} (\hat{\mathcal{G}} - \hat{\mathcal{H}}_l) \dif k.
\end{equation*}
That is, the sequence of $\mathcal{M}_n$ is given by the moments of $\hat{\mathcal{G}} - \hat{\mathcal{H}}$.
Note, however, that $\hat{\mathcal{G}} - \hat{\mathcal{H}}$ cannot be expected to be positive.

To formulate a dynamics for $\mathcal{M}_n$, recall~\eqref{eq:G-dynamics} and~\eqref{eq:H-dynamics}.
It follows that
\begin{equation*}
  \partial_\tau \mathcal{M}_n
  = \partial_\tau \upDelta_r^n (\mathcal{G} - \mathcal{H}_l) \Big|_{r=0}
  = A \mathcal{M}_n + B \mathcal{M}_{n+1},
\end{equation*}
where we defined
\begin{equation*}
  A \defn
  \begin{pmatrix*}
    0 & 2 & 0 \\
    -V & 0 & 1 \\
    0 & -2V & 0
  \end{pmatrix*},
  \quad
  B \defn
  \begin{pmatrix*}
    0 & 0 & 0 \\
    1 & 0 & 0 \\
    0 & 2 & 0
  \end{pmatrix*}.
\end{equation*}
Sometimes we write $A(\tau)$ to emphasize the dependence on~$\tau$ for a fixed potential~$V$.
Henceforth we shall suppose that $V$ is an arbitrary (but sufficiently regular) function and can forget its relation to the Klein--Gordon equation.

Considering $\mathcal{M}_n$ as a sequence $\mathcal{M} = (\mathcal{M}_n)$, we can also write the equation above as
\begin{equation}\label{eq:moment-dynamics}
  \partial_\tau \mathcal{M}(\tau) = \bigl(A(\tau) \otimes \one + B \otimes L\bigr) \mathcal{M}(\tau),
\end{equation}
where $L$ denotes the left-shift operator.
Below we study this equation on the weighted sequence spaces
\begin{equation}\label{eq:vec-ell}
  \vec\ell^p(w) \defn \RR^3 \otimes \ell^p(w),
\end{equation}
where $p \geq 1$ and $w$ is a sequence of weights; we denote the norms by ${\norm{\,\cdot\,}}_{p,w}$.
See Sect.~\ref{sec:weighted-spaces} for an introduction and our conventions.
Note in particular that in our convention the weights appear as inverses in the norm and thus they directly translate to the maximum growth rate of elements in the sequence space.

The careful reader will have noticed that above we have essentially performed a Taylor expansion of $\mathcal{G} - \mathcal{H}$ in the radial variable $r$.
Indeed, as we will see below, our approach can be applied in cases where that difference is analytic, with the best result in the entire case.
Therefore our resulting reformulation of the initial value problem for the SCE can partially be solved with methods related to the Cauchy--Kovalevskaya theorem.
However, unlike in the classical Cauchy--Kovalevskaya theorem, we do not require analyticity in the time variable (continuous differentiability of sufficient order is enough) and thus our methods more closely resemble the more abstract ones by Ovsyannikov \cite{ovsyannikov,friesen}.
For the same reason, the subset of moments that we consider cannot be compared with the set of analytic Hadamard states (i.e., Hadamard states which satisfy the analytic wavefront set condition) \cite{strohmaier-verch-wollenberg} except under additional assumptions including the analyticity of the scale factor.
However, one can expect that requiring analyticity also in time would be too restrictive for most physically interesting solutions.

In the following two subsections, we calculate $\mathcal{M}$ in two relevant examples to motivate the choice of weights later on.
Afterwards, the remainder of this section is concerned with solving~\eqref{eq:moment-dynamics}.
The solution is given by a mathematically rigorous definition of the time-ordered exponential of $A \otimes \one + B \otimes L$.

Finally, we wish to remark that the set of possible sequences $\mathcal{M}$ contains many unphysical examples (in particular, positivity is not guaranteed), and, furthermore, not all possible Hadamard states can be represented as a sequence in~\eqref{eq:vec-ell} for the weights considered below.

\subsection{\texorpdfstring{$\mathcal{M}$}{M} for the massive vacuum state on Minkowski spacetime}
\label{sec:vacuum}

First, let us recall the expansions of the modified Bessel functions, see e.g.\ Chap.~10 of~\cite{nist}:
\begin{align*}
  I_\nu(z) &= \bigl(\tfrac12 z\bigr)^\nu \sum_{n=0}^\infty \frac{\bigl(\tfrac14 z^2\bigr)^j}{j!\Gamma(\nu+j+1)}, \\
  K_n(z) &= \tfrac12 \bigl(\tfrac12 z\bigr)^{-n} \sum_{j=0}^{n-1} \frac{(n-j-1)!}{j!} \bigl(-\tfrac14 z^2\bigr)^j + (-1)^{n+1} \log\bigl(\tfrac12 z\bigr) I_n(z) \\&\quad + (-1)^n \tfrac12 \bigl(\tfrac12 z\bigr)^n \sum_{j=0}^\infty \bigl(\psi(j+1) + \psi(n+j+1)\bigr) \frac{\bigl(\tfrac14 z^2\bigr)^j}{j!(n+j)!},
\end{align*}
for $\nu \in \CC$ and $n = 0,1,2,\dotsc$, where $\psi$ denotes the Digamma function.

The two-point function for the massive vacuum state on Minkowski spacetime has the initial data:
\begin{subequations}\label{eq:G-vacuum}\begin{align}
  \mathcal{G}_{\varphi\varphi}(r) &= \frac{m}{4\uppi^2r} K_1(mr) \nonumber\\&= \frac{1}{4\uppi^2r^2} + \frac{m^2}{8\uppi^2} \sum_{j=0}^\infty \Bigl( \log(\tfrac12 mr) - \tfrac12 \bigl( \psi(j+1) + \psi(j+2) \bigr)\Bigr) \frac{\bigl(\tfrac12 mr\bigr)^{2j}}{j!(j+1)!}, \\
  \mathcal{G}_{(\varphi\pi)}(r) &= 0, \\
  \mathcal{G}_{\pi\pi}(r) &= -\frac{m^2}{4\uppi^2r^2} K_2(mr) \nonumber\\
  \begin{split}
    &= -\frac{1}{2\uppi^2r^4} + \frac{m^2}{8\uppi^2r^2} \\&\quad + \frac{m^4}{16\uppi^2} \sum_{j=0}^\infty \Bigl( \log(\tfrac12 mr) - \tfrac12 \bigl( \psi(j+1) + \psi(j+3) \bigr)\Bigr) \frac{\bigl(\tfrac12 mr\bigr)^{2j}}{j!(j+2)!},
  \end{split}
\end{align}\end{subequations}

Now we determine $\mathcal{H}_n$ on Minkowski spacetime.
For this purpose, we need to solve the recurrence relations~\eqref{eq:coeffs-recur} for constant $V = m^2$.
Hence, also the coefficients $\alpha_\bullet, \beta_\bullet$ and $\gamma_\bullet$ are constant, and~\eqref{eq:coeffs-recur} imply $\beta_\bullet = 0$,
\begin{equation}\label{eq:alpha+-gamma-Minkowski}
  \alpha_{j+1} - \gamma_j = -m^2 \alpha_j,
  \quad
  \alpha_{j+1} + \gamma_j = -2\sum_{i=1}^j \alpha_i \gamma_{j-i}.
\end{equation}
This recurrence relation can be solved in closed form:

\begin{proposition}
  If $V = m^2$ is constant,
  \begin{equation}\label{eq:coeffs-Minkowski}
    \alpha_j = \frac12 \binom{-\frac12}{j} m^{2j},
    \quad
    \beta_j = 0,
    \quad
    \gamma_{j-1} = \frac12 \binom{\frac12}{j} m^{2j}.
  \end{equation}
\end{proposition}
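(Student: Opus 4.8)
The plan is to solve the recurrence \eqref{eq:alpha+-gamma-Minkowski} by the generating function method: the two sequences in \eqref{eq:coeffs-Minkowski} are, up to the factor $\tfrac12$, exactly the Taylor coefficients of $(1+m^2x)^{-1/2}$ and $(1+m^2x)^{+1/2}$, so the natural move is to package the coefficients into a pair of power series and identify closed forms for them.

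First I would pin down what has to be shown. From \eqref{eq:coeffs-recur-beta} we have $\beta_{j+1} = -\tfrac14 V'\alpha_j - \tfrac14\beta_j'' - V\beta_j$; with $V\equiv m^2$ constant and $\beta_0 = 0$ this forces $\beta_j = 0$ for all $j$ by induction, and the remaining coefficients then satisfy \eqref{eq:alpha+-gamma-Minkowski}, i.e.\ $\alpha_{j+1} - \gamma_j = -m^2\alpha_j$ and $\alpha_{j+1} + \gamma_j = -2\sum_{i=1}^j\alpha_i\gamma_{j-i}$ for $j\geq 0$, with $\alpha_0 = \gamma_{-1} = \tfrac12$. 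Adding the two relations expresses $\alpha_{j+1}$ in terms of $\alpha_0,\dots,\alpha_j,\gamma_0,\dots,\gamma_{j-1}$, and the first then gives $\gamma_j$; hence the sequences are uniquely determined and it suffices to exhibit one solution.

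Next I would introduce the formal power series $A(x) \defn \sum_{j\geq 0}\alpha_j x^j$ and $\Gamma(x)\defn \sum_{j\geq 0}\gamma_{j-1}x^j$, noting that $\Gamma(x) = \tfrac12 + x\sum_{k\geq 0}\gamma_k x^k$. Multiplying $\alpha_{j+1} - \gamma_j = -m^2\alpha_j$ by $x^{j+1}$ and summing over $j\geq 0$ collapses, using the initial values, to the linear identity $\Gamma(x) = (1+m^2x)A(x)$. Applying the same procedure to the quadratic constraint, the convolution $\sum_{i=1}^j\alpha_i\gamma_{j-i}$ assembles into the product $\bigl(A(x)-\tfrac12\bigr)\bigl(\Gamma(x)-\tfrac12\bigr)$, and after expanding and cancelling one is left with the clean identity $A(x)\Gamma(x) = \tfrac14$. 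Eliminating $\Gamma$ gives $A(x)^2(1+m^2x) = \tfrac14$, hence $A(x) = \tfrac12(1+m^2x)^{-1/2}$, the branch being fixed by $A(0) = \alpha_0 = \tfrac12$, and $\Gamma(x) = \tfrac12(1+m^2x)^{1/2}$.

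Finally I would expand with the binomial series $(1+m^2x)^{\pm 1/2} = \sum_{j\geq 0}\binom{\pm 1/2}{j}m^{2j}x^j$ and read off $\alpha_j = \tfrac12\binom{-1/2}{j}m^{2j}$ and, since the coefficient of $x^j$ in $\Gamma$ is $\gamma_{j-1}$, also $\gamma_{j-1} = \tfrac12\binom{1/2}{j}m^{2j}$, together with $\beta_j = 0$ — which is \eqref{eq:coeffs-Minkowski}; a quick check at $j = 0,1$ confirms consistency with the initial data. The only step that demands real care is the second one: the index offset hidden in the $\gamma_{-1}$ slot and the empty sum at $j = 0$ must be tracked carefully when converting the quadratic recurrence into a functional equation. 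Everything else is formal power series algebra (equivalently, manipulation of convergent series for $\abs{x} < 1/m^2$), so no analytic subtlety arises. As an alternative one could bypass generating functions and verify \eqref{eq:coeffs-Minkowski} by induction directly, the required inputs being the Pascal-type identity $\binom{1/2}{j} = \binom{-1/2}{j} + \binom{-1/2}{j-1}$ and the convolution identity $\sum_{i=0}^j\binom{-1/2}{i}\binom{1/2}{j-i} = \delta_{j0}$, which reflect $(1+t)^{1/2} = (1+t)(1+t)^{-1/2}$ and $(1+t)^{-1/2}(1+t)^{1/2} = 1$ respectively.
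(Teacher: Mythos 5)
Your proof is correct, but it takes a different route from the paper's. The paper simply substitutes the claimed closed forms into the left-hand sides of \eqref{eq:alpha+-gamma-Minkowski} and verifies both relations, invoking the Chu--Vandermonde identity to collapse the convolution $\sum_i \alpha_i\gamma_{j-i}$; it is a guess-and-check argument and does not explain where the formulas come from. You instead derive them: after the (correct, if slightly compressed) observation that constancy of the coefficients plus $\beta_0=0$ forces $\beta_j=0$ and that the recurrence determines $(\alpha_j,\gamma_j)$ uniquely, you package the sequences into $A(x)$ and $\Gamma(x)$, turn the linear relation into $\Gamma=(1+m^2x)A$ and the quadratic constraint into $A\Gamma=\tfrac14$, and solve. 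I checked the index bookkeeping: the $j=0$ term of $\Gamma$ correctly absorbs $\gamma_{-1}=\tfrac12$, the empty sum at $j=0$ is consistent with the constant term of $-2(A-\tfrac12)(\Gamma-\tfrac12)$ vanishing, and the branch of the square root is fixed by $A(0)=\tfrac12$. The identity $A\Gamma=\tfrac14$ is of course the generating-function form of the very Chu--Vandermonde computation the paper performs coefficient-by-coefficient, so the two arguments rest on the same combinatorial fact; what your version buys is that the closed form emerges from the recurrence rather than having to be guessed, at the cost of a little care with formal power series and index offsets. Your closing remark correctly identifies the paper's proof as the ``alternative'' direct verification.
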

\begin{proof}
  Inserting~\eqref{eq:coeffs-Minkowski} on the left-hand side of~\eqref{eq:alpha+-gamma-Minkowski} yields
  \begin{align*}
    \alpha_{j+1} - \gamma_j &= \frac12 \biggl( \binom{-\frac12}{j+1} - \binom{\frac12}{j+1} \biggr) m^{2(j+1)} = -\frac12 \binom{-\frac12}{j} m^{2(j+1)} = -m^2 \alpha_j, \\
    \alpha_{j+1} + \gamma_j &= \frac12 \biggl( \binom{-\frac12}{j+1} + \binom{\frac12}{j+1} \biggr) m^{2(j+1)} = \frac12 \biggl( -\sum_{i=0}^{j+1} \binom{-\frac12}{i} \binom{\frac12}{j-i+1} + \binom{\frac12}{j+1} \biggr) m^{2(j+1)} \\&= -\frac12 m^{2(j+1)} \sum_{i=1}^j \binom{-\frac12}{i} \binom{\frac12}{j-i+1} = -2 \sum_{i=1}^j \alpha_i \gamma_{j-i},
  \end{align*}
  where in the second step of the second equation we used the Chu--Vandermonde identity.
\end{proof}

Then, with the coefficients~\eqref{eq:coeffs-Minkowski}, we find
\begin{subequations}\label{eq:H-vacuum}\begin{align}
  \mathcal{H}_{\varphi\varphi,n}(r) &= \frac{1}{4\uppi^2r^2} + \frac{m^2}{8\uppi^2} \sum_{j=0}^{n-1} \biggl( \log\Bigl(\frac{r}{\mu}\Bigr) - \psi(2j+2) \biggr) \frac{\bigl(\tfrac12 mr\bigr)^{2j}}{j!(j+1)!}, \\
  \mathcal{H}_{(\varphi\pi),n}(r) &= 0, \\
  \mathcal{H}_{\pi\pi,n}(r) &= -\frac{1}{2\uppi^2r^4} + \frac{m^2}{8\uppi^2r^2} + \frac{m^4}{16\uppi^2} \sum_{j=0}^{n-1} \biggl( \log\Bigl(\frac{r}{\mu}\Bigr) - \psi(2j+2) \biggr) \frac{\bigl(\tfrac12 mr\bigr)^{2j}}{j!(j+2)!}.
\end{align}\end{subequations}
Therefore, subtracting~\eqref{eq:H-vacuum} (of sufficiently high order) from~\eqref{eq:G-vacuum} and differentiating $n$~times with~$\upDelta$, we obtain in the coinciding point limit $r \to 0$:
\begin{subequations}\label{eq:M-vacuum}\begin{align}
  \mathcal{M}_{\varphi\varphi,n} &= \frac{1}{2\uppi^2} \bigl(\tfrac12 m\bigr)^{2n+2} \Bigl( \log(\tfrac12 m \mu) + \psi(2n+2) - \tfrac12 \bigl( \psi(n+1) + \psi(n+2) \bigr)\Bigr) \binom{2n+1}{n+1}, \\
  \mathcal{M}_{(\varphi\pi),n} &= 0, \\
  \mathcal{M}_{\pi\pi,n} &= \frac{1}{\uppi^2} \bigl(\tfrac12 m\bigr)^{2n+4} \Bigl( \log(\tfrac12 m \mu) + \psi(2n+2) - \tfrac12 \bigl( \psi(n+1) + \psi(n+3) \bigr)\Bigr) \frac{(2n+1)!}{n!(n+2)!}.
\end{align}\end{subequations}

Let $r \geq \frac32$.
It follows from the properties of the Digamma function (e.g., (5.4.14) and (5.4.15) of~\cite{nist}) that
\begin{align*}
  \psi(2n+2) - \frac12 \bigl( \psi(n+1) + \psi(n+r) \bigr) \leq \log(2)
\end{align*}
and the bound is approached as $n \to \infty$.
Moreover, by the duplication formula for the Gamma function,
\begin{equation*}
  \frac{(2n+1)}{n!(n+r)!} \leq \frac{2^{2n+1}}{\sqrt\uppi}.
\end{equation*}
Consequently, we have
\begin{equation*}
  \abs{\mathcal{M}_{\varphi\varphi,n}} \leq \frac{\abs{\log(m \mu)}}{4\uppi^{5/2}} m^{2n+2},
  \quad
  \abs{\mathcal{M}_{\pi\pi,n}} \leq \frac{\abs{\log(m \mu)}}{8\uppi^{5/2}} m^{2n+4},
\end{equation*}
whence $\mathcal{M} \in \vec\ell^p(w)$ for $p \geq 1$ and $w_n = \omega^n$ with $\omega > m^2$.

As a further consistency check, we show that $(A \otimes \one + B \otimes L) \mathcal{M} = 0$ for $V = m^2$.
Indeed, inserting~\eqref{eq:M-vacuum} into the left-hand side, a straightforward calculation shows that $-m^2 \mathcal{M}_{\varphi\varphi,n} + \mathcal{M}_{\pi\pi,n} + \mathcal{M}_{\varphi\varphi,n+1} = 0$.

Finally, we note that in the massless case $m=0$ all moments vanish exactly: $\mathcal{M}=0$.

\subsection{\texorpdfstring{$\mathcal{M}$}{M} for the massless thermal state on Minkowski spacetime}
\label{sec:thermal}

The two-point function for the massless KMS-state on Minkowski spacetime at inverse temperature~$\beta$ has the initial data
\begin{subequations}\label{eq:G-thermal}\begin{align}
  \mathcal{G}_{\varphi\varphi}(r) &= \frac{1}{4\uppi r\beta} \coth\Bigl(\frac{\uppi r}{\beta}\Bigr) = \frac{1}{4\uppi^2r^2} + \frac{1}{2\uppi^2\beta^2} \sum_{j=0}^\infty (-1)^j \zeta(2j+2) \Bigl(\frac{r}{\beta}\Bigr)^{2j}, \\
  \mathcal{G}_{(\varphi\pi)}(r) &= 0, \\
  \mathcal{G}_{\pi\pi}(r) &= -\frac{\uppi}{2r\beta^3} \coth\Bigl(\frac{\uppi r}{\beta}\Bigr) \csch\Bigl(\frac{\uppi r}{\beta}\Bigr)^2 \nonumber\\&= -\frac{1}{2\uppi^2r^4} + \frac{1}{2\uppi^2\beta^4} \sum_{j=0}^\infty (-1)^j (2j+2) (2j+3) \zeta(2j+4) \Bigl(\frac{r}{\beta}\Bigr)^{2j},
\end{align}\end{subequations}
where $\zeta$ denotes the (Riemann) Zeta function.
These data can be obtained after an appropriate Bogoliubov transformation from the two-point function of the vacuum state.
The coefficients for $\mathcal{H}_n$ with $V=0$ are all zero except for $\alpha_0$ and $\gamma_{-1}$.
Therefore,
\begin{subequations}\label{eq:H-thermal}\begin{align}
  \mathcal{H}_{\varphi\varphi,n}(r) = \frac{1}{4\uppi^2r^2},
  \quad
  \mathcal{H}_{(\varphi\pi),n}(r) = 0,
  \quad
  \mathcal{H}_{\pi\pi,n}(r) = -\frac{1}{2\uppi r^4}.
\end{align}\end{subequations}
Subtracting~\eqref{eq:H-thermal} from~\eqref{eq:G-thermal} and differentiating $n$ times with $\upDelta$, we obtain in the coinciding point limit $r \to 0$:
\begin{align*}
  \mathcal{M}_{\varphi\varphi,n} &= (-1)^n \beta^{-2n-2} \zeta(2n+2) (2n+1)!, \\
  \mathcal{M}_{(\varphi\pi),n} &= 0, \\
  \mathcal{M}_{\pi\pi,n} &= (-1)^n \beta^{-2n-4} \zeta(2n+4) (2n+3)!.
\end{align*}

Note that $\zeta(2) = \frac16\uppi^2$, $\zeta(4) = \frac{1}{90}\uppi^4$ and the Zeta function $\zeta(s)$ is monotonically decreasing for $s > 1$ with limit~$1$.
Hence $\mathcal{M} \in \vec\ell^p(w)$ for $p \geq 1$ and $w_n = (2n)! \omega^{2n}$ with $\omega>\beta^{-1}$.

As a further consistency check, we show that $(A \otimes \one + B \otimes L) \mathcal{M} = 0$ for $V=0$.
Indeed, it is easily checked that $\mathcal{M}_{\pi\pi,n} + \mathcal{M}_{\varphi\varphi,n+1} = 0$.

\subsection{Properties of the matrices \textit{A} and \textit{B}}

The matrix $B$ is nilpotent: $B^3 = 0$.
For this reason, many products of the matrices~$A$ and~$B$ vanish.

We are interested in the non-zero words in~$A$ and~$B$ with the largest number of $B$-factors.
It is already an easy consequence of the nilpotency of~$B$, that products of length~$3n$ contain at most $2n$ $B$-factors (e.g., $(B^2A)^n$).
This implies that a non-zero word of length~$n$ contains at most $\ceil{\frac{2n}{3}}$ $B$-factors.
This estimate can be substantially improved using further relations between the matrices $A$ and $B$.

In the following, we denote by dots the non-zero entries of a $3\times3$ matrix.
For instance, $(\dotmatrix1111)$ denotes any $3\times3$ matrix $(a_{i\!j})$ whose only non-zero entries are $a_{12}$, $a_{21}$, $a_{23}$, $a_{32}$.
The matrix $A$ introduced in the previous subsection is an example of such a matrix and we write $A = (\dotmatrix1111)$.
Another example is $B = (\dotmatrix0101)$.
Products and powers of $3\times3$ matrices will also be represented in this notation.
For example, we write $(\dotmatrix1111)^3$ for $A(t_1) A(t_2) A(t_3)$.
Note that $(\dotmatrix0101)^3 = 0$.

A quick calculation shows the following additional relations pertaining to products of the matrices~$A$ and~$B$:
\begin{equation}\label{eq:relations}\begin{alignedat}{3}
  (\dotmatrix0101)(\dotmatrix1111)(\dotmatrix0101) &= (\dotmatrix0101), \quad &
  (\dotmatrix1111)(\dotmatrix0101)(\dotmatrix0101) &= (\dotmatrix0100), \quad &
  (\dotmatrix0101)(\dotmatrix0101)(\dotmatrix1111) &= (\dotmatrix0001), \\
  (\dotmatrix1111)(\dotmatrix0101)(\dotmatrix1111) &= (\dotmatrix1111), \quad &
  (\dotmatrix0101)(\dotmatrix1111)(\dotmatrix1111) &= (\dotmatrix0111), \quad &
  (\dotmatrix1111)(\dotmatrix1111)(\dotmatrix0101) &= (\dotmatrix1101).
\end{alignedat}\end{equation}
Using these relations we can show that non-zero words in~$A$ and~$B$ of length~$n$ contain at most $\ceil*{\frac{n+1}{2}}$ $B$-factors.
More generally, we have:

\begin{proposition}\label{prop:B-maximal-products}
  A non-zero word in $(\dotmatrix1111)$ and $(\dotmatrix0101)$ of length~$n$ contains at most $\ceil*{\frac{n+1}{2}}$ $(\dotmatrix0101)$-factors.
\end{proposition}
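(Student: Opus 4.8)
The plan is to prove the bound by induction on the length $n$ of the word, using the six relations in~\eqref{eq:relations} to "compress" any occurrence of two consecutive $(\dotmatrix0101)$-factors. The key observation is that $(\dotmatrix0101)^2 = (\dotmatrix0001)$ (up to a nonzero scalar, which is irrelevant for the pattern), and that this matrix pattern, when multiplied on either side, severely restricts what can follow: from~\eqref{eq:relations} and $(\dotmatrix0101)^3 = 0$ we see that $(\dotmatrix0001)(\dotmatrix0101) = 0$, while $(\dotmatrix0001)(\dotmatrix1111) = (\dotmatrix0001)$ again. So once two $B$'s appear in a row, the pattern $(\dotmatrix0001)$ propagates to the right, can only absorb $A$'s, and can never pick up another $B$. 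Dually, $(\dotmatrix1111)(\dotmatrix0101)(\dotmatrix0101) = (\dotmatrix0100)$, and $(\dotmatrix0101)(\dotmatrix0100) = 0$, so again no further $B$ to the left of that block. This means a nonzero word contains \emph{at most one} place where two $B$'s are adjacent, and never three or more in a row.

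From this structural fact the counting is straightforward. First I would dispose of the case where the word contains a $BB$ substring: then the word has the form $(\text{word in }A,B\text{ with no }BB) \cdot B \cdot B \cdot (\text{word in }A\text{ only})$ — more precisely, reading from the left, the prefix up to and including the first $B$ of the adjacent pair has no two consecutive $B$'s, and everything strictly after the pair is a word in $A$ alone (else it would vanish). A word of length $m$ with no two consecutive $B$'s has at most $\ceil{m/2}$ $B$-factors; adding the two from the adjacent pair and none thereafter, one checks the total is still at most $\ceil{(n+1)/2}$. Second, the case where the word has no $BB$ substring at all: then trivially the number of $B$'s is at most $\ceil{n/2} \leq \ceil{(n+1)/2}$.

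The cleanest way to organize this is probably an explicit induction: suppose the claim holds for all lengths $< n$, take a nonzero word $W$ of length $n$, and look at its last two letters. If they are not $BB$, then $W = W' X$ with $W'$ nonzero of length $n-1$ (nonzero because $W$ is) and $X \in \{A,B\}$; the last letter contributes at most one $B$, and $W'$ contributes at most $\ceil{n/2}$ by induction, giving at most $\ceil{n/2}+1$ — wait, that overshoots, so one instead peels off the last letter only when it is $A$, and when the word ends in $B$ but not $BB$ one peels off the last \emph{two} letters $AB$, reducing to length $n-2$ and adding one $B$: $\ceil{(n-1)/2} + 1 = \ceil{(n+1)/2}$. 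The remaining case $W$ ends in $BB$ is handled directly by the propagation argument above, since then $W = W'' \cdot BB$ with $W''$ nonzero and ending in $A$ (or empty), so $W''$ has no trailing $BB$, length $n-2$, hence at most $\ceil{(n-1)/2}$ $B$'s, total $\ceil{(n-1)/2}+2$; for this not to exceed $\ceil{(n+1)/2}$ one uses that the propagation forces $W''$ itself to have no $BB$ anywhere, hence at most $\ceil{(n-2)/2}$ $B$'s, and $\ceil{(n-2)/2}+2 = \ceil{(n+2)/2}$ — so the truly careful bookkeeping shows the adjacent-$BB$ case actually gives the \emph{worst} case, matching $(B^2A)^k$ type words, and that is exactly where the $\ceil{(n+1)/2}$ (rather than $\ceil{n/2}$) comes from.

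The main obstacle is precisely this last piece of bookkeeping: making the "propagation" claim airtight. One must verify that once a $(\dotmatrix0001)$ or $(\dotmatrix0100)$ pattern is created by a $BB$, \emph{every} subsequent (resp. preceding) letter is forced to be $A$ for the word to stay nonzero, and that therefore a nonzero word has at most one $BB$ occurrence and no $BBB$; this requires computing all the relevant products $(\dotmatrix0001)(\dotmatrix0101)$, $(\dotmatrix0100)(\dotmatrix0101)$, $(\dotmatrix0101)(\dotmatrix0100)$, $(\dotmatrix0101)(\dotmatrix0001)$ and checking they vanish, which is routine but must be done. Everything else is elementary ceiling-function arithmetic, most economically phrased as: a nonzero word decomposes as $U V$ where $U$ has no $BB$ and $V$ is either empty or begins with $BB$ and contains no further $B$, and then $\#B(W) = \#B(U) + \#B(V) \leq \ceil{|U|/2} + \min(|V|, 2) \leq \ceil{(n+1)/2}$, the extremal case being $|V| = 2$ and $|U|$ odd.
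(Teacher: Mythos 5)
There is a genuine gap: the ``propagation'' lemma at the heart of your decomposition is false, and several of the matrix identities you invoke do not hold. The shape $(\dotmatrix0001)$ has its single nonzero entry at position $(3,2)$, so for $M$ of that shape $(MB)_{31}=M_{32}B_{21}\neq 0$: the product $(\dotmatrix0001)(\dotmatrix0101)$ is \emph{not} zero; it is the matrix whose only nonzero entry sits at $(3,1)$ --- a pattern the dot notation does not record --- and multiplying that by $(\dotmatrix1111)$ returns you to the shape $(\dotmatrix0001)$. (Likewise $B^2$ itself has its only nonzero entry at $(3,1)$ and is not of shape $(\dotmatrix0001)$; that shape is attained by $B^2A$. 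And $(\dotmatrix0001)(\dotmatrix1111)$ has entries at $(3,1)$ and $(3,3)$, so it is not of shape $(\dotmatrix0001)$ either.) Consequently the words $B^2(AB)^k$ are nonzero for every $k$, with $k+2$ factors of $B$ in length $2k+2$; these are precisely the words saturating the bound $\ceil*{\frac{n+1}{2}}$, and they contain arbitrarily many $B$'s \emph{after} the adjacent pair. Your structural claim that the suffix following a $BB$ contains no further $B$ therefore already fails for $W=BBAB$ (nonzero, three $B$'s in length four, whereas your bound $\ceil{\abs{U}/2}+\min(\abs{V},2)$ would cap it at two). The mirror claim $(\dotmatrix0101)(\dotmatrix0100)=0$ fails in the same way ($BABB\neq 0$).

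Even granting the (false) decomposition, the arithmetic does not close: with $\abs{U}=n-2$ and $\abs{V}=2$ one gets $\ceil{(n-2)/2}+2=\ceil{(n+2)/2}$, which exceeds $\ceil*{\frac{n+1}{2}}$ for odd $n$; the ``truly careful bookkeeping'' you defer is exactly the missing content, and it cannot be supplied along these lines because the premise is wrong. The paper instead runs the induction on the \emph{shapes} of $B$-maximal words: from \eqref{eq:relations} one shows that a nonzero word of odd length with the maximal number of $B$-factors must have shape $(\dotmatrix0101)$, $(\dotmatrix0100)$ or $(\dotmatrix0001)$, and that such a word contains at most one occurrence of two consecutive $B$'s, the remaining $B$'s being separated by single $A$'s; this gives at most $n+1$ $B$-factors in length $2n+1$, and the even case follows. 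Any repair of your argument has to track the additional patterns (such as ``only the $(3,1)$ entry'') that products of $A$ and $B$ actually produce, or restrict attention to $B$-maximal words as the paper does.
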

\begin{proof}
  By induction, it follows easily from~\eqref{eq:relations}, that a non-zero word of odd length with a maximal number of $(\dotmatrix0101)$-factors has the shape $(\dotmatrix0101)$, $(\dotmatrix0100)$ or $(\dotmatrix0001)$.
  This also shows that such a $(\dotmatrix0101)$-maximal word contains at most one occurrence of two consecutive $(\dotmatrix0101)$-factors.
  Hence we find that a non-zero word of length $2n+1$ contains at most $n+1$ factors of the shape~$(\dotmatrix0101)$.
  It is then obvious, that a non-zero word of length~$2n$ can also not contain more than $n+1$ factor of the shape~$(\dotmatrix0101)$.
  This completes the proof.
\end{proof}

We conclude this subsection on the properties of the matrices~$A$ and~$B$ by noting that their matrix (max) norms are given by:
\begin{equation}\label{eq:AB-norm}
  \norm{A} = 2 \sqrt{1 + V^2},
  \quad
  \norm{B} = 2.
\end{equation}

\subsection{Evolution operator}

In this subsection, we define the evolution operator for
\begin{equation}\label{eq:S-defn}
  S(\tau) \defn A(\tau) \otimes \one + B \otimes L
\end{equation}
as an operator on the weighted sequence spaces $\vec\ell^p(w)$ (see~\eqref{eq:vec-ell} and Sect.~\ref{sec:weighted-spaces} for the definition of these spaces) for certain weight sequences~$w$ and $p \geq 1$.

Recall that evolution operator for $S(\tau)$ is the `solution operator'
\begin{equation*}
  U(\tau, \tau_0) \mathcal{M}_0 = \mathcal{M}(\tau)
\end{equation*}
associated to the solutions of the initial value problem for the dynamical equation~\eqref{eq:moment-dynamics} with initial data $\mathcal{M}_0$ at $\tau_0$:
\begin{empheq}[left=\empheqlbrace]{align*}
  \partial_\tau \mathcal{M}(\tau) &= S(\tau) \mathcal{M}(\tau), \\
  \mathcal{M}(\tau_0) &= \mathcal{M}_0.
\end{empheq}
It can formally be obtained via the time-ordered exponential
\begin{equation*}
  U(\tau,\tau_0) \defn \operatorname{Texp}\mleft( \int_{\tau_0}^\tau S(\eta) \dif\eta \mright).
\end{equation*}
That is, defining for $n>0$
\begin{equation}\label{eq:dyson-summands}
  U_n(\tau,\tau_0) \defn \int_{\tau_0}^\tau \int_{\tau_0}^{\tau_1} \!\dotsm\! \int_{\tau_0}^{\tau_{n-1}} S(\tau_1) \,\dotsm\, S(\tau_n) \dif \tau_n \dotsm \dif \tau_1,
\end{equation}
it is given by the Dyson series
\begin{equation}\label{eq:dyson-series}
  U(\tau,\tau_0) = \begin{dcases*}
    \one + \sum\nolimits_{n=1}^\infty U_n(\tau,\tau_0)        & for $\tau \geq \tau_0$ \\
    \one + \sum\nolimits_{n=1}^\infty (-1)^n U_n(\tau_0,\tau) & for $\tau < \tau_0$.
  \end{dcases*}
\end{equation}
Note that this is simply the solution of the Picard iteration method.

Motivated by the examples in Subsects.~\ref{sec:vacuum} and~\ref{sec:thermal}, in the following two subsections we show that the series~\eqref{eq:dyson-series} converges as an operator
\begin{enumerate}[label=\normalfont\tbfigures(W\arabic*)]
  \item\label{item:weights1} on $\vec\ell^p(w)$ for $w_n = c \omega^n$ with $c,\omega > 0$,
  \item\label{item:weights2} from $\vec\ell^p(w)$ to $\vec\ell^p(v)$ for $w_n = (2n)! \omega^{2n}$ and $v_n = (2n)! \upsilon^{2n}$ with $\upsilon > \omega \geq 1$ in bounded time intervals,
\end{enumerate}
and that it has the properties of an evolution operator, i.e.,
\begin{enumerate}[label=\normalfont\tbfigures(E\arabic*)]
  \item\label{item:e1} $U(\tau,\tau) = \one$,
  \item\label{item:e2} $U(\tau,\tau_0) = U(\tau,\tau_1) U(\tau_1,\tau_0)$,
  \item\label{item:e3} $\partial_\tau U(\tau,\tau_0) = S(\tau) U(\tau,\tau_0)$ and $\partial_{\tau_0} U(\tau,\tau_0) = -U(\tau,\tau_0) S(\tau_0)$
\end{enumerate}
for $\tau, \tau_0, \tau_1$ in an appropriately chosen interval.

Let us briefly remark on the relation of the spaces in (W1) and (W2) to spaces of analytic functions.
Note that $\ell^p(w) \subset \ell^\infty(w)$ for any $p \in [1,\infty]$, and hence we will restrict our attention to the case $p = \infty$.
A sequence of moments $(\mathcal{M}_n)$ is in the sequence space $\ell^\infty(w)$ of factorially growing weights $w_n = \omega^{2n} (2n)!$ \ref{item:weights2}, if there exists a constant $M > 0$ such that
\begin{equation*}
  \abs{\mathcal{M}_n} \leq M \omega^{2n} (2n)!
\end{equation*}
for all $n \in \NN_0$.
Compare this to the following well-known definition: a smooth function $f$ is analytic in $K \Subset \RR$ if and only if there exist constants $C > 0$ and $M > 0$ such that
\begin{equation*}
  \sup_{x \in K}\ \abs{f^{(n)}(x)} \leq M C^n n!.
\end{equation*}
Hence, on the one hand, the moments define a power series of an even (i.e., all non-even powers are zero) analytic function in the neighbourhood of the origin.
On the other hand, if $\mathcal{G} - \mathcal{H}$ is analytic in a neighbourhood of the origin, then the corresponding moments will be contained in an appropriate weighted sequence space of type \ref{item:weights2}.

If the moments are in a space with geometrically growing weights $w_n = \omega^{2n}$ \ref{item:weights1}, then the moments grow so slow compared to $1/(2n)!$ that the radius of convergence of the corresponding power series is infinite, and thus they define an entire analytic function.

\subsection{Evolution operator for geometrically growing weights}

In this subsection we consider the case~\ref{item:weights1}, i.e., the evolution operator on $\vec\ell^p(w)$ for the geometrically growing weights $w_n = c \omega^n$ with $c,\omega > 0$.
As described above, in this case $\mathcal{G}-\mathcal{H}$ must be entire analytic in the radial variable.
In this case, $S(\tau)$ is a bounded operator on $\vec\ell^p(w)$ with
\begin{align}\label{eq:S-bound-geom}
  {\norm{S(\tau) \mathcal{M}}}_{p,w}
  \leq (\norm{A(\tau)} + \norm{B} \omega) {\norm{\mathcal M}}_{p,w}
  \leq 2\bigl(\sqrt{1+V(\tau)^2} + \omega\bigr) {\norm{\mathcal M}}_{p,w}.
\end{align}

The following theorem is easily shown, see e.g.~Thms.~5.1 and~5.2 in~\cite{pazy} (the proofs of the analogous Thm.~\ref{thm:fact_weights} in the next subsection can also be adapted to the case of geometrically growing weights discussed here):
\begin{theorem}\label{thm:geom_weights}
  Let $c,\omega > 0$ and $p \geq 1$.
  Suppose that $I \subsetneq \RR$ and $V \in C(I)$.
  The evolution operator $U(\tau,\tau_0)$, defined by~\eqref{eq:dyson-series}, is the unique bounded operator on $\vec\ell^p(w)$ with weights $w_n = c \omega^n$, such that the properties \ref{item:e1}--\ref{item:e3} hold for $\tau,\tau_0,\tau_1 \in I$.
  Moreover, we have the bound
  \begin{equation*}
    {\norm{U(\tau,\tau_0)}}_{p,w} \leq \e^{C_\omega(\tau,\tau_0)},
  \end{equation*}
  where
  \begin{equation}\label{eq:C(tau,tau_0)}
    C_\omega(\tau,\tau_0) \defn 2 \omega \abs{\tau - \tau_0} + 2 \abs*{\int_{\tau_0}^\tau \sqrt{1+V(\eta)^2} \dif\eta},
  \end{equation}
  and the evolution operator is norm-continuous:\footnote{Together with \ref{item:e2} this implies norm-continuity of $(\tau,\sigma) \to U(\tau,\sigma).$} $\lim_{\tau \to \tau_0} {\norm{U(\tau,\tau_0) - \one}}_{p,w} = 0$.
\end{theorem}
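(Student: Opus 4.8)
The plan is to run the classical Dyson-series (Picard iteration) argument for a linear evolution equation whose generator is a bounded, time-dependent operator. Write $s(\eta) \defn 2\bigl(\sqrt{1+V(\eta)^2} + \omega\bigr)$, so that the estimate displayed just before the theorem reads ${\norm{S(\eta)}}_{p,w} \leq s(\eta)$; since $V \in C(I)$ the map $s$ is continuous, hence bounded and Riemann-integrable on every compact subinterval of $I$, and one checks directly that $\int_{\tau_0}^\tau s(\eta)\dif\eta = C_\omega(\tau,\tau_0)$ for $\tau \geq \tau_0$ and $\int_{\tau}^{\tau_0} s(\eta)\dif\eta = C_\omega(\tau,\tau_0)$ for $\tau < \tau_0$ (see~\eqref{eq:C(tau,tau_0)}). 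First I would bound the Dyson summands~\eqref{eq:dyson-summands}: by submultiplicativity of the operator norm ${\norm{S(\tau_1)\dotsm S(\tau_n)}}_{p,w} \leq s(\tau_1)\dotsm s(\tau_n)$, and symmetrizing the ordered simplex $\{\tau_0 \leq \tau_n \leq \dotsm \leq \tau_1 \leq \tau\}$ over its $n!$ permutations gives, for $\tau \geq \tau_0$,
\begin{align*}
  {\norm{U_n(\tau,\tau_0)}}_{p,w}
  &\leq \int_{\tau_0}^\tau \int_{\tau_0}^{\tau_1} \!\dotsm\! \int_{\tau_0}^{\tau_{n-1}} s(\tau_1)\dotsm s(\tau_n) \dif\tau_n \dotsm \dif\tau_1 \\
  &= \frac{1}{n!} \mleft( \int_{\tau_0}^\tau s(\eta)\dif\eta \mright)^{n} = \frac{C_\omega(\tau,\tau_0)^n}{n!},
\end{align*}
and the same bound holds for $\tau < \tau_0$. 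Summing over $n$ then shows that the series~\eqref{eq:dyson-series} converges absolutely in operator norm, with ${\norm{U(\tau,\tau_0)}}_{p,w} \leq \sum_{n\geq0} C_\omega(\tau,\tau_0)^n/n! = \e^{C_\omega(\tau,\tau_0)}$; and since ${\norm{U(\tau,\tau_0) - \one}}_{p,w} \leq \e^{C_\omega(\tau,\tau_0)} - 1$ while $C_\omega(\tau,\tau_0) \to 0$ as $\tau \to \tau_0$ (continuity of the integral), norm-continuity follows.

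Next I would verify the evolution properties~\ref{item:e1}--\ref{item:e3}. Property~\ref{item:e1} is immediate since $U_n(\tau,\tau) = 0$ for all $n \geq 1$. For~\ref{item:e3} I would differentiate~\eqref{eq:dyson-series} term by term: from~\eqref{eq:dyson-summands} one has $\partial_\tau U_n(\tau,\tau_0) = S(\tau) U_{n-1}(\tau,\tau_0)$, and the series $\sum_{n\geq1} S(\tau)U_{n-1}(\tau,\tau_0)$ again obeys the factorial bound above, hence converges locally uniformly, which legitimizes the termwise differentiation and yields $\partial_\tau U(\tau,\tau_0) = S(\tau) U(\tau,\tau_0)$; equivalently, $\tau \mapsto U(\tau,\tau_0)$ is the Picard fixed point of $X(\tau) = \one + \int_{\tau_0}^\tau S(\eta) X(\eta)\dif\eta$. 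The cocycle law~\ref{item:e2} then follows from a Gr\"onwall uniqueness argument: both $\tau \mapsto U(\tau,\tau_0)$ and $\tau \mapsto U(\tau,\tau_1) U(\tau_1,\tau_0)$ solve $\partial_\tau X = S(\tau) X$ with the common value $U(\tau_1,\tau_0)$ at $\tau=\tau_1$, so their difference $D$ satisfies ${\norm{D(\tau)}}_{p,w} \leq \abs{\int_{\tau_1}^\tau s(\eta){\norm{D(\eta)}}_{p,w}\dif\eta}$ and therefore vanishes identically. The same argument gives uniqueness of the evolution operator: any bounded family $\tilde U$ on $\vec\ell^p(w)$ with properties~\ref{item:e1}--\ref{item:e3} satisfies the integral equation above (by the fundamental theorem of calculus and norm-continuity of $\eta \mapsto S(\eta)$), so $\tilde U - U$ vanishes on $I$ by Gr\"onwall.

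I do not expect a genuine obstacle; this is in essence Thm.~X.69 of~\cite{reed-simon}, and a fully detailed proof can also be read off \emph{mutatis mutandis} from the proof of Theorem~\ref{thm:fact_weights}. The points that need a little care are: (i) justifying the termwise differentiation in~\ref{item:e3} via the \emph{local uniformity} of the estimates rather than mere pointwise convergence; (ii) the sign and orientation bookkeeping for the two branches $\tau \geq \tau_0$ and $\tau < \tau_0$ in~\eqref{eq:dyson-series}; and (iii) the fact that $V$ is assumed only continuous, not bounded, on the possibly non-compact interval $I$, so every norm estimate must be read over compact subintervals — though $C_\omega(\tau,\tau_0)$ is finite for each fixed pair $\tau,\tau_0 \in I$, so the stated bound and the continuity statement remain meaningful throughout $I$.
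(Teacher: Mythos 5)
Your proposal is correct and is essentially the argument the paper has in mind: the paper gives no written proof of Theorem~\ref{thm:geom_weights}, instead citing Thm.~X.69 of~\cite{reed-simon} and noting that the proof of Thm.~\ref{thm:fact_weights} adapts, and both of those are exactly your bounded-generator Dyson-series argument (simplex symmetrization giving the $C_\omega^n/n!$ bound, termwise differentiation for \ref{item:e3}, and uniqueness via the integral equation). The only cosmetic difference is that the paper's Thm.~\ref{thm:fact_weights} obtains \ref{item:e2} by multiplying the two Dyson series rather than by your Gr\"onwall uniqueness argument; both are standard and equivalent here.
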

\begin{proof}
  Let us sketch a proof.
  Using the Banach fixed point theorem and the boundedness of the operators $S(\tau)$, it can be seen that the integral equation
  \begin{equation*}
    U(\tau, \tau_0) = \one + \int_{\tau_0}^\tau S(\eta) U(\eta, \tau_0) \dif\eta
  \end{equation*}
  has a unique solution which is given the Dyson series \eqref{eq:dyson-series}.
  Property \ref{item:e1} and the first part of \ref{item:e3} are obvious, while \ref{item:e2} follows from the uniqueness of the solution.
  For the second part of \ref{item:e3} note that~\ref{item:e2} implies $U(\tau,\tau_0)^{-1} = U(\tau_0,\tau)$ and, together with the first part of~\ref{item:e3}, we thus find $\partial_{\tau_0} U(\tau,\tau_0) = -U(\tau,\tau_0) S(\tau_0)$.
  Next, the bound can be obtained via Gr{\"o}nwall's inequality from the integral equality together with the bound~\eqref{eq:S-bound-geom}.
  Finally, norm-continuity can also easily be derived from the integral equation.
\end{proof}

In particular, this theorem implies the following:
\begin{remark}\label{rem:geom-growing}
  If we are given initial data of geometrically growing moments (e.g., the moments of the vacuum state on Minkowski spacetime), they will continue to be geometrically growing under time-evolution independent of the concrete potential~$V(t)$.
\end{remark}
\begin{remark}
  Given a continuous potential $V$, the solution exists in~$\vec\ell^p(w)$ for arbitrarily large time intervals $I$.
\end{remark}

The following perturbation result can be shown by a standard application of Gr{\"o}nwall's inequality, but can also be shown along similar lines as Thm.~\ref{thm:pert-fact_weights}:
\begin{theorem}\label{thm:pert-geom_weights}
  In addition to the assumptions of Thm.~\ref{thm:geom_weights}, suppose that $\tilde{V} \in C(I)$.
  Denote by $U(\tau,\tau_0)$ and $\tilde{U}(\tau,\tau_0)$ the evolution operators for $V$ and $\tilde V$.
  If $\mathcal{M}, \tilde{\mathcal M} \in \vec\ell^p(w)$, we have
  \begin{align*}\MoveEqLeft
    {\norm{U(\tau,\tau_0) \mathcal{M} - \tilde{U}(\tau,\tau_0) \tilde{\mathcal M}}}_{p,w} \\&\leq \e^{C_\omega(\tau,\tau_0)} {\norm{\mathcal{M} - \tilde{\mathcal M}}}_{p,w} + 2 \e^{C_\omega(\tau,\tau_0)+\tilde{C}_\omega(\tau,\tau_0)} {\norm{\tilde{\mathcal M}}}_{p,w} \int_{\tau_0}^\tau \abs{V(\eta)-\tilde{V}(\eta)} \dif\eta.
  \end{align*}
\end{theorem}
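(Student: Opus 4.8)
The plan is to split the difference by the triangle inequality,
\[
  \norm{U(\tau,\tau_0) \mathcal{M} - \tilde{U}(\tau,\tau_0) \tilde{\mathcal M}}_{p,w}
  \le \norm{U(\tau,\tau_0)(\mathcal{M} - \tilde{\mathcal M})}_{p,w}
     + \norm{\bigl(U(\tau,\tau_0) - \tilde{U}(\tau,\tau_0)\bigr)\tilde{\mathcal M}}_{p,w},
\]
and to treat the two summands separately. The first is immediately bounded by $\e^{C_\omega(\tau,\tau_0)}\norm{\mathcal{M}-\tilde{\mathcal M}}_{p,w}$ via Theorem~\ref{thm:geom_weights}, which is exactly the first term on the right-hand side of the claimed inequality. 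For the second summand the crucial observation is the elementary algebraic identity $A(\eta) - \tilde A(\eta) = -(V(\eta) - \tilde V(\eta))\,B$, so that $S(\eta) - \tilde{S}(\eta) = -(V(\eta)-\tilde V(\eta))\,(B \otimes \one)$ and, by~\eqref{eq:AB-norm}, $\norm{(S(\eta)-\tilde S(\eta))\mathcal{N}}_{p,w} \le 2\abs{V(\eta)-\tilde V(\eta)}\,\norm{\mathcal{N}}_{p,w}$ for all $\mathcal{N} \in \vec\ell^p(w)$. Without loss of generality I take $\tau \ge \tau_0$; the case $\tau < \tau_0$ is analogous.

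The second summand is then estimated through the variation-of-constants (Duhamel) identity
\[
  U(\tau,\tau_0) - \tilde{U}(\tau,\tau_0) = \int_{\tau_0}^\tau U(\tau,\eta)\,\bigl(S(\eta) - \tilde{S}(\eta)\bigr)\,\tilde{U}(\eta,\tau_0) \dif\eta,
\]
which I would derive by differentiating $\eta \mapsto U(\tau,\eta)\,\tilde U(\eta,\tau_0)$: from \ref{item:e2}--\ref{item:e3} for both evolutions one has $\partial_\eta\bigl(U(\tau,\eta)\tilde U(\eta,\tau_0)\bigr) = -U(\tau,\eta)\bigl(S(\eta)-\tilde S(\eta)\bigr)\tilde U(\eta,\tau_0)$, and integrating from $\tau_0$ to $\tau$ and invoking \ref{item:e1} gives the identity. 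Applying it to $\tilde{\mathcal M}$, taking norms, and bounding submultiplicatively with $\norm{U(\tau,\eta)}_{p,w} \le \e^{C_\omega(\tau,\eta)}$ and $\norm{\tilde U(\eta,\tau_0)\tilde{\mathcal M}}_{p,w} \le \e^{\tilde C_\omega(\eta,\tau_0)}\norm{\tilde{\mathcal M}}_{p,w}$ (both from Theorem~\ref{thm:geom_weights}), one arrives at
\[
  \norm{\bigl(U(\tau,\tau_0) - \tilde U(\tau,\tau_0)\bigr)\tilde{\mathcal M}}_{p,w}
  \le 2\norm{\tilde{\mathcal M}}_{p,w} \int_{\tau_0}^\tau \e^{C_\omega(\tau,\eta) + \tilde C_\omega(\eta,\tau_0)}\,\abs{V(\eta)-\tilde V(\eta)} \dif\eta.
\]
Since each of the two summands in the definition~\eqref{eq:C(tau,tau_0)} of $C_\omega$ is additive under subdivision of the interval, we have $C_\omega(\tau,\eta) + C_\omega(\eta,\tau_0) = C_\omega(\tau,\tau_0)$ for $\tau_0 \le \eta \le \tau$, whence $C_\omega(\tau,\eta) + \tilde C_\omega(\eta,\tau_0) \le C_\omega(\tau,\tau_0) + \tilde C_\omega(\tau,\tau_0)$; pulling this constant out of the integral gives precisely the second term of the claimed bound, and adding the two summands finishes the proof.

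An alternative route, also alluded to in the statement and parallel to the proof of Theorem~\ref{thm:pert-fact_weights}, is to write the integral equation for $\psi(\tau) - \tilde\psi(\tau)$ with $\psi(\tau) := U(\tau,\tau_0)\mathcal M$ and $\tilde\psi(\tau) := \tilde U(\tau,\tau_0)\tilde{\mathcal M}$, split off the inhomogeneity $(S(\eta)-\tilde S(\eta))\tilde\psi(\eta)$ (bounded in norm by $2\abs{V(\eta)-\tilde V(\eta)}\e^{\tilde C_\omega(\eta,\tau_0)}\norm{\tilde{\mathcal M}}_{p,w}$), and apply Grönwall's inequality with the coefficient $2(\sqrt{1+V(\eta)^2}+\omega)$ coming from the bound on $\norm{S(\eta)}_{p,w}$. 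I expect no serious obstacle: the only points requiring genuine care are the rigorous justification of the operator-valued Duhamel identity---i.e.\ the norm-differentiability of $\eta \mapsto U(\tau,\eta)\tilde U(\eta,\tau_0)$, which is where the composition law \ref{item:e2}, the differential equation \ref{item:e3}, and the norm-continuity furnished by Theorem~\ref{thm:geom_weights} enter---together with the bookkeeping with the exponents $C_\omega$; the algebraic identity $A-\tilde A = -(V-\tilde V)B$ and the attendant norm estimates are immediate, which is why the statement is flagged as following by a standard argument.
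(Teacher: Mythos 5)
Your proof is correct and follows exactly the route the paper indicates: the paper gives no explicit proof of Thm.~\ref{thm:pert-geom_weights}, remarking only that it follows by Gr\"onwall or ``along similar lines as Thm.~\ref{thm:pert-fact_weights}'', and your argument is precisely the latter --- the same splitting $U\mathcal{M}-\tilde U\tilde{\mathcal M}=U(\mathcal{M}-\tilde{\mathcal M})+(U-\tilde U)\tilde{\mathcal M}$, the same Duhamel identity for $U-\tilde U$, and the bound $\norm{S-\tilde S}\leq 2\abs{V-\tilde V}$ from $A-\tilde A=-(V-\tilde V)B$, simplified (correctly) by the fact that no intermediate weight sequence is needed since everything is bounded on a single space $\vec\ell^p(w)$. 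Your use of the additivity $C_\omega(\tau,\eta)+C_\omega(\eta,\tau_0)=C_\omega(\tau,\tau_0)$ to pull the exponential out of the integral is the right bookkeeping and yields the stated constant.
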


Under more stringent assumptions on the potential $V$, one obtains the following regularity result:
\begin{proposition}\label{prop:geometrically-smooth}
  If, in addition to the assumptions of Thm.~\ref{thm:geom_weights}, $V \in C^k(I)$, then $\partial_\tau^{k+1} U(\tau,\tau_0)$ is bounded on $\vec\ell^p(w)$.
\end{proposition}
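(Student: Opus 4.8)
The plan is to bootstrap from the differential relation $\partial_\tau U(\tau,\tau_0) = S(\tau)\,U(\tau,\tau_0)$ supplied by Theorem~\ref{thm:geom_weights}. The key observation is that $S(\tau) = A(\tau)\otimes\one + B\otimes L$ from~\eqref{eq:S-defn} depends on $\tau$ only through $A(\tau)$, and that the entries of $A$ are \emph{affine} in $V(\tau)$; since $B$ and $L$ are $\tau$-independent bounded operators on $\vec\ell^p(w)$, the hypothesis $V \in C^k(I)$ forces $\tau \mapsto S(\tau)$ to be a $C^k$ operator-valued function with $\partial_\tau^j S(\tau) = \bigl(\partial_\tau^j A(\tau)\bigr)\otimes\one$ for $1 \le j \le k$, where $\partial_\tau^j A(\tau)$ is the $3\times3$ matrix whose only nonzero entries are $-V^{(j)}(\tau)$ in position $(2,1)$ and $-2V^{(j)}(\tau)$ in position $(3,2)$. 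In particular each $\partial_\tau^j S(\tau)$ is a bounded operator with $\norm{\partial_\tau^j S(\tau)}_{p,w} = 2\abs{V^{(j)}(\tau)}$ for $j \ge 1$ (compare the computation of $\norm{A}$ in~\eqref{eq:AB-norm}), depending continuously, hence locally boundedly, on $\tau \in I$.

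Next I would upgrade the regularity of $U(\cdot,\tau_0)$ from $C^1$ to $C^{k+1}$ by a finite induction. Theorem~\ref{thm:geom_weights} already gives that $\tau \mapsto U(\tau,\tau_0)$ is norm-continuous and that $\partial_\tau U = S U$ holds with norm-continuous right-hand side, so $U(\cdot,\tau_0) \in C^1(I)$ in the operator norm. Suppose now $U(\cdot,\tau_0) \in C^j(I)$ for some $1 \le j \le k$. Since $S \in C^k(I) \subseteq C^j(I)$, the Leibniz rule for products of Banach-space-valued $C^j$-functions shows that $\tau \mapsto S(\tau)\,U(\tau,\tau_0)$ lies in $C^j(I)$, with $\partial_\tau^j(S U) = \sum_{i=0}^j \binom{j}{i} (\partial_\tau^i S)(\partial_\tau^{j-i} U)$; hence $\partial_\tau U = S U \in C^j(I)$ and therefore $U(\cdot,\tau_0) \in C^{j+1}(I)$. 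Running this from $j=1$ up to $j=k$ yields $U(\cdot,\tau_0) \in C^{k+1}(I)$.

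It then only remains to read off the conclusion: applying the Leibniz rule once more,
\begin{equation*}
  \partial_\tau^{k+1} U(\tau,\tau_0) = \partial_\tau^k\bigl(S(\tau)\,U(\tau,\tau_0)\bigr) = \sum_{j=0}^k \binom{k}{j}\bigl(\partial_\tau^j S(\tau)\bigr)\bigl(\partial_\tau^{k-j} U(\tau,\tau_0)\bigr),
\end{equation*}
and every factor on the right is a bounded operator on $\vec\ell^p(w)$, so $\partial_\tau^{k+1} U(\tau,\tau_0)$ is bounded. If a quantitative statement is wanted, it is obtained by iterating $\norm{\partial_\tau^{m+1} U(\tau,\tau_0)}_{p,w} \le \sum_{j=0}^m \binom{m}{j}\norm{\partial_\tau^j S(\tau)}_{p,w}\,\norm{\partial_\tau^{m-j} U(\tau,\tau_0)}_{p,w}$, starting from $\norm{U(\tau,\tau_0)}_{p,w} \le \e^{C_\omega(\tau,\tau_0)}$ (Theorem~\ref{thm:geom_weights}) together with $\norm{S(\tau)}_{p,w} \le 2\bigl(\sqrt{1+V(\tau)^2}+\omega\bigr)$ and $\norm{\partial_\tau^j S(\tau)}_{p,w} = 2\abs{V^{(j)}(\tau)}$ for $j \ge 1$. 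There is no serious obstacle here: the argument is a pure bootstrap, and the only point genuinely requiring a moment's thought is the first one, namely that $S(\tau)$ inherits precisely the regularity of $V$ — a consequence of its affine dependence on $V(\tau)$ and of the $\tau$-independence of the remaining (bounded) building blocks $B$ and $L$.
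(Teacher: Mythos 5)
Your argument is correct and is exactly the paper's proof, merely written out in full: the paper likewise differentiates the relation $\partial_\tau U(\tau,\tau_0)=S(\tau)U(\tau,\tau_0)$ from \ref{item:e3} repeatedly and invokes the boundedness of $S(\tau)$ and its derivatives on $\vec\ell^p(w)$ (which for geometric weights follows from the boundedness of $L$ and the affine dependence of $A$ on $V$). The bootstrap via the Leibniz rule and the explicit identification of $\partial_\tau^j S = (\partial_\tau^j A)\otimes\one$ are exactly the details the paper leaves implicit.
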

\begin{proof}
  By property \ref{item:e3}, $\partial_\tau U(\tau,\tau_0)$ is bounded on $\vec\ell^p(w)$.
  Repeated differentiation of \ref{item:e3}, together with the boundedness of $S(\tau)$ and its derivatives, yields the desired result by iteration.
\end{proof}

This implies, in particular, that the solutions of~\eqref{eq:moment-dynamics} with initial conditions in $\vec\ell^p(w)$ are smooth if $V$ is smooth.

\subsection{Evolution operator for factorially growing weights}

For weights growing faster than the geometric growth considered in the previous subsection, the left-shift operator and also $S(\tau)$ are unbounded (see also~\eqref{eq:L-norm}).
Moreover, the following shows that a construction based on the standard Hille--Yosida theory of $C_0$-semigroups can not be used to resolve this:
\begin{proposition}
  If the weights grow faster than geometrically, the resolvent set of $S(\tau)$ is empty.
\end{proposition}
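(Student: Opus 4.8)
The plan is to show that for any $\lambda \in \CC$, the operator $S(\tau) - \lambda$ is not boundedly invertible on $\vec\ell^p(w)$ when $w$ grows faster than geometrically, by exhibiting an approximate null sequence — i.e.\ vectors $\mathcal{N}^{(k)}$ with $\norm{\mathcal{N}^{(k)}}_{p,w} = 1$ but $\norm{(S(\tau)-\lambda)\mathcal{N}^{(k)}}_{p,w} \to 0$. The key structural fact is that $S(\tau) = A(\tau)\otimes\one + B\otimes L$, so $S(\tau) - \lambda$ acts on a sequence $\mathcal{N} = (\mathcal{N}_n)$ with $\mathcal{N}_n \in \RR^3$ by
\begin{equation*}
  \bigl((S(\tau)-\lambda)\mathcal{N}\bigr)_n = (A(\tau)-\lambda)\mathcal{N}_n + B\,\mathcal{N}_{n+1}.
\end{equation*}
Since $B$ has a one-dimensional kernel (only its middle row is nonzero, $B = (\dotmatrix0101)$ has $\Ker B = \RR e_1 \oplus \RR e_3$... actually $\Ran B \subset \RR e_2 \oplus \RR e_3$, $\Ker B = \RR e_3$), the off-diagonal coupling is highly degenerate, and one can choose test sequences supported on a single mode to make the $B$-coupling term negligible compared to the diagonal term.

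First I would fix $\tau$ and treat $A = A(\tau)$ as a fixed $3\times3$ matrix. For a sequence concentrated at index $n_0$, say $\mathcal{N}_{n} = \delta_{n,n_0}\,v$ for a unit vector $v \in \RR^3$, we get $(S-\lambda)\mathcal{N}$ supported at indices $n_0$ and $n_0 - 1$, with value $(A-\lambda)v$ at $n_0$ and $Bv$ at $n_0 - 1$. In the norm $\norm{\cdot}_{p,w}$ the weight enters as $w_n^{-1}$, so the component at $n_0-1$ is weighted by $w_{n_0-1}^{-1}$ while the component at $n_0$ is weighted by $w_{n_0}^{-1}$; their ratio is $w_{n_0}/w_{n_0-1}$, which tends to infinity precisely because $w$ grows faster than geometrically. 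Thus, after normalizing $\mathcal{N}$ (dividing by $w_{n_0}^{-1}$ so that the $n_0$-component has unit weighted norm), the contribution of the $Bv$ term scales like $w_{n_0}/w_{n_0-1} \to \infty$ — that is the wrong direction. So a single-mode test vector does not work directly; instead I would use a spread-out test sequence that telescopes the $B$-coupling.

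The better approach: build $\mathcal{N}$ as a long "chain" $\mathcal{N}_n = c_n v_n$ for $n$ in a window $[n_0, n_0+N]$, chosen so that $(A-\lambda)\mathcal{N}_n + B\mathcal{N}_{n+1}$ nearly cancels at every interior index. Formally this is a recursion $\mathcal{N}_{n+1} = -B^{+}(A-\lambda)\mathcal{N}_n$ modulo the kernel/cokernel of $B$; because $B$ is nilpotent of rank $2$ with a $3$-step structure, one cannot invert it, but the point is the opposite — the \emph{failure} to invert $B$ forces $S(\tau)-\lambda$ to have huge "resolvent-like" growth, so one exhibits near-null vectors by pushing mass out toward large $n$ along the shift. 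Concretely I would pick $\mathcal{N}_n = w_n\, e_{\sigma(n)}$ for a suitable periodic choice of basis vectors $e_{\sigma(n)} \in \{e_1,e_2,e_3\}$ exploiting the relations \eqref{eq:relations}, arranged so that $B\mathcal{N}_{n+1}$ points in a direction where $(A-\lambda)$ acts mildly; then $(S-\lambda)\mathcal{N}$ has, at each index, a term of size $O(\norm{A-\lambda}\,w_n)$ and a term of size $O(w_{n+1})$, and dividing by the dominant weighted norm (governed by the largest index in the window) shows the weighted norm of $(S-\lambda)\mathcal{N}$ relative to $\norm{\mathcal{N}}_{p,w}$ is $O(w_{n_0}/w_{n_0+N}) + O(1/\#\text{window})$, which one drives to zero by letting $N\to\infty$ while sliding the window out. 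Hence $\Ran(S(\tau)-\lambda)$ is not closed (or $S(\tau)-\lambda$ is not bounded below), so $\lambda \notin \rho(S(\tau))$.

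The main obstacle is handling the interplay between the matrix structure of $A-\lambda$ (a genuinely $3$-dimensional, $\tau$-dependent, generally non-normal matrix whose eigenvalues satisfy $\mu(\mu^2 + 4V) = 0$, cf.\ \eqref{eq:G-data-fourier}) and the degenerate shift coupling through the rank-$2$ nilpotent $B$: one must verify that the chosen direction pattern $e_{\sigma(n)}$ actually makes the residual small \emph{for every} $\lambda$, including $\lambda$ in the numerical range of $A$, and that the weighted-norm bookkeeping (the $\ell^p(w)$ sum over the window, with the faster-than-geometric growth condition $w_{n+1}/w_n \to \infty$ entering exactly once) genuinely yields a vanishing ratio. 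I expect the cleanest route is to reduce to the scalar second-order recursion that $S(\tau)$ encodes — since $\mathcal{N}_{\varphi\varphi}$ satisfies, via $\alpha' = 2\beta$ type relations, a three-term recursion $\mathcal{N}_{\varphi\varphi,n+1} = -(\text{stuff})\mathcal{N}_{\varphi\varphi,n} - (\text{stuff})\mathcal{N}_{\varphi\varphi,n-1}$ plus $\lambda$-dependent terms — and then invoke the standard fact that such a recursion on $\ell^p(w)$ with superexponential $w$ has empty resolvent, because solutions of the recursion grow at most geometrically and therefore always lie in $\ell^p(w)$, making $(S(\tau)-\lambda)$ surjective with nontrivial kernel simultaneously impossible to have a bounded inverse. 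I would present it in that reduced scalar form to avoid grinding through the $3\times3$ algebra.
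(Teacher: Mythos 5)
There is a genuine gap. Your main line of argument (a Weyl/approximate-null-sequence construction on a sliding window) is never completed: the crucial step --- choosing a direction pattern $e_{\sigma(n)}$ so that the residual $(A-\lambda)\mathcal{N}_n + B\mathcal{N}_{n+1}$ is small at every index, uniformly in $\lambda$ --- is only asserted, and you yourself note that the naive single-mode version fails because the weight ratio $w_{n_0}/w_{n_0-1}$ blows up in the wrong direction. The correct idea does appear in your last paragraph (kernel vectors of $S(\tau)-\lambda$ grow only geometrically and hence lie in $\vec\ell^p(w)$ for superexponential $w$), but you leave it as an unexecuted aside, and the reduction you sketch there is not right: the equation $(S(\tau)-\lambda)\mathcal{N}=0$ does not reduce to a genuine three-term recursion for $\mathcal{N}_{\varphi\varphi,n}$. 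Writing out the three rows of $(A-\lambda)\mathcal{N}_n + B\mathcal{N}_{n+1}=0$, the first row forces $\mathcal{N}_{n,2}=\tfrac{\lambda}{2}\mathcal{N}_{n,1}$, and combining the second and third rows forces $\mathcal{N}_{n,3}=\tfrac{\lambda^2}{4}\mathcal{N}_{n,1}$ together with the \emph{first-order} recursion $\mathcal{N}_{n+1,1}=\bigl(V(\tau)+\tfrac{\lambda^2}{4}\bigr)\mathcal{N}_{n,1}$. There is no ``standard fact about three-term recursions'' to invoke; the whole content is this elementary computation.

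The paper's proof is exactly that computation, presented as a verification: for any $\lambda,\xi\in\CC$ the sequence
\begin{equation*}
  \mathcal{N}_n = \frac{1}{4^n}\bigl(4V(\tau)+\lambda^2\bigr)^n
  \begin{pmatrix} 4\xi \\ 2\lambda\xi \\ \lambda^2\xi \end{pmatrix}
\end{equation*}
satisfies $(S(\tau)-\lambda)\mathcal{N}=0$ and grows at most geometrically, hence lies in $\vec\ell^p(w)$ whenever $w$ grows faster than geometrically. Thus every $\lambda\in\CC$ is an eigenvalue and the resolvent set is empty --- no approximate eigenvectors, no closed-range argument, and no ``surjective with nontrivial kernel'' dichotomy is needed (a nontrivial kernel alone already rules out a bounded inverse). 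To repair your write-up, discard the windowed construction entirely and carry out the recursion explicitly as above.
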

\begin{proof}
  For any $\lambda, \xi \in \CC$,
  \begin{equation*}
    \mathcal{N}_n = \frac{1}{4^n} \bigl(4V(\tau)+\lambda^2\bigr)^n \begin{pmatrix}
      4\xi \\ 2\lambda\xi \\ \lambda^2\xi
    \end{pmatrix}
  \end{equation*}
  is in $\vec\ell^p(w)$ because it grows at most geometrically, and a short calculation shows
  \begin{equation*}
    (S(\tau) - \lambda) \mathcal{N} = \bigl((A(\tau) - \lambda) \otimes \one + B \otimes L \bigr)\mathcal{N} = 0.
  \end{equation*}
\end{proof}

However, even then it is sometimes possible to understand $S(\tau)$ and, somewhat surprisingly, also the evolution operator~\eqref{eq:dyson-series} as a bounded operator \emph{between two different weighted sequence spaces} $\vec\ell^p(v)$ and $\vec\ell^p(w)$.
In other words, both $S(\tau)$ and the evolution operator $U(\tau,\tau_0)$ can be considered as unbounded operators on~$\vec\ell^p(v)$ with domain $\vec\ell^p(w)$.

Let us consider the case~\ref{item:weights2}, i.e., the weights $w_n = (2n)! \omega^{2n}$ and $v_n = (2n)! \upsilon^{2n}$ with $\upsilon > \omega \geq 1$.
As explained earlier, this can essentially be understood as the case where $\mathcal{G}-\mathcal{H}$ is analytic (with radius of convergence related to $\omega$ and $\upsilon$) in the radial variable.
We apply Prop.~\ref{prop:B-maximal-products}, \eqref{eq:AB-norm} and~\eqref{eq:dyson-summands} to obtain for $n > 0$ the bound
\begin{align*}
  {\norm{U_n(\tau,\tau_0)\mathcal{M}}}_{p,v}
  &\leq \frac{1}{n!} \sum_{m=0}^{\ceil{\frac{n+1}{2}}} \binom{n}{m} \abs*{\int_{\tau_0}^\tau \norm{A(\eta)} \dif\eta}^{n-m} \norm{B}^m {\norm{(\one \otimes L^m) \mathcal{M}}}_{p,v} \\
  &\leq \frac{2^n}{n!} \sum_{m=0}^{\ceil{\frac{n+1}{2}}} \binom{n}{m} r_m \abs*{\int_{\tau_0}^\tau \sqrt{1+V(\eta)^2} \dif\eta}^{n-m} {\norm{\mathcal M}}_{p,w},
\end{align*}
where we set
\begin{equation*}
  r_m = \sup_n \frac{w_{n+m}}{v_n} = \sup_n \frac{(2m+2n)!}{(2n)!} \frac{\omega^{2m+2n}}{\upsilon^{2n}}.
\end{equation*}
We compute $r_0 = 1$ and, for $m > 0$,
\begin{align}
  \frac{r_m}{m!}
  &< \frac{(2m)!}{\sqrt{\uppi m} m!} \mleft(\frac\upsilon\omega\mright)^\frac12\! \mleft(\frac{\upsilon\omega}{\upsilon-\omega}\mright)^{2m}
  = \frac{\Gamma(m+\tfrac12)}{\uppi \sqrt{m}} \mleft(\frac\upsilon\omega\mright)^\frac12\! \mleft(\frac{2\upsilon\omega}{\upsilon-\omega}\mright)^{2m} \nonumber \\
  &< \frac{(m-1)!}{\uppi} \mleft(\frac\upsilon\omega\mright)^\frac12\! \mleft(\frac{2\upsilon\omega}{\upsilon-\omega}\mright)^{\mathrlap{2m}}, \label{eq:rm-bound}
\end{align}
where we applied Prop.~\ref{prop:binomial}, the duplication formula for the Gamma function, and Gautschi's inequality~\eqref{eq:gautschi}.

In Prop.~\ref{prop:ratio-sum-ineq}, we show that
\begin{equation*}
  \sum_{m=1}^{\ceil{\frac{n+1}{2}}} \frac{(m-1)!}{(n-m)!} \leq \begin{cases*}
    \frac32         & odd $n$, \\
    \frac{n}{2} + 1 & even $n$.
  \end{cases*}
\end{equation*}
Thus, estimating the remaining factors in the sum by their supremum and recalling the definition~\eqref{eq:C(tau,tau_0)} of $C_\omega(\tau,\tau_0)$, we obtain for $n > 0$
\begin{equation}\label{eq:Un-bound-factorial}
  {\norm{U_n(\tau,\tau_0)\mathcal{M}}}_{p,v} < C_0(\tau,\tau_0)^n \biggl(\frac{1}{n!} + \frac{1}{\uppi} \mleft(\frac{n}{2} + 1\mright) \mleft(\frac\upsilon\omega\mright)^\frac12\! \mleft(\frac{2\upsilon\omega}{\upsilon-\omega}\mright)^{n+2}\biggr) {\norm{\mathcal M}}_{p,w}.
\end{equation}
Consequently, the time-ordered exponential~\eqref{eq:dyson-series} exists and can be bounded for sufficiently small time intervals by
\begin{equation}\label{eq:U-bound-factorial}\begin{split}
  {\norm{U(\tau,\tau_0) \mathcal{M}}}_{p,v} &\leq \e^{C_0(\tau,\tau_0)} {\norm{\mathcal M}}_{p,w} \\&\quad + \frac{C_0(\tau,\tau_0) K(\upsilon,\omega)^3}{2\uppi} \mleft(\frac\upsilon\omega\mright)^\frac12 \frac{3 - 2C_0(\tau,\tau_0) K(\upsilon,\omega)}{\bigl(1 - C_0(\tau,\tau_0) K(\upsilon,\omega)\bigr)^2} {\norm{\mathcal M}}_{p,w},
\end{split}\end{equation}
where we defined
\begin{equation*}
  K(\upsilon,\omega) \defn \frac{2\upsilon\omega}{\upsilon-\omega}.
\end{equation*}
Note that these inequalities only guarantee the existence of the exponential for a bounded interval of time, even if $\omega=1$ and $\upsilon$ is arbitrarily large.
The existence of such a bound is quite natural in a hyperbolic problem.
Since the radius of convergence is initially bounded, it must shrink as the area of dependence grows with the propagation speed.

Using the result above, we can now show:
\begin{theorem}\label{thm:fact_weights}
  Let $\upsilon > \omega \geq 1$ and $p \geq 1$.
  Suppose that $I = [\tau_0,\tau_1]$ and $V \in C(I)$ such that
  \begin{equation*}
    C_0(\tau_0,\tau_1) K(\upsilon,\omega) < 1.
  \end{equation*}
  The evolution operator $U(\tau,\tau_0)$, defined by~\eqref{eq:dyson-series}, is the unique bounded operator from $\vec\ell^p(w)$ to $\vec\ell^p(v)$ with weights $w_n = (2n)! \omega^{2n}$ and $v_n = (2n)! \upsilon^{2n}$, such that the properties \ref{item:e1}--\ref{item:e3} hold in the interval~$I$ (in the strong sense from $\vec\ell^p(w)$ to $\vec\ell^p(v)$).
  Moreover, for $\mathcal{M} \in \vec\ell^p(w)$, we have the bound~\eqref{eq:U-bound-factorial} and the strong continuity property
  \begin{equation*}
    \lim_{\tau \to \tau_0} {\norm{U(\tau,\tau_0) \mathcal{M} - \mathcal{M}}}_{p,v} = 0.
  \end{equation*}
\end{theorem}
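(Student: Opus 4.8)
Everything quantitative is already in place: by~\eqref{eq:Un-bound-factorial}--\eqref{eq:U-bound-factorial}, and since $\tau\mapsto C_0(\tau,\tau_0)=2\abs*{\int_{\tau_0}^\tau\sqrt{1+V(\eta)^2}\dif\eta}$ is nondecreasing on $I=[\tau_0,\tau_1]$ (so $C_0(\tau,\tau_0)\le C_0(\tau_0,\tau_1)$ there), the Dyson series~\eqref{eq:dyson-series} converges absolutely in the operator norm from $\vec\ell^p(w)$ to $\vec\ell^p(v)$, uniformly on $I$, with the bound~\eqref{eq:U-bound-factorial}. Property~\ref{item:e1} is immediate, since $U_n(\tau,\tau)=0$ for $n\ge1$. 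For strong continuity at $\tau_0$ I would rerun the estimate behind~\eqref{eq:U-bound-factorial} restricted to the tail $U(\tau,\tau_0)\mathcal{M}-\mathcal{M}=\sum_{n\ge1}U_n(\tau,\tau_0)\mathcal{M}$: each summand carries a factor $C_0(\tau,\tau_0)^n$, so the tail is bounded by $(\e^{C_0(\tau,\tau_0)}-1){\norm{\mathcal{M}}}_{p,w}$ plus a term with an overall factor $C_0(\tau,\tau_0)$, both of which tend to $0$ as $\tau\to\tau_0$.

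For~\ref{item:e2} I would use the classical reorganization of time-ordered products. Writing $S(s)=A(s)\otimes\one+B\otimes L$ and expanding, $U_j(\tau,\sigma)U_k(\sigma,\tau_0)$ is a finite sum of words of length $j+k$ in the matrices $A(\cdot)$ and $B$, and summing over $j+k=n$ while decomposing the simplex $\{\tau_0\le s_n\le\dots\le s_1\le\tau\}$ according to how many of the $s_i$ lie above $\sigma$ gives $\sum_{j+k=n}U_j(\tau,\sigma)U_k(\sigma,\tau_0)=U_n(\tau,\tau_0)$. The reorganization of the resulting double series is legitimate because Proposition~\ref{prop:B-maximal-products} bounds the number of $B$-factors (equivalently $L$-factors) in any such word of length $j+k$ by $\ceil{(j+k+1)/2}$, so the estimate used for~\eqref{eq:U-bound-factorial} applies verbatim after regrouping by $n=j+k$ and produces an absolutely convergent majorant of the same form as in~\eqref{eq:Un-bound-factorial}; hence $U(\tau,\sigma)U(\sigma,\tau_0)\mathcal{M}=U(\tau,\tau_0)\mathcal{M}$ in $\vec\ell^p(v)$. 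For~\ref{item:e3}, differentiating the outermost integration variable in~\eqref{eq:dyson-summands} gives $\partial_\tau U_n(\tau,\tau_0)=S(\tau)U_{n-1}(\tau,\tau_0)$, and $S(\tau)U_{n-1}(\tau,\tau_0)$ is again a sum of words of length $n$, so $\sum_{n\ge1}S(\tau)U_{n-1}(\tau,\tau_0)\mathcal{M}$ converges in $\vec\ell^p(v)$, uniformly on $I$; term-by-term differentiation then yields $\partial_\tau U(\tau,\tau_0)\mathcal{M}=S(\tau)U(\tau,\tau_0)\mathcal{M}$ in $\vec\ell^p(v)$, which is exactly~\ref{item:e3} ``in the strong sense from $\vec\ell^p(w)$ to $\vec\ell^p(v)$'' — equivalently, one checks (e.g.\ by the same word-counting) that $U(\tau,\tau_0)\mathcal{M}$ in fact lies in an intermediate space $\vec\ell^p(w^{(s)})$, $\omega<s<\upsilon$, on which $S(\tau)$ is bounded into $\vec\ell^p(v)$.

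For uniqueness, if $\tilde U(\tau,\tau_0)$ is another bounded operator $\vec\ell^p(w)\to\vec\ell^p(v)$ satisfying~\ref{item:e1}--\ref{item:e3} in this sense, then for $\mathcal{M}\in\vec\ell^p(w)$ the curve $\mathcal{N}(\tau)=\tilde U(\tau,\tau_0)\mathcal{M}$ solves the Volterra equation $\mathcal{N}(\tau)=\mathcal{M}+\int_{\tau_0}^\tau S(\eta)\mathcal{N}(\eta)\dif\eta$ (integrate~\ref{item:e3}, use~\ref{item:e1}); Picard iteration of this equation produces exactly the partial sums of~\eqref{eq:dyson-series}, and the tail after $N$ steps is bounded, via Proposition~\ref{prop:B-maximal-products} and~\eqref{eq:Un-bound-factorial}, by $\sum_{n>N}C_0^n(\dots){\norm{\mathcal{M}}}_{p,w}\to0$ in $\vec\ell^p(v)$, so $\tilde U=U$.

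I expect the real work to be not in any individual step but in the consistent handling of the unbounded operator $S(\tau)$: one must make precise throughout that all reorganized and differentiated series are absolutely convergent in $\vec\ell^p(v)$ and that~\ref{item:e3} is meaningful, and the device that makes this possible — either the scale of spaces $\vec\ell^p(w^{(s)})$, $\omega\le s\le\upsilon$, in Ovsyannikov's style, or direct bookkeeping of the $B$-factor count in words of length $n$ via Proposition~\ref{prop:B-maximal-products} — is where the care is concentrated. The combinatorial heart, namely Proposition~\ref{prop:B-maximal-products} and the estimates~\eqref{eq:rm-bound}--\eqref{eq:Un-bound-factorial}, has already been supplied.
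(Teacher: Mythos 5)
Your proposal follows essentially the same route as the paper: convergence and the bound come from the already-established estimates \eqref{eq:Un-bound-factorial}--\eqref{eq:U-bound-factorial}, property~\ref{item:e2} from multiplying and regrouping the Dyson series, and property~\ref{item:e3} from term-by-term differentiation after observing that $U(\tau,\tau_0)$ in fact maps into an intermediate space $\vec\ell^p(v_\varepsilon)$, $v_{\varepsilon,n}=(2n)!(\upsilon-\varepsilon)^{2n}$, on which $S(\tau)$ is bounded into $\vec\ell^p(v)$ --- exactly the device the paper uses. The only point where you go beyond the paper is uniqueness, and there your Picard-remainder bound is slightly loose (after $N$ iterations the remainder carries $\mathcal{N}(\tau_N)\in\vec\ell^p(v)$ rather than $\mathcal{M}\in\vec\ell^p(w)$, so the word estimates do not apply verbatim); the paper, however, gives no uniqueness argument at all, so this does not put you at a disadvantage relative to its proof.
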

\begin{proof}
  We have already shown that the evolution operator $U(\tau,\tau_0)$ is well-defined as an operator from $\vec\ell^p(w)$ to $\vec\ell^p(v)$.
  In fact, the same estimates show that, for sufficiently small $\varepsilon > 0$, $U(\tau,\tau_0)$ is bounded from $\vec\ell^p(w)$ to $\vec\ell^p(v_\varepsilon)$, where $v_{\varepsilon,n} = (2n)! (\upsilon-\varepsilon)^{2n}$.
  Moreover, applying these estimates again, we find
  \begin{equation*}\begin{split}\MoveEqLeft
    {\norm{U(\tau,\tau_0) \mathcal{M} - \mathcal{M}}}_{p,v} \\&\leq \mleft( \e^{C_0(\tau,\tau_0)}- 1 + \frac{C_0(\tau,\tau_0) K(\upsilon,\omega)^3}{2\uppi} \mleft(\frac\upsilon\omega\mright)^\frac12 \frac{3 - 2 C_0(\tau,\tau_0) K(\upsilon,\omega)}{\bigl(1 - C_0(\tau,\tau_0) K(\upsilon,\omega)\bigr)^2} \mright) {\norm{\mathcal M}}_{p,w} \to 0
  \end{split}\end{equation*}
  as $\tau \to \tau_0$ because $C_0(\tau,\tau_0) \to 0$ in that limit.

  Property~\ref{item:e1} is obvious and~\ref{item:e2} is shown (as for groups generated by bounded operators) by multiplying the series~\eqref{eq:dyson-series} for $U(\tau,r)$ and $U(r,\tau_0)$.
  The final property~\ref{item:e3} is proven by differentiating~\eqref{eq:dyson-series} term by term using the formal relation $\partial_\tau U_n(\tau,\tau_0) = S(\tau) U_{n-1}(\tau,\tau_0)$.
  The resulting expression is well-defined because $U(\tau,\tau_0)$ is bounded from $\vec\ell^p(w)$ to $\vec\ell^p(v_\varepsilon)$ and $S(\tau)$ is bounded from $\vec\ell^p(v_\varepsilon)$ to $\vec\ell^p(v)$.
\end{proof}

Finally, we prove a perturbation theorem:
\begin{theorem}\label{thm:pert-fact_weights}
  In addition to the assumptions of Thm.~\ref{thm:fact_weights}, suppose that $\tilde{V} \in C(I)$ such that\footnote{Note that this condition adds no additional restriction because the role of~$V$ and~$\tilde{V}$ can be exchanged.}
  \begin{equation*}
    C_0(\tau,\tau_0) \geq 2 \abs*{\int_{\tau_0}^\tau \sqrt{1+\tilde{V}(\eta)^2} \dif\eta}
    \quad\text{and}\quad
    2C_0(\tau,\tau_0) K(\upsilon,\omega) < 1.
  \end{equation*}
  Denote by $U(\tau,\tau_0)$ and $\tilde{U}(\tau,\tau_0)$ the evolution operators for $V$ and $\tilde{V}$.
  If $\mathcal{M}, \tilde{\mathcal M} \in \vec\ell^p(w)$, we have
  \begin{equation*}
    {\norm{U(\tau,\tau_0) \mathcal{M} - \tilde{U}(\tau,\tau_0) \tilde{\mathcal M}}}_{p,v} \leq c(\tau) {\norm{\mathcal{M} - \tilde{\mathcal M}}}_{p,w} + 2 c(\tau)^2 {\norm{\tilde{\mathcal M}}}_{p,w} \int_{\tau_0}^\tau \abs{V(\eta)-\tilde{V}(\eta)} \dif\eta,
  \end{equation*}
  where
  \begin{equation*}
    c(\tau) = \e^{C_0(\tau,\tau_0)} + \frac{4 C_0(\tau,\tau_0) K(\upsilon,\omega)^3}{\uppi} \mleft(\frac\upsilon\omega\mright)^{\mathrlap{\frac12}} \frac{3 - 4 C_0(\tau,\tau_0) K(\upsilon,\omega)}{\bigl(1 - 2 C_0(\tau,\tau_0) K(\upsilon,\omega)\bigr)^2}.
  \end{equation*}
\end{theorem}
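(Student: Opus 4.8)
The plan is to split
\[
  U(\tau,\tau_0)\mathcal{M} - \tilde{U}(\tau,\tau_0)\tilde{\mathcal{M}}
  = U(\tau,\tau_0)\bigl(\mathcal{M} - \tilde{\mathcal{M}}\bigr)
  + \bigl(U(\tau,\tau_0) - \tilde{U}(\tau,\tau_0)\bigr)\tilde{\mathcal{M}}
\]
and to estimate the two summands separately. The first summand is handled directly by Theorem~\ref{thm:fact_weights}: the present hypotheses imply those of that theorem for both $V$ and $\tilde{V}$ — indeed $C_0(\tau,\tau_0)$ is built from $V$ but, by the first standing assumption, $2\abs{\int_{\tau_0}^\tau\sqrt{1+\tilde{V}^2}\dif\eta} \le C_0(\tau,\tau_0)$, and $2C_0(\tau,\tau_0)K(\upsilon,\omega) < 1$ implies in particular $C_0(\tau,\tau_0)K(\upsilon,\omega) < 1$ — so the bound~\eqref{eq:U-bound-factorial} applies, and it remains only to verify the elementary scalar inequality that $c(\tau)$ dominates the right-hand coefficient of~\eqref{eq:U-bound-factorial} whenever $2C_0K<1$ (a crude comparison of $\frac{3-2x}{(1-x)^2}$ with $8\,\frac{3-4x}{(1-2x)^2}$ for $x=C_0K\in(0,\tfrac12)$ suffices: at $x\to0$ this is $3$ versus $24$, and as $x\to\tfrac12$ the right-hand side diverges). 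The same estimate, together with monotonicity of~\eqref{eq:U-bound-factorial} in its $C_0$-argument, shows that $\tilde{U}(\tau,\tau_0)\tilde{\mathcal{M}}$ is bounded in $\vec\ell^p(v)$ by $c(\tau)\,{\norm{\tilde{\mathcal{M}}}}_{p,w}$, which we will also need below.

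For the second summand I would work directly with the Dyson series~\eqref{eq:dyson-series}--\eqref{eq:dyson-summands}. The crucial structural point is that the perturbation enters only through a shift-free block: since $A(\eta)-\tilde{A}(\eta) = -\bigl(V(\eta)-\tilde{V}(\eta)\bigr)B$, we have $S(\eta)-\tilde{S}(\eta) = -\bigl(V(\eta)-\tilde{V}(\eta)\bigr)(B\otimes\one)$. Expanding $U_n-\tilde{U}_n$ via the telescoping identity $\prod_{i=1}^n S(\tau_i) - \prod_{i=1}^n \tilde{S}(\tau_i) = \sum_{k=1}^n \bigl(\prod_{i<k}S(\tau_i)\bigr)\bigl(S(\tau_k)-\tilde{S}(\tau_k)\bigr)\bigl(\prod_{i>k}\tilde{S}(\tau_i)\bigr)$ and substituting, each resulting term is a time-ordered integral of a word of length~$n$ in the matrices $A(\cdot)$, $\tilde{A}(\cdot)$, $B$, exactly one letter of which is the inserted shift-free $B\otimes\one$, times the scalar $V(\tau_k)-\tilde{V}(\tau_k)$. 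I would then re-run the chain of estimates preceding Theorem~\ref{thm:fact_weights}: extract $\int_{\tau_0}^\tau\abs{V-\tilde{V}}\dif\eta$ from the $\tau_k$-integration; apply Proposition~\ref{prop:B-maximal-products} to bound the number of $L$-shifts by $\ceil{\frac{n+1}{2}}-1$ (one of the at most $\ceil{\frac{n+1}{2}}$ nilpotency slots being taken up by the inserted $B\otimes\one$, which carries no shift); bound $\norm{A},\norm{\tilde{A}},\norm{B}$ via~\eqref{eq:AB-norm}; and control the weight ratios $r_m$ via~\eqref{eq:rm-bound} and the combinatorial sum of Proposition~\ref{prop:ratio-sum-ineq}. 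Summing over the insertion point~$k$ yields an extra factor bounded by~$n$ and an extra $\norm{B}=2$; carrying these through the resulting geometric series in $C_0(\tau,\tau_0)K(\upsilon,\omega)$ is precisely what degrades the summability threshold from $C_0K<1$ to $2C_0K<1$ and produces the stated constant $c(\tau)$. Combining with the bound for the first summand then gives the asserted inequality.

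The step I expect to be the main obstacle is exactly this combinatorial bookkeeping. The nilpotency bound of Proposition~\ref{prop:B-maximal-products}, which is what makes the series converge at all, is a statement about the \emph{entire} word of length~$n$, whereas the telescoping naturally presents each term as a left block of $S$-factors, then the inserted $B\otimes\one$, then a right block of $\tilde{S}$-factors; the estimate must be organized so that the global shift count — and hence the correct $r_m$'s — is used uniformly in~$k$, rather than bounding the two blocks independently (which would forfeit the nilpotency gain), and landing on the precise numerical form of $c(\tau)$ adds a further layer of care. An alternative route is the Duhamel representation $U(\tau,\tau_0)-\tilde{U}(\tau,\tau_0) = -\int_{\tau_0}^\tau U(\tau,\eta)\bigl(V(\eta)-\tilde{V}(\eta)\bigr)(B\otimes\one)\,\tilde{U}(\eta,\tau_0)\dif\eta$, but there one is forced to interpose a third weight base $\rho\in(\omega,\upsilon)$ to make the composition well defined in the scale of sequence spaces — $U(\tau,\eta)$ maps $\vec\ell^p(w)$ into $\vec\ell^p(v)$ but not $\vec\ell^p(v)$ into itself — and controlling how $K(\rho,\omega)$ and $K(\upsilon,\rho)$ compare with $K(\upsilon,\omega)$ degrades the constants in a way that is harder to reconcile with the stated $c(\tau)$.
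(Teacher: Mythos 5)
Your decomposition into $U(\mathcal{M}-\tilde{\mathcal M})$ plus $(U-\tilde U)\tilde{\mathcal M}$ is the paper's, and your treatment of the first summand is sound. But for the second summand your proposal has a genuine gap: the entire estimate rests on a combinatorial re-run of the Dyson-series bounds with one distinguished shift-free $B\otimes\one$ inserted at every possible position, and you yourself flag this bookkeeping as ``the main obstacle'' without carrying it out. In particular, nothing in the sketch actually shows that summing over the insertion point $k$ (an extra factor of order $n$), redistributing the simplex integral around the fixed time $\tau_k$, and tracking the reduced shift count $\ceil{\frac{n+1}{2}}-1$ through the weight ratios $r_m$ lands on the specific constant $c(\tau)$ with the threshold $2C_0(\tau,\tau_0)K(\upsilon,\omega)<1$; as written, the claim that this ``is precisely what degrades the summability threshold'' is an assertion, not a proof.

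Moreover, the alternative you dismiss is exactly the paper's proof, and the difficulty you cite there is illusory. The paper uses the Duhamel identity $(U-\tilde U)\tilde{\mathcal M}=\int_{\tau_0}^\tau U(\tau,\eta)\bigl(\tilde S(\eta)-S(\eta)\bigr)\tilde U(\eta,\tau_0)\tilde{\mathcal M}\dif\eta$ with the intermediate weight base chosen as the \emph{harmonic mean} $\tilde\omega=\frac{2\upsilon\omega}{\upsilon+\omega}$, for which
\begin{equation*}
  \frac{\tilde\omega\,\omega}{\tilde\omega-\omega}=\frac{2\upsilon\omega}{\upsilon-\omega}=\frac{\upsilon\,\tilde\omega}{\upsilon-\tilde\omega},
  \qquad\text{i.e.}\qquad
  K(\tilde\omega,\omega)=K(\upsilon,\tilde\omega)=2K(\upsilon,\omega),
\end{equation*}
together with $\tilde\omega/\omega<\upsilon/\omega$ and $\upsilon/\tilde\omega<\upsilon/\omega$. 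With this choice the two legs $\tilde U(\eta,\tau_0):\vec\ell^p(w)\to\vec\ell^p(\tilde w)$ and $U(\tau,\eta):\vec\ell^p(\tilde w)\to\vec\ell^p(v)$ each satisfy the bound~\eqref{eq:U-bound-factorial} with $K$ replaced by $2K(\upsilon,\omega)$ — which is verbatim the stated $c(\tau)$ and explains the hypothesis $2C_0K<1$ — while $\tilde S(\eta)-S(\eta)=\bigl(V(\eta)-\tilde V(\eta)\bigr)(B\otimes\one)$ is shift-free and hence bounded on $\vec\ell^p(\tilde w)$ by $2\abs{V(\eta)-\tilde V(\eta)}$, giving the factor $2c(\tau)^2\int_{\tau_0}^\tau\abs{V-\tilde V}\dif\eta$ immediately. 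So the constants do not degrade uncontrollably; they degrade in exactly the advertised way, and this one algebraic observation replaces all of the combinatorics you were trying to redo.
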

\begin{proof}
  Set
  \begin{equation*}
    \tilde{w}_n = (2n)! \tilde\omega^{2n},
    \quad
    \tilde\omega = \frac{2\upsilon\omega}{\upsilon+\omega}.
  \end{equation*}
  Observe that
  \begin{equation*}
    \frac{\tilde\omega \omega}{\tilde\omega-\omega} = \frac{2\upsilon\omega}{\upsilon-\omega} = \frac{\upsilon \tilde\omega}{\upsilon-\tilde\omega}
  \end{equation*}
  and
  \begin{equation*}
    \frac{\tilde\omega}{\omega} = \frac{2\upsilon}{\upsilon+\omega} < \frac{\upsilon}{\omega},
    \quad
    \frac{\upsilon}{\tilde\omega} = \frac{\upsilon+\omega}{2\omega} < \frac{\upsilon}{\omega}.
  \end{equation*}
  Further, note that $2C_0(\tau,\tau_0)K(\upsilon,\omega) < 1$ by assumption.
  Therefore,
  \begin{alignat*}{2}
    {\norm{U(\tau,\tau_0) \mathcal{M}}}_{p,v} &\leq c(\tau) {\norm{\mathcal M}}_{p,\tilde{w}}, \quad&
    {\norm{\tilde{U}(\tau,\tau_0) \mathcal{M}}}_{p,v} &\leq c(\tau) {\norm{\mathcal M}}_{p,\tilde{w}}, \\
    {\norm{U(\tau,\tau_0) \mathcal{M}}}_{p,\tilde{w}} &\leq c(\tau) {\norm{\mathcal M}}_{p,w}, \quad&
    {\norm{\tilde{U}(\tau,\tau_0) \mathcal{M}}}_{p,\tilde{w}} &\leq c(\tau) {\norm{\mathcal M}}_{p,w}.
  \end{alignat*}

  Now, note the elementary identity
  \begin{equation*}
    U(\tau,\tau_0) \mathcal{M} - \tilde{U}(\tau,\tau_0) \tilde{\mathcal M} = U(\tau,\tau_0) (\mathcal{M} - \tilde{\mathcal M}) + \bigl(U(\tau,\tau_0) - \tilde{U}(\tau,\tau_0)\bigr) \tilde{\mathcal M}.
  \end{equation*}
  For the first summand we find
  \begin{equation*}
    {\norm{U(\tau,\tau_0) (\mathcal{M} - \tilde{\mathcal M})}}_{p,v}
    \leq {\norm{U(\tau,\tau_0) (\mathcal{M} - \tilde{\mathcal M})}}_{p,\tilde{w}} \leq c(\tau) {\norm{\mathcal{M} - \tilde{\mathcal M}}}_{p,w}.
  \end{equation*}
  For the second summand we find, using the fundamental theorem of calculus (for Banach space-valued integrals),
  \begin{equation*}
    \bigl(U(\tau,\tau_0) - \tilde{U}(\tau,\tau_0)\bigr) \tilde{\mathcal M} = \int_{\tau_0}^\tau U(\tau,\eta) \bigl( \tilde{S}(\eta) - S(\eta) \bigr) \tilde{U}(\eta,\tau_0) \tilde{\mathcal M} \dif\eta,
  \end{equation*}
  and thus
  \begin{align*}
    {\norm{U(\tau,\tau_0) \tilde{\mathcal M} - \tilde{U}(\tau,\tau_0) \tilde{\mathcal M}}}_{p,v}
    &\leq \int_{\tau_0}^\tau \norm[\big]{U(\tau,\eta) \bigl( \tilde{S}(\eta) - S(\eta) \bigr) \tilde{U}(\eta,\tau_0) \tilde{\mathcal M}}_{p,v} \dif\eta \\
    &\leq 2 c(\tau) \int_{\tau_0}^\tau \abs{V(\eta) - \tilde{V}(\eta)} {\norm{\tilde{U}(\eta,\tau_0) \tilde{\mathcal M}}}_{p,\tilde{w}} \dif\eta \\
    &\leq 2 c(\tau)^2 {\norm{\tilde{\mathcal M}}}_{p,w} \int_{\tau_0}^\tau \abs{V(\eta)-\tilde{V}(\eta)} \dif\eta.
  \end{align*}
\end{proof}

As for the case of geometrically growing weights, under more stringent assumptions on the potential $V$, one obtains the following:
\begin{proposition}\label{prop:factorially-smooth}
  If, in addition to the assumptions of Thm.~\ref{thm:fact_weights}, $V \in C^k(I)$, then $\partial_\tau^{k+1} U(\tau,\tau_0)$ is bounded from $\vec\ell^p(w)$ to  $\vec\ell^p(v)$.
\end{proposition}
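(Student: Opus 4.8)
The plan is to adapt the argument of Prop.~\ref{prop:geometrically-smooth} --- repeated differentiation of the relation~\ref{item:e3} --- while carefully bookkeeping the loss of weight incurred at each step, which is now unavoidable since $S(\tau)$ is no longer bounded on a single space. First I would show, by induction on $k$ using~\ref{item:e3} and the product rule, that $\partial_\tau^{k+1}U(\tau,\tau_0)$ is a finite linear combination with integer coefficients of operator products
\begin{equation*}
  S^{(j_1)}(\tau)\,S^{(j_2)}(\tau)\cdots S^{(j_r)}(\tau)\,U(\tau,\tau_0),
  \qquad j_1+\dots+j_r+r=k+1.
\end{equation*}
Indeed, differentiating such a term either raises one $j_l$ by one, or --- when $\partial_\tau$ hits $U$ --- inserts a new factor $S=S^{(0)}$ next to $U$ (by~\ref{item:e3}); in both cases $\sum_l j_l+r$ increases by one. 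In particular each term has $r\leq k+1$ factors, at most $k+1$ of which are the ``dangerous'' factor $S^{(0)}=S$; the remaining factors equal $A^{(j_l)}(\tau)\otimes\one$ (because $B$ is constant), and are therefore bounded on every weighted space $\vec\ell^p(\cdot)$, uniformly for $\tau\in I$, since $V\in C^k(I)$ makes each $V^{(j_l)}$ bounded on the compact interval~$I$.

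Next I would build a finite chain of intermediate weights absorbing the at most $k+1$ weight losses. As the hypothesis $C_0(\tau_1,\tau_0)K(\upsilon,\omega)<1$ is strict and $K$ is continuous in its first argument, fix $\varepsilon_0>0$ so small that $C_0(\tau_1,\tau_0)K(\upsilon-\varepsilon_0,\omega)<1$; by the proof of Thm.~\ref{thm:fact_weights}, $U(\tau,\tau_0)$ is then bounded from $\vec\ell^p(w)$ to $\vec\ell^p(v_{\varepsilon_0})$, with $v_{\varepsilon_0,n}=(2n)!(\upsilon-\varepsilon_0)^{2n}$, uniformly for $\tau\in I$. Set $\rho_l\defn\upsilon-\varepsilon_0+l\,\varepsilon_0/(k+1)$ and $w^{(l)}_n\defn(2n)!\,\rho_l^{2n}$ for $l=0,\dots,k+1$, so that $\vec\ell^p(w^{(0)})=\vec\ell^p(v_{\varepsilon_0})$, $\rho_{k+1}=\upsilon$, and $\vec\ell^p(w^{(l)})\hookrightarrow\vec\ell^p(v)$ continuously for every~$l$. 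Since $\rho_{l-1}<\rho_l$, the gap between consecutive weights absorbs the unbounded left-shift in $B\otimes L$ (cf.\ Sect.~\ref{sec:weighted-spaces}), so $S(\tau)$ maps $\vec\ell^p(w^{(l-1)})$ to $\vec\ell^p(w^{(l)})$ boundedly, uniformly for $\tau\in I$, while every $S^{(j)}(\tau)$ with $j\geq1$ maps $\vec\ell^p(w^{(l)})$ to itself boundedly, uniformly for $\tau\in I$.

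Reading a product $S^{(j_1)}(\tau)\cdots S^{(j_r)}(\tau)\,U(\tau,\tau_0)$ from right to left, $U(\tau,\tau_0)$ sends $\vec\ell^p(w)$ into $\vec\ell^p(w^{(0)})$; then each factor either keeps the current space (if $j_l\geq1$) or advances it by one level $\vec\ell^p(w^{(i)})\to\vec\ell^p(w^{(i+1)})$ (if $j_l=0$). As there are at most $k+1$ of the latter, the composition lands in $\vec\ell^p(w^{(s)})$ for some $s\leq k+1$, hence in $\vec\ell^p(v)$, and its operator norm is at most the product of the finitely many uniform bounds above. Summing over the finitely many terms shows that $\partial_\tau^{k+1}U(\tau,\tau_0)$ is bounded from $\vec\ell^p(w)$ to $\vec\ell^p(v)$, uniformly for $\tau\in I$.

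The only genuinely delicate point is the induction in the first step: one must justify that the difference quotients of $\partial_\tau^jU$ converge strongly between the relevant pair of weighted spaces, so that the identity $\partial_\tau^{k+1}U=\partial_\tau^{k}(S\,U)$ and the product rule may legitimately be applied termwise. This is handled inductively along with the routing above --- at stage $j\leq k$ one already knows $\partial_\tau^jU$ exists as a bounded operator $\vec\ell^p(w)\to\vec\ell^p(w^{(j)})$ of the stated form (with all occurring $j_l\leq j-1\leq k-1$), and $C^k$-regularity of $V$ makes the coefficients $S^{(i)}(\tau)$ norm-continuous, indeed $C^1$, in~$\tau$ between the spaces in play --- after which the passage to difference quotients is routine, exactly as in the bounded setting underlying Prop.~\ref{prop:geometrically-smooth}. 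No weight budget is exhausted because $k$ is fixed while $\varepsilon_0>0$.
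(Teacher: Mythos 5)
Your proposal is correct and follows essentially the same route as the paper's proof, which likewise iterates the differentiation of \ref{item:e3} and relies on the boundedness of $U(\tau,\tau_0)$ from $\vec\ell^p(w)$ to $\vec\ell^p(v_\varepsilon)$ together with the boundedness of products of $S(\tau)$ and its derivatives from $\vec\ell^p(v_\varepsilon)$ to $\vec\ell^p(v)$. The paper leaves the latter as ``not difficult to see''; your chain of intermediate weights $\rho_l$ (exploiting that $S^{(j)}=A^{(j)}\otimes\one$ for $j\geq 1$ carries no left-shift) is exactly the bookkeeping that makes that claim precise.
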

\begin{proof}
  As in the proof of Thm.~\ref{thm:fact_weights}, $U(\tau,\tau_0)$ is bounded from $\vec\ell^p(w)$ to $\vec\ell^p(v_\varepsilon)$.
  Moreover, it is not difficult to see that products of $S(\tau)$ and its derivatives are bounded from $\vec\ell^p(v_\epsilon)$ to $\vec\ell^p(v)$.
  (Concrete estimates similar to those before Thm.~\ref{thm:fact_weights} could be computed, but because no infinite sums are involved here this is not necessary.)
  The result follows now by iteratively differentiating property \ref{item:e3}.
\end{proof}

As before, this implies that solutions of~\eqref{eq:moment-dynamics} with initial conditions in $\vec\ell^p(w)$ are smooth if $V$ is smooth.

%------------------------------------------------------------------------------%
\section{Abstract semiclassical Einstein equation}
\label{sec:abstract-sce}

The SCE on FLRW spacetimes contains only one geometric degree of freedom -- the scale factor $a = a(\tau)$.
Therefore, as described in Sect.~\ref{sec:sce}, it turns into an ODE for the scale factor, coupled to a dynamical system describing the evolution of the state.
In this section, we develop a scheme to solve such systems, encompassing also a large class of modifications of the SCE.
Here we always solve the SCE forward in time but equivalent results hold for solutions backward in time.

For fixed $k \in \NN$ and $\tau \in I \subset \RR$, consider the initial value problem for the quasilinear system
\begin{subequations}\label{eq:quasilinear}\begin{empheq}[left=\empheqlbrace]{align}
  \partial_\tau^{k+1} a(\tau) &= f(\tau, \vec{a}, \mathcal{M}), \label{eq:quasilinear-1} \\
  \partial_\tau \mathcal{M}(\tau) &= S(\tau, \vec{a}) \mathcal{M}(\tau), \label{eq:quasilinear-2}
\end{empheq}\end{subequations}
where
\begin{itemize}
  \item $a=a(\tau) \in \RR$ is the scale factor and $\vec{a} = (a, a', \dotsc, a^{(k)})$ its $k$-jet,
  \item $\mathcal{M}=\mathcal{M}(\tau) \in \vec\ell^p(v)$ is the sequence of coincidence limits, as defined in Sect.~\ref{sec:moments}, for some $p \geq 1$ and sequence of weights~$v$,
  \item $S(\tau, \vec{a}) = S\bigl(\tau, a(\tau), a'(\tau), \dotsc, a^{(k)}(\tau) \bigr)$ is the generator of the dynamics of~$\mathcal{M}$, as defined in~\eqref{eq:S-defn}, for a potential $V(\tau, \vec{a}) = V\bigl(\tau, a(\tau), a'(\tau), \dotsc, a^{(k)}(\tau) \bigr)$ which depends on the $k$-jet of the scale factor and may also have an explicit time-dependence, and
  \item $f(\tau, \vec{a}, \mathcal{M}) = f\bigl(\tau, a(\tau), a'(\tau), \dotsc, a^{(k)}(\tau), \mathcal{M}(\tau)\bigr)$ specifies the dynamics of the scale factor including a possible explicit time-dependence and a back-reaction by the quantum field via~$\mathcal{M}$.
\end{itemize}

We note that the results in this section generalize to states including classical fields, if one simply includes the additional degrees of freedom from Rem.~\ref{rem:background-fields} in the system~\eqref{eq:quasilinear} and adds \eqref{eq:trace-classical} to $f(\tau, \vec{a},\mathcal{M})$.
These modifications do not change the structure of the proofs below.

After developing this abstract formalism in the following two subsections, it will be applied to traced semiclassical Einstein equation in Sect.~\ref{sub:solved-traced-SCE}, by choosing $k = 3$ and a particular function $f = f_\mathrm{tr}$ coming from the trace of the quantized stress-energy tensor.

\subsection{Existence and uniqueness of solutions}

Existence of solutions to~\eqref{eq:quasilinear} can be shown by a (partial) linearization and employing a fixed-point argument via the construction of a contraction map.
With the preparatory results from the previous sections at hand, this theorem is a relatively straightforward adaption of standard results (see e.g.~\cite{kato-quasilinear}) on quasilinear systems to the case of Eq.~\eqref{eq:quasilinear}.
The (standard) proofs are deferred to Sect.~\ref{sec:proofs}.

\begin{theorem}\label{thm:quasilinear}
  Let $\vec{b}_c \in \RR^{k+1}$ and set
  \begin{equation*}
    \mathcal{B}_r \defn \bigl\{ \vec{b} \in \RR^{k+1} \;\big|\; \norm{\vec{b}-\vec{b}_c} \leq r \bigr\},
    \quad
    r > 0,
  \end{equation*}
  where we use the Euclidean norm on $\RR^{k+1}$.
  For fixed $R > 0$ and weight sequence $w$, consider $\vec{a}_\init \in \mathcal{B}_{R/2}$ and $\mathcal{M}_\init \in \vec\ell^p(w)$.
  Suppose that there is $I \defn [\tau_\init,\tau_\fin] \subset \RR$ and a weight sequence~$v$ such that the following holds:
  \begin{enumerate}
    \item
      For each $\vec{a} \in C(I; \mathcal{B}_R)$ there exists a unique solution $\mathcal{M}[\vec a] \in C^1(I; \vec\ell^p(v))$ to~\eqref{eq:quasilinear-2} with initial value $\mathcal{M}[\vec a](\tau_\init) = \mathcal{M}_\init$.
      Set
      \begin{equation*}
        \mu_\mathcal{M} = \sup_{\tau \in I, \vec{a} \in C(I;\mathcal{B}_R)} {\norm{\mathcal{M}[\vec a](\tau)}}_{p,v}.
      \end{equation*}
    \item
      There is $L_{\mathcal M}>0$ such that, for all $\vec{a},\tilde{\vec a} \in C(I; \mathcal{B}_R)$,
      \begin{equation}\label{eq:L_M}
        {\norm{\mathcal{M}[\vec a](\tau)-\mathcal{M}[\tilde{\vec a}](\tau)}}_{p,v} \leq L_{\mathcal M} \int_{\tau_\init}^\tau \norm{\vec{a}(\eta) - \tilde{\vec a}(\eta)} \dif\eta.
      \end{equation}
    \item
      For each $\vec{b},\tilde{\vec{b}} \in \mathcal{B}_R$ and $\mathcal{M},\tilde{\mathcal M} \in \vec\ell^p(v)$ bounded by $\mu_\mathcal{M}$, the map $\tau \mapsto f(\tau,\vec{b},\mathcal{M})$ is continuous for $\tau \in I$, and there are $\mu_f > 0$, $L_f > 0$ such that
      \begin{align}
        \abs{f(\tau,\vec{b},\mathcal{M})} &\leq \mu_f, \label{eq:mu_f} \\
        \abs{f(\tau,\vec{b},\mathcal{M})-f(\tau,\tilde{\vec{b}},\tilde{\mathcal M})} &\leq L_f \bigl(\norm{\vec{b}-\tilde{\vec{b}}} + {\norm{\mathcal{M}-\tilde{\mathcal M}}}_{p,v}\bigr) \label{eq:L_f}
      \end{align}
      uniformly for $\tau \in I$.
  \end{enumerate}
  Then there exists $\tau_\mathrm{stop} \in (\tau_\init,\tau_\fin]$ such that~\eqref{eq:quasilinear} has a unique (local) solution
  \begin{equation*}
    (a, \mathcal{M}) \in C^{k+1}(J) \times C^1(J; \vec\ell^p(v))
  \end{equation*}
  in $J = [\tau_\init,\tau_\mathrm{stop}]$, satisfying the initial conditions $\vec{a}(\tau_\init) = \vec{a}_\init, \mathcal{M}(\tau_\init) = \mathcal{M}_\init$ and the bounds $\vec{a}(\tau) \in \mathcal{B}_R$, $\norm{\mathcal{M}(\tau)}_{p,v} \leq \mu_\mathcal{M}$ for $\tau \in J$.
\end{theorem}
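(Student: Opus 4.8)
The plan is to decouple \eqref{eq:quasilinear} into the linear state equation \eqref{eq:quasilinear-2}, which for each prescribed $\vec a$ is solved by hypothesis~(i), and the $(k{+}1)$-st order ODE \eqref{eq:quasilinear-1}, and then to close the loop by a Banach fixed-point argument for the scale factor alone on a short interval $J=[\tau_\init,\tau_\mathrm{stop}]\subseteq I$. Concretely, on the complete metric space $X\defn C(J;\mathcal{B}_R)$ (sup-norm on the $k$-jet, so that $X$ is a closed subset of the Banach space $C(J;\RR^{k+1})$), define $\Phi$ as follows: given $\vec a\in X$, extend it constantly to $I$ (if $\tau_\mathrm{stop}<\tau_\fin$), invoke (i) to obtain $\mathcal{M}[\vec a]$ — which by \eqref{eq:L_M} depends only on $\vec a|_J$ — and set $\Phi(\vec a)$ to be the $k$-jet of
\[
  a(\tau)=\sum_{j=0}^{k}\frac{(\tau-\tau_\init)^j}{j!}\,a_\init^{(j)}+\int_{\tau_\init}^\tau\frac{(\tau-\eta)^k}{k!}\,f\bigl(\eta,\vec a(\eta),\mathcal{M}[\vec a](\eta)\bigr)\dif\eta .
\]
By (iii) the integrand is continuous, so $\Phi(\vec a)\in C(J;\RR^{k+1})$, and a fixed point of $\Phi$ is exactly a solution of \eqref{eq:quasilinear} with the required initial data (and conversely).

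The first step is the self-mapping property. Since $\vec a_\init\in\mathcal{B}_{R/2}$, the deviation of $\Phi(\vec a)$ from the constant path $\tau\mapsto\vec a_\init$ is, componentwise, an iterated integral of $f$ of order between $1$ (top derivative) and $k{+}1$; using the a priori bound \eqref{eq:mu_f} this deviation is $\mathcal{O}\bigl(\mu_f\max_{1\le l\le k+1}(\tau_\mathrm{stop}-\tau_\init)^l\bigr)$ uniformly on $J$. Hence for $\tau_\mathrm{stop}-\tau_\init$ small enough (depending only on $R$, $\mu_f$, $k$) we have $\Phi(\vec a)(\tau)\in\mathcal{B}_R$ for all $\tau\in J$, so $\Phi$ maps $X$ into itself; this also guarantees that hypothesis~(i) is applicable to every iterate.

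The second step is the contraction estimate. For $\vec a,\tilde{\vec a}\in X$, subtracting the integral formulas and using the Lipschitz bound \eqref{eq:L_f} on $f$ together with the Lipschitz dependence \eqref{eq:L_M} of $\mathcal{M}[\,\cdot\,]$ on its argument gives, for the top-derivative component,
\[
  \norm{\Phi(\vec a)(\tau)-\Phi(\tilde{\vec a})(\tau)}\le L_f\bigl(1+L_{\mathcal{M}}\,(\tau_\mathrm{stop}-\tau_\init)\bigr)\int_{\tau_\init}^\tau\norm{\vec a(\eta)-\tilde{\vec a}(\eta)}\dif\eta ,
\]
and an analogous, even smaller, bound for the lower derivatives (which carry extra powers of the kernel $(\tau-\eta)^l/l!$). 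Taking the supremum over $\tau\in J$, $\Phi$ is Lipschitz on $X$ with constant $\lesssim(\tau_\mathrm{stop}-\tau_\init)$, so after possibly shrinking $\tau_\mathrm{stop}-\tau_\init$ once more it becomes a strict contraction. (One could instead keep $\tau_\mathrm{stop}$ fixed and use a Bielecki-type exponentially weighted sup-norm to get the contraction directly, but as only a local statement is claimed, shrinking the interval suffices.)

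By the Banach fixed-point theorem there is a unique $\vec a\in X$ with $\Phi(\vec a)=\vec a$. Its zeroth component $a$ then satisfies $\partial_\tau^{k+1}a=f(\tau,\vec a,\mathcal{M}[\vec a])$ with continuous right-hand side, so $a\in C^{k+1}(J)$; setting $\mathcal{M}\defn\mathcal{M}[\vec a]\in C^1(J;\vec\ell^p(v))$ (by (i)) gives a solution of \eqref{eq:quasilinear-2}. The initial conditions and the bounds $\vec a(\tau)\in\mathcal{B}_R$, $\norm{\mathcal{M}(\tau)}_{p,v}\le\mu_\mathcal{M}$ hold by construction and by (i). Uniqueness in the stated class follows since any such solution yields a fixed point of $\Phi$ on $X$ (after possibly further reducing $\tau_\mathrm{stop}$ so that its $k$-jet remains in $\mathcal{B}_R$). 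I expect the only genuinely delicate point to be the bookkeeping of the self-mapping and contraction radii so that a \emph{single} choice of $\tau_\mathrm{stop}$ works for both; the interplay of the two weighted spaces $\vec\ell^p(w)$ and $\vec\ell^p(v)$ is already absorbed into hypotheses~(i)--(ii) (and, ultimately, Thm.~\ref{thm:fact_weights}).
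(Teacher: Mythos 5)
Your proposal is correct and is essentially the paper's own argument: a Banach fixed-point iteration on $C(J;\mathcal{B}_R)$, with the a priori bound \eqref{eq:mu_f} giving the self-mapping property for $\tau_\mathrm{stop}-\tau_\init$ small and \eqref{eq:L_f} combined with \eqref{eq:L_M} giving a contraction constant of order $\tau_\mathrm{stop}-\tau_\init$. The only (cosmetic) difference is that the paper integrates the first-order system $\phi[\vec a]=(a_1,\dotsc,a_k,f(\cdot,\vec a,\mathcal{M}[\vec a]))$ once, whereas you integrate $f$ against the Taylor kernel $(\tau-\eta)^k/k!$ and take the $k$-jet; the two maps have the same fixed points and admit the same estimates (note only that for the lower components your deviation from $\vec a_\init$ also contains the polynomial terms in the initial data, not just iterated integrals of $f$, but these are likewise $\mathcal{O}(\tau_\mathrm{stop}-\tau_\init)$).
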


In particular, we can apply the above theorem to the cases studied in the previous section.
In the case of geometrically growing weights we can even obtain maximal or global solutions.

\begin{proposition}\label{prop:quasilinear-special}
  Consider Thm.~\ref{thm:quasilinear} with the weights
  \begin{enumerate}
    \item[\normalfont(a)] $v_n = w_n = \omega^n$ with $\omega > 0$ (geometrically growing weights), or
    \item[\normalfont(b)] $v_n = (2n!) \upsilon^{2n}$ and $w_n = (2n!) \omega^{2n}$ with $\upsilon,\omega > 0$ (factorially growing weights).
  \end{enumerate}
  Further, suppose that $\tau \mapsto V(\tau,\vec{b})$ is continuous for all $\vec{b} \in \mathcal{B}_R$, and there is $L_V > 0$ such that
  \begin{equation*}
    \abs{V(\tau,\vec{b})-V(\tau,\tilde{\vec{b}})} \leq L_V \norm{\vec{b}-\tilde{\vec{b}}}
  \end{equation*}
  for each $\vec{b}, \tilde{\vec{b}} \in \mathcal{B}_R$, uniformly for $\tau$ in some interval~$I'$.
  Then there exists $I = [\tau_\init,\tau_\fin] \subset I'$ such that the assumptions {\normalfont(i)--(iii)} of Thm.~\ref{thm:quasilinear} are satisfied.
\end{proposition}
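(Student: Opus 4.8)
The plan is to feed the $\vec a$-dependent potential $V_{\vec a}(\tau)\defn V(\tau,\vec a(\tau))$ into the evolution-operator theorems of Section~\ref{sec:moments} and thereby verify hypotheses~(i) and~(ii) of Thm~\ref{thm:quasilinear}. First I would fix a compact subinterval $I_0=[\tau_\init,\tau_1]\subset I'$ and record a uniform bound on the potential: since $\tau\mapsto V(\tau,\vec b_c)$ is continuous on the compact set $I_0$ and $V$ is $L_V$-Lipschitz in its second argument uniformly in $\tau$, one has $\abs{V(\tau,\vec b)}\le M_V\defn\sup_{\tau\in I_0}\abs{V(\tau,\vec b_c)}+L_V R$ for all $\tau\in I_0$ and $\vec b\in\mathcal B_R$. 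Those same two properties give joint continuity of $V$ on $I_0\times\mathcal B_R$, so for every $\vec a\in C(I_0;\mathcal B_R)$ the map $\tau\mapsto V_{\vec a}(\tau)$ lies in $C(I_0)$ with $\norm{V_{\vec a}}_\infty\le M_V$, and for two such paths $\abs{V_{\vec a}(\eta)-V_{\tilde{\vec a}}(\eta)}\le L_V\norm{\vec a(\eta)-\tilde{\vec a}(\eta)}$. Everything below is then an application of Thms~\ref{thm:geom_weights}--\ref{thm:pert-fact_weights} to this potential.

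\textbf{Case (a)} ($v=w$, $w_n=\omega^n$). Thm~\ref{thm:geom_weights} applies on all of $I_0$, so for each $\vec a$ it produces a bounded evolution operator $U_{\vec a}(\tau,\tau_\init)$ on $\vec\ell^p(w)$, and $\mathcal M[\vec a](\tau)\defn U_{\vec a}(\tau,\tau_\init)\mathcal M_\init$ is the unique solution of \eqref{eq:quasilinear-2}; it lies in $C^1(I_0;\vec\ell^p(v))$ by property~\ref{item:e3} together with the norm-continuity of $U_{\vec a}$ and the continuity of $\tau\mapsto S(\tau,\vec a(\tau))$. The bound of Thm~\ref{thm:geom_weights} gives $\norm{\mathcal M[\vec a](\tau)}_{p,w}\le\e^{C_\omega(\tau,\tau_\init)}\norm{\mathcal M_\init}_{p,w}$ with $C_\omega(\tau,\tau_\init)\le(2\omega+2\sqrt{1+M_V^2})\abs{I_0}\nfed C$ uniformly in $\vec a$ and $\tau$, which furnishes a finite $\mu_\mathcal M$; that is hypothesis~(i) with $I=I_0$. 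For~(ii) I would invoke Thm~\ref{thm:pert-geom_weights} with $\mathcal M=\tilde{\mathcal M}=\mathcal M_\init$: the first term drops and $\norm{U_{\vec a}\mathcal M_\init-U_{\tilde{\vec a}}\mathcal M_\init}_{p,w}\le 2\e^{2C}\norm{\mathcal M_\init}_{p,w}\int_{\tau_\init}^\tau\abs{V_{\vec a}(\eta)-V_{\tilde{\vec a}}(\eta)}\dif\eta\le 2L_V\e^{2C}\norm{\mathcal M_\init}_{p,w}\int_{\tau_\init}^\tau\norm{\vec a(\eta)-\tilde{\vec a}(\eta)}\dif\eta$, which is exactly \eqref{eq:L_M} with $L_\mathcal M=2L_V\e^{2C}\norm{\mathcal M_\init}_{p,w}$.

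\textbf{Case (b)} ($v_n=(2n)!\upsilon^{2n}$, $w_n=(2n)!\omega^{2n}$, with $\upsilon>\omega\ge1$ as required by Thm~\ref{thm:fact_weights}). The only new point is to shrink the interval so that the Cauchy--Kovalevskaya-type smallness condition holds uniformly: choose $I=[\tau_\init,\tau_\fin]\subset I_0$ short enough that $4\sqrt{1+M_V^2}\,\abs{I}\,K(\upsilon,\omega)<1$, so that for every admissible $\vec a$ one has $C_0(\tau_\init,\tau_\fin)\le 2\sqrt{1+M_V^2}\,\abs{I}\nfed\bar C_0$ with $2\bar C_0 K(\upsilon,\omega)<1$. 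Then the hypothesis of Thm~\ref{thm:fact_weights} holds for all $\vec a$ simultaneously, giving $U_{\vec a}(\tau,\tau_\init)\colon\vec\ell^p(w)\to\vec\ell^p(v)$, and $\mathcal M[\vec a]\defn U_{\vec a}(\cdot,\tau_\init)\mathcal M_\init$ is the unique $C^1(I;\vec\ell^p(v))$ solution of \eqref{eq:quasilinear-2} (using, as in the proof of that theorem, that $U_{\vec a}$ is bounded into some $\vec\ell^p(v_\varepsilon)$ and $S(\tau,\vec a)$ from $\vec\ell^p(v_\varepsilon)$ into $\vec\ell^p(v)$; uniqueness via $\partial_{\tau_0}U=-US$). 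Since the right-hand side of \eqref{eq:U-bound-factorial} is increasing in $C_0$ and $C_0\le\bar C_0$, it is bounded uniformly in $\vec a$, which yields $\mu_\mathcal M$ and hence~(i); and~(ii) follows from Thm~\ref{thm:pert-fact_weights} applied once more with $\mathcal M=\tilde{\mathcal M}=\mathcal M_\init$ (after possibly swapping the roles of $V_{\vec a}$ and $V_{\tilde{\vec a}}$, and with $\bar C_0$ playing the role of $C_0$), giving \eqref{eq:L_M} with $L_\mathcal M=2\bar c^2 L_V\norm{\mathcal M_\init}_{p,w}$ for the $\vec a$-uniform constant $\bar c$ produced by that theorem.

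I expect the only real obstacle to be a matter of bookkeeping, namely the order of the quantifiers in case~(b): one must fix the compact $I_0\subset I'$ first, extract the uniform bound $M_V$ there, and only afterwards shrink to $I\subset I_0$ so that $C_0 K(\upsilon,\omega)<1$ holds simultaneously for all $\vec a\in C(I;\mathcal B_R)$; $M_V$---and hence $\bar C_0$ and $\bar c$---must not be allowed to depend circularly on the final short interval. Hypothesis~(iii) concerns $f$ alone (together with the finite $\mu_\mathcal M$ just produced) and is independent of the choice of weights, so it is carried along unchanged. Everything else is a routine invocation of Thms~\ref{thm:geom_weights}, \ref{thm:pert-geom_weights}, \ref{thm:fact_weights} and \ref{thm:pert-fact_weights}.
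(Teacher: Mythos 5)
Your proposal is correct and follows exactly the route of the paper, whose entire proof is the one-line remark that the assertion follows from Thms.~\ref{thm:geom_weights} and~\ref{thm:pert-geom_weights} in case (a) and from Thms.~\ref{thm:fact_weights} and~\ref{thm:pert-fact_weights} in case (b); you have simply spelled out the details (the uniform bound $M_V$, the quantifier order in case (b), and the specialization $\mathcal{M}=\tilde{\mathcal M}=\mathcal{M}_\init$ in the perturbation theorems) that the paper leaves implicit.
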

\begin{proof}
  The assertion follows from Thms.~\ref{thm:geom_weights} and~\ref{thm:pert-geom_weights} for~(a), and from Thms.~\ref{thm:fact_weights} and~\ref{thm:pert-fact_weights} for~(b).
\end{proof}

\begin{proposition}\label{prop:quasilinear-global}
  If the assumptions of Prop.~\ref{prop:quasilinear-special} hold with geometrically growing weights (a), then the unique local solution $(a,\mathcal{M})$ of Thm.~\ref{thm:quasilinear} extends to a maximal solution (the solution exists up to a singularity of $V$ or $f$) or to a global solution (the solution exists for arbitrarily large times).
\end{proposition}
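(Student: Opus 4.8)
The plan is a standard continuation argument: glue local solutions from Thm.~\ref{thm:quasilinear} into a \emph{maximal} solution, and then show that a finite endpoint of the maximal interval forces the $k$-jet $\vec{a}$ either to blow up or to reach the boundary of the region where $V$ and $f$ satisfy the hypotheses of Prop.~\ref{prop:quasilinear-special} — i.e.\ a singularity of $V$ or $f$ in the stated sense. The only non-routine ingredient is an a priori bound on $\mathcal{M}$ that is uniform on bounded time intervals, and this is precisely where geometric weights (case (a)) are essential.

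First I would construct the maximal solution. By Prop.~\ref{prop:quasilinear-special}(a) together with Thm.~\ref{thm:quasilinear}, a solution $(a,\mathcal{M})$ of~\eqref{eq:quasilinear} exists on some $[\tau_\init,\tau_\mathrm{stop}]$. Consider the family of all solutions defined on subintervals $[\tau_\init,\sigma)$ whose $\vec{a}$-component stays in a ball on which the Prop.~\ref{prop:quasilinear-special} hypotheses hold. By the uniqueness part of Thm.~\ref{thm:quasilinear}, any two members of this family agree on the overlap of their domains, so the union is again a solution, defined on $[\tau_\init,\tau_{\max})$ where $\tau_{\max}$ is the supremum of the attainable endpoints. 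If $\tau_{\max}$ is infinite (which is possible since, with geometric weights, the a priori bound below never degenerates) we are in the global case and there is nothing to prove; so assume $\tau_{\max} < \infty$ and, towards a contradiction, that $\vec{a}(\tau)$ remains in some ball $\mathcal{B}_{R_0}$ on which $V$ and $f$ are regular, for all $\tau \in [\tau_\init,\tau_{\max})$.

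Next I would derive the a priori bounds. Since $\vec{a} \in C([\tau_\init,\tau_{\max});\mathcal{B}_{R_0})$, the potential $\eta \mapsto V(\eta,\vec{a}(\eta))$ is continuous and bounded on $[\tau_\init,\tau_{\max})$, so by Thm.~\ref{thm:geom_weights} the evolution operator along this potential obeys ${\norm{U(\tau,\tau_\init)}}_{p,w} \leq \e^{C_\omega(\tau_{\max},\tau_\init)}$, a finite bound uniform in $\tau$ because the integrand in~\eqref{eq:C(tau,tau_0)} is bounded. Hence ${\norm{\mathcal{M}(\tau)}}_{p,w} \leq \e^{C_\omega(\tau_{\max},\tau_\init)} {\norm{\mathcal{M}_\init}}_{p,w}$ is bounded; by~\eqref{eq:mu_f} this gives $\abs{\partial_\tau^{k+1}a(\tau)} = \abs{f(\tau,\vec{a},\mathcal{M})} \leq \mu_f$, so $a^{(k)}$ is Lipschitz and $\vec{a}$ extends continuously to $[\tau_\init,\tau_{\max}]$ with $\vec{a}(\tau_{\max}) \in \overline{\mathcal{B}_{R_0}}$; and since $S(\tau,\vec{a})$ is bounded on $\vec\ell^p(w)$ uniformly on $[\tau_\init,\tau_{\max}]$, Eq.~\eqref{eq:quasilinear-2} makes $\mathcal{M}$ Lipschitz into $\vec\ell^p(w)$, so $\mathcal{M}$ extends continuously with $\mathcal{M}(\tau_{\max}) \in \vec\ell^p(w) = \vec\ell^p(v)$.

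Finally I would restart Thm.~\ref{thm:quasilinear} at $\tau_{\max}$: the data $\vec{a}(\tau_{\max})$ lies in the interior of a slightly larger ball and $\mathcal{M}(\tau_{\max}) \in \vec\ell^p(w)$, so (via Prop.~\ref{prop:quasilinear-special}(a)) the hypotheses hold on an interval $[\tau_{\max},\tau_{\max}+\delta]$; the new solution glues with the old one, matching one-sided derivatives at $\tau_{\max}$ because both sides satisfy the same (continuous) right-hand side there, yielding a solution of the required regularity on $[\tau_\init,\tau_{\max}+\delta]$ and contradicting maximality. Therefore, if $\tau_{\max}$ is finite, the boundedness assumption must fail, i.e.\ ${\norm{\vec{a}(\tau)}} \to \infty$ as $\tau \uparrow \tau_{\max}$ or $\vec{a}(\tau)$ approaches a point where $V$ or $f$ ceases to be regular — a singularity of $V$ or $f$. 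The main obstacle is the uniform-in-time control of $\mathcal{M}$ furnished by Thm.~\ref{thm:geom_weights}; for faster-than-geometric weights only the short-time bound of Thm.~\ref{thm:fact_weights} is available, which is exactly why the maximal/global conclusion is confined to case~(a).
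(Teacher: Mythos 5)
Your proposal is correct and is precisely the gluing/continuation argument that the paper's own (one-line) proof invokes, with the key point — that Thm.~\ref{thm:geom_weights} gives a uniform-in-time bound on the evolution operator for geometric weights, so $\mathcal{M}$ stays controlled in the \emph{same} space $\vec\ell^p(w)$ and the local existence theorem can be restarted at the endpoint — correctly identified as the reason the conclusion is confined to case~(a). No gaps.
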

\begin{proof}
  This is shown by gluing local solutions of the initial value problem (using Thm.~\ref{thm:quasilinear} and Prop.~\ref{prop:quasilinear-special}).
\end{proof}

\begin{remark}
  In the case of factorially growing weights, the situation is considerably more complicated because, a priori, already the solution for the dynamics of the moments $\mathcal{M}$ exists only for a bounded time interval.
\end{remark}

In the case of the SCE as described in Sect.~\ref{sec:sce}, $V$ and $f$ (as we will see below) are rational functions in the derivatives of the scale factor~$a$ and $\log(a)$.
Therefore, the assumptions of Thm.~\ref{thm:quasilinear} imply that we can solve~\eqref{eq:quasilinear} with the $k$-jet $\vec{a}$ of $a$ inside a ball~$\mathcal{B}_R$ and thus away from the poles of~$V$ and~$f$.
In the case of geometrically growing weights, we even have maximal (or global) solution.
Moreover, if $V$ and $f$ are smooth (as in our application), also the solution is smooth:

\begin{proposition}\label{prop:quasilinear-smooth}
  Suppose that $(a,\mathcal{M}) \in C^{k+1}(I) \times C^1(I; \vec\ell^p(w))$ is a solution of~\eqref{eq:quasilinear} in the interval $I \subset \RR$ with $\vec{a} \in C^1(I; \mathcal{B}_R)$ (for $\mathcal{B}_R$ as in Thm.~\ref{thm:quasilinear}).
  If $V \in C^l(I \times \mathcal{B}_R)$ and $f \in C^l(I \times \mathcal{B}_R \times \vec\ell^p(w))$ for some $l \in \NN$, then $a \in C^{k+l}(I)$ and $\mathcal{M} \in C^l(I; \ell^p(w))$.
  In particular, if $V$ and $f$ are smooth, the solution is smooth.
\end{proposition}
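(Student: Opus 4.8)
The plan is a routine bootstrap on the order of differentiability, run simultaneously for $a$ and $\mathcal{M}$ and powered by the two smoothness results already established — Prop.~\ref{prop:geometrically-smooth} in case~(a) and Prop.~\ref{prop:factorially-smooth} in case~(b) of Prop.~\ref{prop:quasilinear-special} — together with the chain rule in Banach spaces.

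The two facts that drive the iteration are the following; fix $\tau_0 \in I$. \emph{Regularity of $a$:} if $\vec{a} \in C^j(I;\mathcal{B}_R)$ and $\mathcal{M} \in C^j(I;\vec\ell^p(w))$ for some $j \le l$, then, since $f \in C^l(I\times\mathcal{B}_R\times\vec\ell^p(w))$, the composition $\tau \mapsto f(\tau,\vec{a}(\tau),\mathcal{M}(\tau))$ lies in $C^j(I)$ (Fa\`a di Bruno in Banach spaces, legitimate because $f$ is $C^l$ in the Fr\'echet sense in all of its arguments, the $\vec\ell^p(w)$-argument included); integrating~\eqref{eq:quasilinear-1} $k+1$ times then gives $a \in C^{k+1+j}(I)$, i.e.\ $\vec{a} \in C^{j+1}(I;\mathcal{B}_R)$. \emph{Regularity of $\mathcal{M}$:} under the same hypotheses $V \in C^l(I\times\mathcal{B}_R)$ forces the potential $\tau \mapsto V(\tau,\vec{a}(\tau))$ into $C^j(I)$, so the evolution operator $U_{\vec{a}}(\tau,\tau_0)$ of the linear equation~\eqref{eq:quasilinear-2} satisfies $\partial_\tau^{j+1}U_{\vec{a}}$ bounded by Prop.~\ref{prop:geometrically-smooth} resp.\ Prop.~\ref{prop:factorially-smooth}; since $\mathcal{M}(\tau) = U_{\vec{a}}(\tau,\tau_0)\mathcal{M}(\tau_0)$ by uniqueness of solutions to~\eqref{eq:quasilinear-2}, this yields $\mathcal{M} \in C^{j+1}(I;\vec\ell^p(w))$. (In case~(b) one inserts the auxiliary weight $v_\varepsilon$ exactly as in the proof of Thm.~\ref{thm:fact_weights}; since at most $l$ differentiations are ever performed, only finitely many such weight reductions occur and they are harmless.)

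Granting these two facts, the proposition follows by finite induction on $j$: the base case $j=1$ is precisely the hypothesis $a \in C^{k+1}(I)$, $\mathcal{M} \in C^1(I;\vec\ell^p(w))$, and the step $j \to j+1$ for $j < l$ applies the two facts in either order. After $l$ steps one reaches $a \in C^{k+l}(I)$ and $\mathcal{M} \in C^l(I;\vec\ell^p(w))$, which is the claim. The final clause is immediate: if $V$ and $f$ are smooth the hypothesis holds for every $l \in \NN$, hence $a \in C^\infty(I)$ and $\mathcal{M} \in C^\infty(I;\vec\ell^p(w))$.

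I expect the only delicate point to be the regularity of $\mathcal{M}$ in the factorially-weighted case~(b): there $S(\tau,\vec{a})$ is unbounded on $\vec\ell^p(w)$, so $\partial_\tau\mathcal{M} = S(\tau,\vec{a})\mathcal{M}$ cannot simply be differentiated once more inside $\vec\ell^p(w)$, and one must route the argument through the evolution operator and the weight-loss bookkeeping of Prop.~\ref{prop:factorially-smooth}, where each extra derivative is absorbed by passing from $\vec\ell^p(w)$ to a slightly weaker weight and back. In the geometrically-weighted case~(a) one may instead bypass $U_{\vec{a}}$ altogether, since $S(\tau,\vec{a})$ is then bounded on $\vec\ell^p(w)$: differentiating~\eqref{eq:quasilinear-2} directly, $\partial_\tau^2\mathcal{M} = (\partial_\tau S)\mathcal{M} + S^2\mathcal{M}$ and so on, bootstraps the time-regularity of $\mathcal{M}$ alongside that of $a$.
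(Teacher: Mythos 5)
Your proof is correct and follows essentially the same route as the paper, whose entire argument is the one-line remark that the claim ``follows by iterated derivation and substitution of~\eqref{eq:quasilinear} together with Prop.~\ref{prop:factorially-smooth}''; your bootstrap on $j$ is simply that sentence written out in full. Your extra care about the weight loss $w \to v_\varepsilon \to v$ in the factorially-weighted case is a welcome refinement that the paper's proof leaves implicit.
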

\begin{proof}
  Follows by iterated derivation and substitution of \eqref{eq:quasilinear} together with Prop.~\ref{prop:factorially-smooth}.
\end{proof}

The regularity of the scale factor is especially relevant when discussing the regularity of the state.
In fact, the state can only be Hadamard if the scale factor is smooth.
In that case, if the state is initially Hadamard, it will be Hadamard for all time, because the Hadamard property is propagated on smooth spacetimes \cite{radzikowski-verch}.

\subsection{Continuous dependence on initial data and parameters}

To study the continuous dependence of the abstract SCE~\eqref{eq:quasilinear} on its initial data, the potential $V$ and the `back-reaction' function $f$, we consider another quasilinear equation of the form~\eqref{eq:quasilinear}:
\begin{subequations}\label{eq:quasilinear-tilde}\begin{empheq}[left=\empheqlbrace]{align}
  \partial_\tau^{k+1} \tilde{a}(\tau) &= \tilde{f}(\tau, \tilde{\vec a}, \tilde{\mathcal M}), \\
  \partial_\tau \tilde{\mathcal M}(\tau) &= \tilde{S}(\tau, \tilde{\vec a}) \tilde{\mathcal M}(\tau),
\end{empheq}\end{subequations}
with initial conditions $\tilde{\vec a}_\init \in \RR^{k+1}$ and $\tilde{\mathcal M}_\init \in \vec\ell^p(w)$. Above, $\tilde{S}(\tau, \tilde{\vec a})$ is given by~\eqref{eq:S-defn} for a potential $\tilde{V}(\tau, \tilde{\vec a})$.

\begin{theorem}\label{thm:continuous}
  Suppose that the coefficients of~\eqref{eq:quasilinear} and~\eqref{eq:quasilinear-tilde} satisfy the assumptions {\normalfont (i), (iii)} of Thm.~\ref{thm:quasilinear} for the same constants.
  Furthermore, (with the obvious notation) assume that, instead of assumption {\normalfont (ii)} of Thm.~\ref{thm:quasilinear}, we have:
  \begin{enumerate}
    \item[(ii')]
      There is $L_{\mathcal M}' > 0$ such that, for all $\vec{a}, \tilde{\vec a} \in C(I; \mathcal{B}_R)$,
      \begin{equation*}
        {\norm{\mathcal{M}[\vec a](\tau) - \tilde{\mathcal M}[\tilde{\vec a}](\tau)}}_{p,v} \leq L_{\mathcal M}' \mleft( {\norm{\mathcal{M}_\init - \tilde{\mathcal M}_\init}}_{p,w} + \int_{\tau_\init}^\tau \norm{\vec{a}(\eta) - \tilde{\vec a}(\eta)} \dif\eta \mright).
      \end{equation*}
  \end{enumerate}
  Then the two solutions $(a, \mathcal{M})$ and $(\tilde{a}, \tilde{\mathcal M})$ on $J = [\tau_\init,\tau_\mathrm{stop}] \subset I$ satisfy
  \begin{equation}\label{eq:dependence-ineq}
    \norm{\vec{a}(\tau) - \tilde{\vec a}(\tau)} \leq \norm{\vec{a}_\init - \tilde{\vec a}_\init} \e^{K(\tau,\tau_\init)} + (\mu_f' + L_f L_\mathcal{M}' \norm{\mathcal{M}_\init - \tilde{\mathcal M}_\init}_{p,w}) \int_{\tau_\init}^\tau \e^{K(\tau,\eta)} \dif\eta
  \end{equation}
  for $\tau \in J$, where we set $K(\tau,\eta) = (1+L_f) (\tau - \eta) + \frac12 L_f L_\mathcal{M}' (\tau - \eta)^2$ and
  \begin{equation*}
    \mu_f' = \sup{} \abs{f(\tau, \vec{b}, \mathcal{M}) - \tilde{f}(\tau, \vec{b}, \mathcal{M})}
  \end{equation*}
  with the supremum being taken over over $\tau \in J$, $\vec{b} \in \mathcal{B}_R$ and $\norm{\mathcal M}_{p,v} \leq \mu_\mathcal{M}$.
\end{theorem}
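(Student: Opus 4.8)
The plan is to reduce the comparison to a scalar Grönwall-type inequality for $\phi(\tau) \defn \norm{\vec{a}(\tau) - \tilde{\vec a}(\tau)}$ on the common existence interval $J = [\tau_\init,\tau_\mathrm{stop}]$, which may be taken the same for both systems since by hypothesis they satisfy the assumptions of Thm.~\ref{thm:quasilinear} with identical constants (so the construction of $\tau_\mathrm{stop}$ there yields the same value). First I would pass to first-order form for the $k$-jets: writing $G(\tau,\vec{b},\mathcal{M}) \defn \bigl(b_1,\dotsc,b_k,f(\tau,\vec{b},\mathcal{M})\bigr)$ one has $\vec{a}(\tau) = \vec{a}_\init + \int_{\tau_\init}^\tau G\bigl(\eta,\vec{a}(\eta),\mathcal{M}[\vec{a}](\eta)\bigr)\dif\eta$, and likewise for $\tilde{\vec a}$ with the corresponding $\tilde{G}$. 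Subtracting and applying the triangle inequality gives
\[
  \phi(\tau) \le \norm{\vec{a}_\init - \tilde{\vec a}_\init} + \int_{\tau_\init}^\tau \norm[\big]{G\bigl(\eta,\vec{a},\mathcal{M}[\vec{a}]\bigr) - \tilde{G}\bigl(\eta,\tilde{\vec a},\tilde{\mathcal{M}}[\tilde{\vec a}]\bigr)} \dif\eta .
\]

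The next step is to bound the integrand. The first $k$ components of $G-\tilde{G}$ are $a^{(j)}-\tilde{a}^{(j)}$, $j=1,\dotsc,k$, contributing at most $\phi(\eta)$; the last component splits as $\bigl(f(\eta,\vec{a},\mathcal{M}[\vec{a}]) - f(\eta,\tilde{\vec a},\tilde{\mathcal{M}}[\tilde{\vec a}])\bigr) + \bigl(f(\eta,\tilde{\vec a},\tilde{\mathcal{M}}[\tilde{\vec a}]) - \tilde{f}(\eta,\tilde{\vec a},\tilde{\mathcal{M}}[\tilde{\vec a}])\bigr)$. On $J$ both $k$-jets remain in $\mathcal{B}_R$ and both moment sequences are bounded by $\mu_\mathcal{M}$ (conclusion of Thm.~\ref{thm:quasilinear}), so the first bracket is $\le L_f\bigl(\phi(\eta) + {\norm{\mathcal{M}[\vec{a}](\eta) - \tilde{\mathcal{M}}[\tilde{\vec a}](\eta)}}_{p,v}\bigr)$ by~\eqref{eq:L_f}, while the second is $\le \mu_f'$ by definition of $\mu_f'$. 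Invoking hypothesis~(iii') to replace ${\norm{\mathcal{M}[\vec{a}](\eta) - \tilde{\mathcal{M}}[\tilde{\vec a}](\eta)}}_{p,v}$ by $L_\mathcal{M}'\bigl({\norm{\mathcal{M}_\init - \tilde{\mathcal{M}}_\init}}_{p,w} + \int_{\tau_\init}^\eta \phi(s)\dif s\bigr)$, and abbreviating $c_0 \defn \norm{\vec{a}_\init - \tilde{\vec a}_\init}$ and $c_1 \defn \mu_f' + L_f L_\mathcal{M}'{\norm{\mathcal{M}_\init - \tilde{\mathcal{M}}_\init}}_{p,w}$, one obtains
\[
  \phi(\tau) \le c_0 + c_1(\tau-\tau_\init) + (1+L_f)\int_{\tau_\init}^\tau \phi(\eta)\dif\eta + L_f L_\mathcal{M}' \int_{\tau_\init}^\tau\!\!\int_{\tau_\init}^\eta \phi(s)\dif s\,\dif\eta .
\]

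Finally, I would resolve this inequality. Interchanging the order of integration turns the last term into $\int_{\tau_\init}^\tau(\tau-\eta)\phi(\eta)\dif\eta$, so the estimate reads $\phi(\tau) \le g(\tau) + \int_{\tau_\init}^\tau\bigl[(1+L_f) + L_f L_\mathcal{M}'(\tau-\eta)\bigr]\phi(\eta)\dif\eta$ with $g(\tau) = c_0 + c_1(\tau-\tau_\init)$ non-negative and non-decreasing (note $c_1 \ge 0$). A Grönwall-type estimate for this Volterra inequality — in which the integral of the term linear in $\tau-\eta$ in the kernel produces exactly the quadratic contribution $\tfrac12 L_f L_\mathcal{M}'(\tau-\eta)^2$ in $K$ — then gives $\phi(\tau) \le g(\tau_\init)\e^{K(\tau,\tau_\init)} + \int_{\tau_\init}^\tau g'(\eta)\e^{K(\tau,\eta)}\dif\eta$, which, since $g(\tau_\init)=c_0$ and $g'\equiv c_1$, is precisely~\eqref{eq:dependence-ineq}. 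The only genuinely non-routine point is keeping track of the memory term produced by~(iii') — reflecting the dependence of $\mathcal{M}$ on the entire past of the trajectory, which effectively makes the Grönwall growth rate grow linearly in time — and feeding it correctly into the Grönwall step; the rest is bookkeeping with the Lipschitz constants on the compact interval $J$.
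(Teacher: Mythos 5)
Your proposal is correct and follows essentially the same route as the paper's proof: write both solutions in integral form, split the difference of the right-hand sides into a $\mu_f'$ term plus a Lipschitz term, feed in (iii') to produce the memory term $\int_{\tau_\init}^\tau(\tau-\eta)\norm{\vec{a}(\eta)-\tilde{\vec a}(\eta)}\dif\eta$, and close with the generalized Gr\"onwall inequality whose kernel integrates to $K(\tau,\eta)$. The only differences are cosmetic (you apply the Lipschitz bound to $f$ rather than $\tilde f$, and you make the interchange of the order of integration explicit where the paper writes the $(\tau-\eta)$ kernel directly).
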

The proof can be found in Sect.~\ref{sec:proofs}.

The continuous dependence of the solution on the initial data also implies that the effect of any error in the initial data is bounded, with an error bound that increases in time.
This is of particular importance in this approach, as the state represented via its moments essentially requires an infinite amount of initial data (even if this can in principle be encoded in a single function).
The error caused by a truncation of the sequence of moments at some order can thus be controlled.

\subsection{Application to the traced SCE}
\label{sub:solved-traced-SCE}

In this subsection we show how the results above may be applied to the SCE as discussed in Sect.~\ref{sec:sce}.

Recall that $\omega_2^\mathrm{reg} = \omega_2 - H_n$, where $n$ is sufficiently large.
It follows by a straightforward calculation from the definitions of Sect.~\ref{sec:scalar} that
\begin{align*}
  [\omega_2^\mathrm{reg}] &= \frac{1}{a^2} \bigl( \mathcal{M}_{\varphi\varphi,0} + \mathcal{H}_{\varphi\varphi,1} - \tilde{\mathcal{H}}_{\varphi\varphi,1} \bigr) \Bigl|_{r=0}, \\
  [(\one \otimes \upDelta) \omega_2^\mathrm{reg}] &= \frac{1}{a^2} \bigl( \mathcal{M}_{\varphi\varphi,1} + \upDelta_r (\mathcal{H}_{\varphi\varphi,2} - \tilde{\mathcal{H}}_{\varphi\varphi,2}) \bigr) \Bigl|_{r=0}, \\
  \begin{split}
    [(\partial_\tau \otimes \partial_\tau) \omega_2^\mathrm{reg}] &= \frac{1}{a^2} \bigl( \mathcal{M}_{\pi\pi,0} + \mathcal{H}_{\pi\pi,2} - \tilde{\mathcal{H}}_{\pi\pi,2} \bigr) + \frac{a^{\prime\,2}}{a^4} \bigl( \mathcal{M}_{\varphi\varphi,0} + \mathcal{H}_{\varphi\varphi,1} - \tilde{\mathcal{H}}_{\varphi\varphi,1} \bigr) \\&\quad - 2\frac{a'}{a^3} \bigl( \mathcal{M}_{(\varphi\pi),0} + \mathcal{H}_{(\varphi\pi),2} - \tilde{\mathcal{H}}_{(\varphi\pi),2} \bigr) \Bigl|_{r=0}.
  \end{split}
\end{align*}

Thus, combining the results from Sects.~\ref{sec:scalar} and~\ref{sec:sce}, the traced SCE~\eqref{eq:traced-SCE} can be expanded to the rather long equation
\begin{equation*}\begin{split}
  0 &=
  \mleft( -12 (3 c_3 + c_4) -\frac{1}{480\uppi^2} + \frac{6\xi-1}{48\uppi^2} + \frac{(6\xi-1)^2}{16\uppi^2} \log(a \lambda_0) \mright) \\&\quad \times \mleft( \frac{a^{(4)}}{a^5} - 4 \frac{a^{(3)} a'}{a^6} - 3 \frac{a^{\prime\prime\,2}}{a^6} + 6 \frac{a'' a^{\prime\,2}}{a^7} \mright) \\&\quad + \frac{(6\xi-1)^2}{32\uppi^2} \mleft( 4 \frac{a^{(3)} a'}{a^6} + 3 \frac{a^{\prime\prime\,2}}{a^6} - 10 \frac{a'' a^{\prime\,2}}{a^7} \mright) + \frac{1}{240\uppi^2} \mleft( -\frac{a'' a^{\prime\,2}}{a^7} + \frac{a^{\prime\,4}}{a^8} \mright) \\&\quad + \mleft( \frac{6}{\kappa} + m^2 \Bigl( -6 c_2 + \frac{1}{48\uppi^2} + \frac{6\xi-1}{8\uppi^2} \bigl(1 + \log(a \lambda_0) \bigr) \Bigr) \mright) \frac{a''}{a^3} \\&\quad + \frac{(6\xi-1) m^2}{16\uppi^2} \frac{a^{\prime\,2}}{a^4} + m^4 \mleft( 4 c_1 + \frac{1}{32\uppi^2} + \frac{1}{8\uppi^2} \log(a \lambda_0) \mright) - \frac{m^2}{a^2} \mathcal{M}_{\varphi\varphi,0} \\&\quad + (6\xi-1) \mleft( \Bigl( 6\xi \frac{a''}{a^5} - \frac{a^{\prime\,2}}{a^6} + \frac{m^2}{a^2} \Bigr) \mathcal{M}_{\varphi\varphi,0} + 2 \frac{a'}{a^5} \mathcal{M}_{(\varphi\pi),0} - \frac{1}{a^4} \bigl( \mathcal{M}_{\pi\pi,0} + \mathcal{M}_{\varphi\varphi,1} \bigr) \mright).
\end{split}\end{equation*}
We remark that the first line of this equation is due to terms proportional to $\Box R$.

In the general case, this equation can be rewritten as quasilinear fourth order equation of the form
\begin{equation}\label{eq:SCE-ODE}
  \partial_\tau a^{(3)} = f_\mathrm{tr}(a, a', a'', a^{(3)}, \mathcal{M}_0, \mathcal{M}_1),
\end{equation}
which can be solved using Thm.~\ref{thm:quasilinear} (see also Props.~\ref{prop:quasilinear-special} and~\ref{prop:quasilinear-global}).
Note that the right-hand side has poles at $a=0$ and for $a$ such that
\begin{equation}\label{eq:log-singularity}
  30 (6\xi-1)^2 \log(a \lambda_0) = 11 + 5760\uppi^2 (3c_3+c_4) - 60\xi,
\end{equation}
which must be taken into account for the choice of the ball $\mathcal{B}_R$ in Thm.~\ref{thm:quasilinear}.

We do not see the instability near the Minkowski solution described in~\cite{suen:stability1,suen:stability2}, viz., the equation is continuous in a neighbourhood of the Minkowski solution.
The reason is, clearly, that we based our analysis on the fourth order equation given by the traced SCE.
Indeed, inspection of \eqref{eq:T-from-T_00} immediately reveals, that the third order equation given by the energy component of the stress-energy tensor can only be used if $a'$ is bounded away from zero and in that case it is sufficient to work with the third order equation, see also the next subsection.

However, in the non-conformally coupled case the singularity~\eqref{eq:log-singularity} appears.
It can be seen that the position of this singularity depends on the choice of $\lambda_0$, relating the length scales~$\mu$ and~$\lambda$ in~\eqref{eq:mu-lambda-convention}.
While the choice of $\lambda$ is related to the value of the renormalization parameters $c_1, c_2, c_3$ and $c_4$, the value of $\mu$ (and thus $\lambda_0$) is  related to our construction of $\mathcal{M}$.
Of course, changing $\lambda_0$ is not without effect and requires a corresponding change of $\mathcal{M}$.
It is important to note that such a modification of $\mathcal{M}$ will generally cause it to fall out of the sequence space.

There is only one special case in which this equation reduces to a lower than fourth order equation: if
\begin{equation}\label{eq:2nd-order-conditions}
  \xi = \frac16
  \quad\text{and}\quad
  3c_3+c_4 = -\frac{1}{5760\uppi^2},
\end{equation}
the fourth and third order terms drop out and the equation can be rewritten as
\begin{equation*}\begin{split}
   a'' &= \mleft( \frac{a^{\prime\,2}}{a^4} - 1440\uppi^2 \kappa^{-1} + (1440\uppi^2 c_2 - 5) m^2 \mright)^{-1} \\&\quad \times \mleft( \frac{a^{\prime\,4}}{a^5} + \frac12 m^4 a^3 \bigl( 1920\uppi^2 c_1 + 15 + 60 \log(a \lambda_0) \bigr) - 240\uppi^2 m^2 a \mathcal{M}_{\varphi\varphi,0} \mright)
\end{split}\end{equation*}
and thus has a the correct form to be solved by the methods above.
This equation is equivalent to that already considered in~\cite{pinamonti-siemssen}.
Note that the right-hand side has poles at $a=0$ (`big bang/crunch') and
\begin{equation}\label{eq:2nd-order-singularity}
  \frac{a^{\prime\,2}}{a^4} = 1440\uppi^2 \kappa^{-1} - (1440\uppi^2 c_2 - 5) m^2,
\end{equation}
viz., for a certain value of the square of the Hubble parameter ($a^{-2} a'$ in conformal time), which must be taken into account for the choice of the ball~$\mathcal{B}_R$ in Thm.~\ref{thm:quasilinear}.
Also the instability~\eqref{eq:2nd-order-singularity} can be considered irrelevant because $c_2$ should be set to zero as it corresponds to a renormalization of Newton's gravitational constant, which has already been measured, and thus, if the scalar field mass is much smaller than the Planck mass, the singularity occurs for a Hubble parameter close to the inverse Planck time (very many orders of magnitude larger than the currently observed value).

We sum up the results of this subsection:
\begin{theorem}\label{thm:traced_sce}
  The traced SCE is of the form~\eqref{eq:quasilinear} and can be locally solved (using Thm.~\ref{eq:quasilinear} and Props.~\ref{prop:quasilinear-special}, \ref{prop:quasilinear-global}, \ref{prop:quasilinear-smooth}), yielding a unique smooth solution which depends continuously on the initial data and parameters.
  In the case of geometrically growing weights for the moments $\mathcal{M}$ (i.e., vacuum-like states), a maximal or global solution exists.
  In the generic fourth order case, the maximal solution exists up to a big bang/crunch $a = 0$, the logarithmic singularity~\eqref{eq:log-singularity}, or a blow-up $a \to \infty$ (big rip).
  In the second order case (i.e., \eqref{eq:2nd-order-conditions} are satisfied), the maximal solutions exists up to a big bang/crunch, the singularity~\eqref{eq:2nd-order-singularity}, or a big rip.
\end{theorem}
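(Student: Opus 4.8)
The plan is to recognize the traced SCE, in the form~\eqref{eq:SCE-ODE}, as an instance of the abstract system~\eqref{eq:quasilinear} with $k=3$, $\vec a = (a,a',a'',a^{(3)})$ and $f = f_\mathrm{tr}$, and then to check the hypotheses of Thm.~\ref{thm:quasilinear} (together with Props.~\ref{prop:quasilinear-special}, \ref{prop:quasilinear-global}, \ref{prop:quasilinear-smooth} and Thm.~\ref{thm:continuous}), reading off the continuation obstructions from the poles of $V$ and $f_\mathrm{tr}$. The first step is to locate those poles. The potential $V(\tau,\vec a) = (6\xi-1)a''/a + a^2 m^2$ is rational in the $2$-jet of $a$, hence smooth and locally Lipschitz away from $a=0$; by the long expansion of the traced SCE preceding~\eqref{eq:SCE-ODE}, $f_\mathrm{tr}$ is rational in $\vec a$ and in $\log a$, with poles only at $a=0$ and on the locus~\eqref{eq:log-singularity}. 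Given an initial jet $\vec a_\init$ away from these loci, I would choose the radius $R$ in Thm.~\ref{thm:quasilinear} so small that the ball $\mathcal{B}_R$ avoids $\{a=0\}$ and~\eqref{eq:log-singularity}; on such a ball $V$ and $f_\mathrm{tr}$ are smooth with bounded derivatives, so in particular the Lipschitz bound on $V$ demanded by Prop.~\ref{prop:quasilinear-special} holds, and, since only the two components $\mathcal{M}_0,\mathcal{M}_1$ enter $f_\mathrm{tr}$ and $\abs{\mathcal{M}_i} \le v_i{\norm{\mathcal{M}}}_{p,v}$ for $i=0,1$, the bounds~\eqref{eq:mu_f} and~\eqref{eq:L_f} on $f$ follow as well.

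Next I would supply the conditions~(i) and~(ii) of Thm.~\ref{thm:quasilinear} on the moment map $\vec a \mapsto \mathcal{M}[\vec a]$. For geometrically growing weights $v_n = w_n = \omega^n$ this is precisely Thm.~\ref{thm:geom_weights} (a unique $\mathcal{M}[\vec a] \in C^1(I;\vec\ell^p(v))$ with a bound $\mu_\mathcal{M}$ uniform over $\vec a \in C(I;\mathcal{B}_R)$, since $C_\omega(\tau,\tau_0)$ is controlled there) together with Thm.~\ref{thm:pert-geom_weights}, which yields the Lipschitz estimate~\eqref{eq:L_M} after bounding $\int\abs{V(\eta,\vec a)-V(\eta,\tilde{\vec a})}\dif\eta \le L_V\int\norm{\vec a - \tilde{\vec a}}\dif\eta$. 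For factorially growing weights $v_n = (2n)!\upsilon^{2n}$, $w_n = (2n)!\omega^{2n}$ the same works via Thms.~\ref{thm:fact_weights} and~\ref{thm:pert-fact_weights}, now with $I = [\tau_\init,\tau_\fin]$ chosen short enough that $C_0(\tau_\init,\tau_\fin)K(\upsilon,\omega)<1$ uniformly on $\mathcal{B}_R$ (using that $\sqrt{1+V^2}$ is bounded there) --- this is exactly the content of Prop.~\ref{prop:quasilinear-special}. Thm.~\ref{thm:quasilinear} then produces the unique local solution $(a,\mathcal{M}) \in C^{k+1}(J)\times C^1(J;\vec\ell^p(v))$; its smoothness follows from Prop.~\ref{prop:quasilinear-smooth} because $V$ and $f_\mathrm{tr}$ are smooth on $\mathcal{B}_R$, and the continuous dependence on $(\vec a_\init,\mathcal{M}_\init)$ and on the parameters from Thm.~\ref{thm:continuous}, whose extra hypothesis~(iii$'$) is once more just Thm.~\ref{thm:pert-geom_weights} (resp.~Thm.~\ref{thm:pert-fact_weights}).

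For the continuation statements I would run the usual maximality argument. In the geometric-weight case, Prop.~\ref{prop:quasilinear-global} glues the local solutions into a maximal one, which persists as long as the jet $\vec a(\tau)$ stays in a compact ball avoiding the poles of $V$ and $f_\mathrm{tr}$ and remains bounded; hence the maximal solution terminates only at $a\to 0$ (big bang/crunch), at the logarithmic singularity~\eqref{eq:log-singularity}, or at $a\to\infty$ (big rip), and otherwise is global. Under the two conditions~\eqref{eq:2nd-order-conditions} the fourth- and third-order terms in the expanded traced SCE cancel and~\eqref{eq:SCE-ODE} collapses to the second-order equation displayed just below~\eqref{eq:2nd-order-conditions}, whose right-hand side has poles only at $a=0$ and on the locus~\eqref{eq:2nd-order-singularity}; the same maximality argument (now with $k=1$) then gives continuation up to big bang/crunch, the singularity~\eqref{eq:2nd-order-singularity}, or big rip.

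I expect the only genuine subtlety to be bookkeeping: arranging the constants $\mu_\mathcal{M}$, $L_{\mathcal M}$, $\mu_f$, $L_f$ --- and, for factorial weights, the smallness of $\tau_\fin-\tau_\init$ --- \emph{simultaneously} and \emph{uniformly} over $\vec a \in C(I;\mathcal{B}_R)$, given that $V$ feeds into $S$, which determines $\mathcal{M}$, which feeds back into $f$. This circular dependence is exactly what the fixed-point scheme behind Thm.~\ref{thm:quasilinear} absorbs, and it closes here because $V$ depends only on the $2$-jet of $a$ and only finitely many moments enter $f_\mathrm{tr}$; once $R$ (and, in the factorial case, the length of $I$) are fixed appropriately, every constant is uniform and the cited results apply verbatim.
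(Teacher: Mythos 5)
Your proposal is correct and follows essentially the same route as the paper, which itself presents this theorem as a summary of Sect.~5.3: identify the expanded traced SCE with the quasilinear form~\eqref{eq:SCE-ODE} (with $k=3$, or $k=1$ under~\eqref{eq:2nd-order-conditions}), choose $\mathcal{B}_R$ away from the poles at $a=0$ and~\eqref{eq:log-singularity} (resp.~\eqref{eq:2nd-order-singularity}), and invoke Thm.~5.1 together with Props.~5.2--5.4 and Thm.~5.5. Your explicit verification of hypotheses (i)--(iii) via Thms.~4.5/4.7 and 4.9/4.10, and the remark that only $\mathcal{M}_0,\mathcal{M}_1$ enter $f_\mathrm{tr}$, merely spells out what the paper delegates to Prop.~5.2.
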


Let us also briefly explain the philosophy behind the choice of the renormalization parameters $c_\bullet$.
The approach described above yields a family of equations for each value of these parameters.
The solutions for each parameter can then, in principle, be compared with experimental/observational data to fix these parameters.
In this context it should be stressed that the same parameters must be chosen for all spacetimes and all experiments in order for the renormalization procedure to satisfy various standard assumptions, including local covariance \cite{hollands-wald:wick,hollands-wald:time-order,hollands-wald:stress-energy}.

\subsection{Application to the energy component of the SCE}
\label{sub:sce-energy}

Analogous to the traced SCE, the energy component of the SCE can be expanded as
\begin{equation}\label{eq:SCE-T00}\begin{split}
  0 &= \mleft( 6 (3 c_3 + c_4) + \frac{1}{960\uppi^2} - \frac{6\xi-1}{96\uppi^2} - \frac{(6\xi-1)^2}{32\uppi^2} \log(a \lambda_0) \mright) \mleft( 2 \frac{a^{(3)} a'}{a^4} - \frac{a^{\prime\prime\,2}}{a^4} - 4 \frac{a'' a^{\prime\,2}}{a^5} \mright) \\&\quad - \frac{(6\xi-1)^2}{16\uppi^2} \frac{a'' a^{\prime\,2}}{a^5} + \frac{1}{960\uppi^2} \frac{a^{\prime\,4}}{a^6} + \mleft( -\frac{3}{\kappa} + m^2 \Bigl( 3 c_2 - \frac{1}{96\uppi^2} - \frac{6\xi-1}{16\uppi^2} \bigl(1 + \log(a \lambda_0) \bigr) \Bigr) \mright) \frac{a^{\prime\,2}}{a^2} \\&\quad - m^4 \mleft( c_1 + \frac{1}{32\uppi^2} \log(a \lambda_0) \mright) a^2 + \frac{m^2}{2} \mathcal{M}_{\varphi\varphi,0} + (6\xi-1) \mleft( -\frac{a^{\prime\,2}}{2a^4} \mathcal{M}_{\varphi\varphi,0} + \frac{a'}{a^3} \mathcal{M}_{(\varphi\pi),0} \mright) \\&\quad + \frac{1}{2a^2} \bigl( \mathcal{M}_{\pi\pi,0} - \mathcal{M}_{\varphi\varphi,1} \bigr),
\end{split}\end{equation}
In the special case~\eqref{eq:2nd-order-conditions}, already considered above, this equation simplifies to
% \xi = \frac16
% 3c_3+c_4 = -\frac{1}{5760\uppi^2},
\begin{equation*}\begin{split}
  0 &= \frac{1}{960\uppi^2} \frac{a^{\prime\,4}}{a^6} + \mleft( -\frac{3}{\kappa} + m^2 \Bigl( 3 c_2 - \frac{1}{96\uppi^2} \Bigr) \mright) \frac{a^{\prime\,2}}{a^2} - m^4 \mleft( c_1 + \frac{1}{32\uppi^2} \log(a \lambda_0) \mright) a^2 \\&\quad + \frac{m^2}{2} \mathcal{M}_{\varphi\varphi,0} + \frac{1}{2a^2} \bigl( \mathcal{M}_{\pi\pi,0} - \mathcal{M}_{\varphi\varphi,1} \bigr).
\end{split}\end{equation*}
% \begin{equation*}\begin{split} % zero mass
%   0 &= \mleft( 6 (3 c_3 + c_4) + \frac{1}{960\uppi^2} - \frac{6\xi-1}{96\uppi^2} - \frac{(6\xi-1)^2}{32\uppi^2} \log(a \lambda_0) \mright) \mleft( 2 \frac{a^{(3)} a'}{a^4} - \frac{a^{\prime\prime\,2}}{a^4} - 4 \frac{a'' a^{\prime\,2}}{a^5} \mright) \\&\quad - \frac{(6\xi-1)^2}{16\uppi^2} \frac{a'' a^{\prime\,2}}{a^5} + \frac{1}{960\uppi^2} \frac{a^{\prime\,4}}{a^6} + \frac{3}{\kappa} \frac{a^{\prime\,2}}{a^2} + (6\xi-1) \mleft( -\frac{a^{\prime\,2}}{2a^4} \mathcal{M}_{\varphi\varphi,0} + \frac{a'}{a^3} \mathcal{M}_{(\varphi\pi),0} \mright) \\&\quad + \frac{1}{2a^2} \bigl( \mathcal{M}_{\pi\pi,0} - \mathcal{M}_{\varphi\varphi,1} \bigr).
% \end{split}\end{equation*}
% \begin{equation*}\begin{split} % fully conformal
%   0 &= \mleft( 6 (3 c_3 + c_4) + \frac{1}{960\uppi^2} \mright) \mleft( 2 \frac{a^{(3)} a'}{a^4} - \frac{a^{\prime\prime\,2}}{a^4} - 4 \frac{a'' a^{\prime\,2}}{a^5} \mright) + \frac{1}{960\uppi^2} \frac{a^{\prime\,4}}{a^6} + \frac{3}{\kappa} \frac{a^{\prime\,2}}{a^2} \\&\quad + \frac{1}{2a^2} \bigl( \mathcal{M}_{\pi\pi,0} - \mathcal{M}_{\varphi\varphi,1} \bigr).
% \end{split}\end{equation*}

The general equation can be rewritten as a quasilinear third order equation of the form
\begin{equation*}
  \partial_\tau a'' = f_\mathrm{en}(a, a', a'', \mathcal{M}_0, \mathcal{M}_1),
\end{equation*}
i.e., degree lower than the corresponding equation for the trace.
Just like the analogous equation for the traced SCE, this equation can be plugged into the abstract machinery developed in the first half of this section.
However this comes at the price that $f_\mathrm{en}$ has a pole at $a' = 0$ and thus it cannot be applied there.

Recall that the energy component of the SCE must be implemented as a constraint on the initial data.
An approach to constructing initial data that satisfies the energy constraint and also the Hadamard condition is discussed in Sect.~\ref{sub:initial-data}.

\subsection{Minkowski solution}

Before we continue to discuss the general case again and study the problem of constructing physically reasonable initial data, let us consider the simplest solution of the equations above: the Minkowski solution.

For proper physical initial data it should be required that the initial data for the state (resp. the moments) satisfy the energy (constraint) equation given by~\eqref{eq:T00-FLRW} or, equivalently, by~\eqref{eq:SCE-T00}.
Therefore, on Minkowski spacetime, the following relation needs to hold:
\begin{align*}
  0
  &= \frac12 \mathcal{M}_{\pi\pi,0} - \frac12 \mathcal{M}_{\varphi\varphi,1} + \frac{m^2}{2} \mathcal{M}_{\varphi\varphi,0} - m^4 \mleft( c_1 + \frac{1}{32\uppi^2} \log(\lambda_0) \mright) \\
  &= \mathcal{M}_{\pi\pi,0} - m^4 \mleft( c_1 + \frac{1}{32\uppi^2} \log \lambda_0 \mright),
\end{align*}
where for the last equality we assumed a time-translation invariant state (so that the left-hand side of~\eqref{eq:moment-dynamics} vanishes).
A curious observation is that the energy constraint equation fixes the renormalization parameter~$c_1$.

For example, using the results from Sects.~\ref{sec:vacuum} and~\ref{sec:thermal}, we find
\begin{equation*}
  \mathcal{M}_{\pi\pi,0} = \frac{m^4}{32\uppi^2} \mleft( \frac14 + \log\Bigl( \frac12 m \mu \Bigr) \mright) = \frac{m^4}{32\uppi^2} \mleft( -\frac34 + \gamma + \frac12 \log\Bigl( \frac12 m^2 \lambda^2 \Bigr) + \log \lambda_0 \mright),
\end{equation*}
for the Minkowski vacuum and
\begin{equation*}
  \mathcal{M}_{\pi\pi,0} = \frac{\uppi^4}{15 \beta^4}
\end{equation*}
for the thermal state.
That is, in these examples, $c_1$ is fixed once the mass $m$ and the regularization scale $\lambda$, resp.\ the inverse temperature $\beta$ are fixed.
Note that the arbitrary scale $\lambda_0$ cancels out.

It follows then from~\eqref{eq:R-from-G00} and~\eqref{eq:T-from-T_00}, for any time-translation invariant state on Minkowski spacetime, that
\begin{equation*}
  G_{00} = \kappa {\langle T_{00}^\mathrm{ren} \rangle}_\omega
  \quad\Longleftrightarrow\quad
  -R = \kappa {\langle T^\mathrm{ren} \rangle}_\omega.
\end{equation*}
Therefore we have:
\begin{theorem}
  Any time-translation invariant state on Minkowski spacetime satisfies the SCE on Minkowski spacetime, provided the renormalization parameter~$c_1$ is chosen consistently with the energy constraint.
\end{theorem}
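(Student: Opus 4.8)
The plan is to reduce everything to Minkowski spacetime, where the scale factor is constant and may be normalised to $a \equiv 1$, so that $R = 0$, $G_{00} = 0$, $\Box R = 0$, $J_{00} = 0$ and the curvature tensors $I_{\mu\nu}$, $J_{\mu\nu}$ vanish, while $[v_1] = \tfrac18 m^4$; consequently in \eqref{eq:trace-FLRW} and \eqref{eq:T00-FLRW} every term carrying a curvature factor drops and the only renormalisation parameter that survives is $c_1$. The first step is to record what time-translation invariance buys us: with $V = m^2$ constant and $\mathcal{M}$ independent of $\tau$, \eqref{eq:moment-dynamics} forces $(A \otimes \one + B \otimes L)\mathcal{M} = 0$, whose three rows give $\mathcal{M}_{(\varphi\pi),n} = 0$ and the recursion $\mathcal{M}_{\pi\pi,n} + \mathcal{M}_{\varphi\varphi,n+1} = m^2\,\mathcal{M}_{\varphi\varphi,n}$ for all $n$.

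The second step is to fix $c_1$ from the energy constraint. Plugging the Minkowski values into \eqref{eq:T00-FLRW}, expressing the coincidence limits through the moments and the regularisation differences \eqref{eq:reg-differences} (evaluated at $a = 1$ with all derivatives of $a$ and of $V$ vanishing), and then eliminating $\mathcal{M}_{\varphi\varphi,1}$ via the recursion above, the equation $G_{00} = \kappa\langle T^{\mathrm{ren}}_{00}\rangle_\omega$ collapses to the single affine relation $\mathcal{M}_{\pi\pi,0} = m^4\bigl(c_1 + \tfrac{1}{32\pi^2}\log\lambda_0\bigr)$, exactly as in the computation displayed just before the theorem. For $m > 0$ the coefficient of $c_1$ is nonzero, so this determines $c_1$ uniquely; we fix it to that value, so \eqref{eq:constraint-SCE} holds identically. (When $m = 0$ the relation does not involve $c_1$ and the hypothesis ``$c_1$ chosen consistently with the energy constraint'' is to be read as the requirement $\mathcal{M}_{\pi\pi,0} = 0$, which for the massless vacuum holds trivially; the remaining massless cases, such as the thermal state of Sect.~\ref{sec:thermal}, fall outside the hypothesis.)

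The third step is to deduce the traced equation. Combining \eqref{eq:R-from-G00} and \eqref{eq:T-from-T_00} gives, on any FLRW patch,
\[
  -R - \kappa\langle T^{\mathrm{ren}}\rangle_\omega
  = -\frac{1}{a^2}\Bigl( \frac{a}{a'}\,\partial_\tau\bigl(G_{00} - \kappa\langle T^{\mathrm{ren}}_{00}\rangle_\omega\bigr) + 2\bigl(G_{00} - \kappa\langle T^{\mathrm{ren}}_{00}\rangle_\omega\bigr) \Bigr),
\]
and for a time-translation invariant state both $G_{00}$ and $\langle T^{\mathrm{ren}}_{00}\rangle_\omega$ are $\tau$-independent, so the $\partial_\tau$-term disappears and the surviving term vanishes by the energy constraint; hence $-R = \kappa\langle T^{\mathrm{ren}}\rangle_\omega$, i.e.\ \eqref{eq:traced-SCE} holds, and together with the energy equation this is the full SCE. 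Alternatively --- and this is the route I would actually write out in full --- one inserts the Minkowski coincidence limits directly into \eqref{eq:trace-FLRW} (equivalently into the long expansion of Sect.~\ref{sub:solved-traced-SCE}): the $(6\xi-1)$-bracket collapses through $\mathcal{M}_{\pi\pi,0} + \mathcal{M}_{\varphi\varphi,1} = m^2\mathcal{M}_{\varphi\varphi,0}$, the $c_2, c_3, c_4$ terms vanish, and what is left is again an affine equation in $c_1$ which one checks to be the very equation already solved in step two; the vacuum and thermal examples of Sects.~\ref{sec:vacuum} and~\ref{sec:thermal} serve as explicit consistency checks.

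The main obstacle is the factor $a/a'$ in \eqref{eq:R-from-G00}--\eqref{eq:T-from-T_00}: on Minkowski $a' \equiv 0$, so the displayed identity is literally of the indeterminate form $\tfrac{1}{0}\cdot 0$ and ``the $\partial_\tau$-term disappears'' is not yet a rigorous step. I would legitimise it in one of two ways: either by a continuity argument, applying the identity on a one-parameter family of FLRW metrics with $a'_\varepsilon \not\equiv 0$ converging to Minkowski and for which $\partial_\tau\bigl(G_{00} - \kappa\langle T^{\mathrm{ren}}_{00}\rangle_\omega\bigr) = O(a'_\varepsilon)$, then passing to the limit; or, more safely, by bypassing the identity altogether and carrying out the direct computation of the previous paragraph, which uses only the explicit Minkowski coincidence limits and the recursion and never divides by $a'$. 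The one genuinely tedious ingredient of that route is the evaluation of \eqref{eq:reg-differences}, of $[v_1]$, and of the Minkowski moments at the coincidence point, which is routine bookkeeping.
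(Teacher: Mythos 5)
Your first two steps coincide with the paper's own argument: the paper likewise uses stationarity of $\mathcal{M}$ in \eqref{eq:moment-dynamics} to get $\mathcal{M}_{(\varphi\pi),n}=0$ and $\mathcal{M}_{\pi\pi,n}+\mathcal{M}_{\varphi\varphi,n+1}=m^2\mathcal{M}_{\varphi\varphi,n}$, reduces the energy constraint to $\mathcal{M}_{\pi\pi,0}=m^4\bigl(c_1+\tfrac{1}{32\uppi^2}\log\lambda_0\bigr)$, and fixes $c_1$ from it. Its third step is exactly the first route you describe: it reads the equivalence $G_{00}=\kappa\langle T_{00}^\mathrm{ren}\rangle_\omega\Leftrightarrow -R=\kappa\langle T^\mathrm{ren}\rangle_\omega$ off \eqref{eq:R-from-G00} and \eqref{eq:T-from-T_00}, and you are right that on Minkowski these identities degenerate because of the $a/a'$ factor.

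The genuine gap is that neither of your proposed repairs closes this. The identity \eqref{eq:T-from-T_00} is covariant conservation $\nabla^\mu\langle T_{\mu 0}^\mathrm{ren}\rangle_\omega=0$ solved for the trace: along any family $a_\varepsilon$ one has \emph{identically} $\tfrac{a}{a'}\partial_\tau\langle T_{00}^\mathrm{ren}\rangle_\omega=-\bigl(a^2\langle T^\mathrm{ren}\rangle_\omega+2\langle T_{00}^\mathrm{ren}\rangle_\omega\bigr)$, so the limiting argument returns a tautology, and at $a'=0$ conservation only gives $\partial_\tau\langle T_{00}^\mathrm{ren}\rangle_\omega=0$ and places no constraint on the trace (a constant tensor $\operatorname{diag}(0,p,p,p)$ is conserved on Minkowski for every $p$). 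The direct computation does not rescue this either: inserting the Minkowski data into the long traced equation of Sect.~\ref{sub:solved-traced-SCE} and using the recursion to kill the $(6\xi-1)$-bracket leaves $m^2\mathcal{M}_{\varphi\varphi,0}=m^4\bigl(4c_1+\tfrac{1}{32\uppi^2}+\tfrac{1}{8\uppi^2}\log\lambda_0\bigr)$, which is \emph{not} the affine relation of step two. The stationarity condition $\widehat{\mathcal{G}}_{\pi\pi}(k)=(k^2+m^2)\widehat{\mathcal{G}}_{\varphi\varphi}(k)$ does not tie $\mathcal{M}_{\varphi\varphi,0}$ and $\mathcal{M}_{\pi\pi,0}$ to each other after the counterterm subtraction: the two conditions are compatible precisely when $m^2\mathcal{M}_{\varphi\varphi,0}-4\mathcal{M}_{\pi\pi,0}=\tfrac{m^4}{32\uppi^2}$, which holds for the vacuum but is destroyed by adding any nonnegative mode density $\rho(k)$ to it, since the left-hand side then shifts by $-\tfrac{1}{2\uppi^2}\int k^2(4k^2+3m^2)\rho(k)\dif k<0$. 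So your phrase ``one checks [it] to be the very equation already solved in step two'' is the missing step, and it fails for general time-translation invariant states. To be fair, this is the same soft spot as in the paper's one-line invocation of \eqref{eq:R-from-G00}--\eqref{eq:T-from-T_00}; you reproduce the paper faithfully up to that point and correctly locate the difficulty, but you do not resolve it.
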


\subsection{Construction of non-trivial physical initial data}
\label{sub:initial-data}

It is not clear how to give initial values for a (Hadamard) state unless the scale function is known in a neighbourhood of the Cauchy surfaces.
This is another reason for why it is not clear how to pose a satisfying initial value problem for the SCE.
Our use of the moments $\mathcal{M}$ instead of a state does not completely solve this problem as it is not clear which sequences of moments belong to \emph{positive} two-point functions of \emph{physical} states.

To escape this problem, we propose the following approach:
We consider a (fixed) FLRW spacetime on which we know how to construct a Hadamard state, e.g., a spacetime which is Minkowskian in some time interval (in the context of Lem.~\ref{lem:tow-in} this would be a neighbourhood of $\tau_\tow$), and then calculate the corresponding moments.
Then, at some point in time (denoted $\tau_\init$ below), we start to switch on the SCE.
If this is done smoothly, the resulting solution will be smooth and the propagated state will remain Hadamard.
Hence, at any point after the SCE is fully switched on (denoted $\tau_\free$ below), we have a solution to the SCE with a Hadamard state.

\begin{figure}
  \centering
  \includegraphics{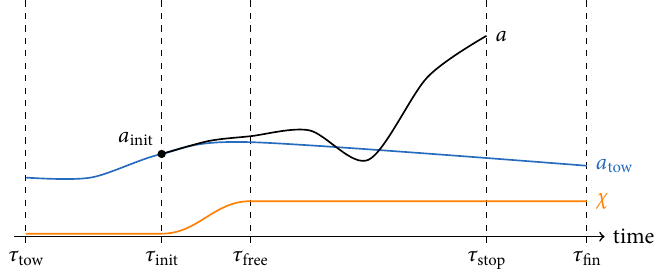}
  \caption{Schematic illustration of the tow-in procedure described in Lem.~\ref{lem:tow-in}, including the various relevant time intervals. In the past of $\tau_\init$ the SCE is `turned off' and the solution is determined by a chosen scale factor $a_\tow$ which evolves to the desired initial value $a_\init$. Then, in the interval $(\tau_\init, \tau_\free)$, the SCE is smoothly turned on using the function $\chi$. Shrinking this time interval, the solution at $\tau_\free$ can be brought arbitrarily close to its value at $\tau_\init$. Finally, after $\tau_\free$ the solution evolves freely via the SCE without the forcing caused by $a_\tow$. \label{fig}}
\end{figure}

To put this idea on a solid footing, we prove the following technical lemma, some aspects of which are also illustrated in Fig.~\ref{fig}:
\begin{lemma}\label{lem:tow-in}
  Let $[\tau_\tow, \tau_\fin] \subset \RR$, $\tau_\init \in (\tau_\tow, \tau_\fin)$ and $a_\tow \in C^{k+1}[\tau_\tow, \tau_\fin]$.
  Suppose that $\mathcal{M}_\tow \in C^1([\tau_\tow, \tau_\fin]; \vec\ell^p(w))$ is a solution of \eqref{eq:quasilinear-2} for the scale factor $a_\tow$ and the weight sequence $w$.
  Further, given appropriately chosen $\vec{b}_c$, $R$, $v$, suppose that the assumptions of Thm.~\ref{thm:quasilinear} hold for the initial data $\vec{a}_\init = \vec{a}_\tow(\tau_\init)$, $\mathcal{M}_\init = \mathcal{M}_\tow(\tau_\init)$.
  Then there exists $\tau_\mathrm{stop} \in (\tau_\init, \tau_\fin]$ such that for any $\varepsilon > 0$ one can find $\tau_\free \in (\tau_\init, \tau_\mathrm{stop})$ and $(a, \mathcal{M})$ satisfying
  \begin{enumerate}
    \item $(a, \mathcal{M}) = (a_\tow, \mathcal{M}_\tow)$ in $[\tau_\tow,\tau_\init]$,
    \item $(a, \mathcal{M})$ solves \eqref{eq:quasilinear} in $[\tau_\free, \tau_\mathrm{stop}]$,
    \item $(a, \mathcal{M}) \in C^{k+1}(J') \times C^1(J';\vec\ell^p(v))$ with $J' = [\tau_\tow, \tau_\mathrm{stop}]$,
    \item $\norm{\vec{a}(\tau_\free) - \vec{a}(\tau_\init)} \leq \varepsilon$ and $\norm{\mathcal{M}(\tau_\free) - \mathcal{M}(\tau_\init)}_{p,v} \leq \varepsilon$,
    \item $(a, \mathcal{M})$ is smooth if $a_\tow, f$ and $V$ are smooth.
  \end{enumerate}
\end{lemma}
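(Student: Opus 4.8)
The plan is to make the informal procedure ``switch the SCE on smoothly'' precise by interpolating, via a $[0,1]$-valued cutoff in time, between the trivial back-reaction that forces $a=a_\tow$ and the genuine back-reaction $f$, and then to feed the interpolated system into Thm.~\ref{thm:quasilinear}. Fix once and for all a reference profile $\chi_0\in C^\infty(\RR;[0,1])$ with $\chi_0\equiv 0$ on $(-\infty,\tfrac13]$ and $\chi_0\equiv 1$ on $[\tfrac23,\infty)$; for a parameter $\tau_\free\in(\tau_\init,\tau_\fin)$ set $\chi_{\tau_\free}(\tau):=\chi_0\bigl((\tau-\tau_\init)/(\tau_\free-\tau_\init)\bigr)$, so that $\chi_{\tau_\free}\equiv 0$ on $[\tau_\tow,\tau_1]$ with $\tau_1:=\tau_\init+\tfrac13(\tau_\free-\tau_\init)$ and $\chi_{\tau_\free}\equiv 1$ on $[\tau_\free,\infty)$, and define the interpolated back-reaction
\[
  f_{\tau_\free}(\tau,\vec b,\mathcal{M}):=\bigl(1-\chi_{\tau_\free}(\tau)\bigr)\,a_\tow^{(k+1)}(\tau)+\chi_{\tau_\free}(\tau)\,f(\tau,\vec b,\mathcal{M}).
\]
I would then study the system \eqref{eq:quasilinear} with $f$ replaced by $f_{\tau_\free}$, leaving the moment equation \eqref{eq:quasilinear-2} (hence $V$ and $S$) untouched. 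The key structural observation is that the $\tau_\free$-dependence of $f_{\tau_\free}$ sits entirely in a coefficient taking values in $[0,1]$, so Thm.~\ref{thm:quasilinear} will apply with constants that do not depend on $\tau_\free$.

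Concretely, assumptions (i) and (ii) of Thm.~\ref{thm:quasilinear} involve only \eqref{eq:quasilinear-2} and are unchanged; for (iii), the added term $a_\tow^{(k+1)}$ is continuous on the compact $[\tau_\tow,\tau_\fin]$ and independent of $(\vec b,\mathcal{M})$, and $0\le\chi_{\tau_\free}\le 1$, so $f_{\tau_\free}$ is continuous in $\tau$, obeys \eqref{eq:mu_f} with $\mu_f$ replaced by $\mu_f+\sup_{[\tau_\tow,\tau_\fin]}\abs{a_\tow^{(k+1)}}$, and obeys \eqref{eq:L_f} with the \emph{same} constant $L_f$, uniformly in $\tau_\free$. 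Applying Thm.~\ref{thm:quasilinear} with initial data $\vec a_\init=\vec a_\tow(\tau_\init)$, $\mathcal{M}_\init=\mathcal{M}_\tow(\tau_\init)$ (choosing $\vec b_c,R$, as the lemma permits, also so that $\vec a_\tow([\tau_\tow,\tau_\init])\subset\mathcal{B}_R$) therefore yields a stopping time which, by this uniformity, can be fixed as a single $\tau_\mathrm{stop}\in(\tau_\init,\tau_\fin]$ valid for \emph{every} $\tau_\free\in(\tau_\init,\tau_\mathrm{stop})$, together with a solution of the modified system on $[\tau_\init,\tau_\mathrm{stop}]$ of class $C^{k+1}\times C^1(\,\cdot\,;\vec\ell^p(v))$, with jet in $\mathcal{B}_R$ and $\mathcal{M}$-component bounded by $\mu_\mathcal{M}$. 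Since $\chi_{\tau_\free}\equiv 0$ on $[\tau_\init,\tau_1]$, \eqref{eq:quasilinear-1} there reduces to $\partial_\tau^{k+1}a=a_\tow^{(k+1)}$ with initial jet $\vec a_\tow(\tau_\init)$, whence $a=a_\tow$ on $[\tau_\init,\tau_1]$, so $V(\tau,\vec a)=V(\tau,\vec a_\tow)$ there and, by the uniqueness part of assumption (i) of Thm.~\ref{thm:quasilinear}, the $\mathcal{M}$-component patches smoothly along the same linear equation \eqref{eq:quasilinear-2} onto $\mathcal{M}_\tow$ at $\tau_\init$. Extending the solution by $(a_\tow,\mathcal{M}_\tow)$ on $[\tau_\tow,\tau_\init]$ thus produces a single pair $(a,\mathcal{M})\in C^{k+1}(J')\times C^1(J';\vec\ell^p(v))$ on $J'=[\tau_\tow,\tau_\mathrm{stop}]$ — the seam at $\tau_\init$ is harmless because the two descriptions coincide on the whole neighbourhood $[\tau_\init,\tau_1]$. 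This gives (iii), and it yields (i); property (ii) holds because $f_{\tau_\free}=f$ on $[\tau_\free,\infty)$, so $(a,\mathcal{M})$ solves the genuine \eqref{eq:quasilinear} on $[\tau_\free,\tau_\mathrm{stop}]$; and (v) follows from Prop.~\ref{prop:quasilinear-smooth} applied to the modified system once $\chi_0$ (hence $f_{\tau_\free}$) is smooth.

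For the closeness (iv) I would exploit the uniform bounds once more. On $[\tau_\init,\tau_\free]$ the derivative of the $k$-jet satisfies $\abs{\partial_\tau\vec a}\le\Lambda$ with $\Lambda$ depending only on $R$ and on $\mu_f+\sup\abs{a_\tow^{(k+1)}}$ (its last component is $a^{(k+1)}=f_{\tau_\free}$, bounded by \eqref{eq:mu_f}, and the others are bounded since $\vec a\in\mathcal{B}_R$), so ${\norm{\vec a(\tau_\free)-\vec a(\tau_\init)}}\le\Lambda\,(\tau_\free-\tau_\init)$; and $\partial_\tau\mathcal{M}=S(\tau,\vec a)\mathcal{M}$ gives, in the factorially-growing-weight case through the intermediate space $\vec\ell^p(v_\varepsilon)$ on which $S(\tau)$ is bounded into $\vec\ell^p(v)$ (as in the proof of Thm.~\ref{thm:fact_weights}), the bound ${\norm{\partial_\tau\mathcal{M}(\tau)}}_{p,v}\le C\mu_\mathcal{M}$ for $C$ depending only on $\sup_{J'}\norm{A}$, $\norm{B}$ and the weights, whence ${\norm{\mathcal{M}(\tau_\free)-\mathcal{M}(\tau_\init)}}_{p,v}\le C\mu_\mathcal{M}\,(\tau_\free-\tau_\init)$. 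Choosing $\tau_\free\in(\tau_\init,\tau_\mathrm{stop})$ with $\tau_\free-\tau_\init\le\varepsilon/\max(\Lambda,C\mu_\mathcal{M})$ then completes the proof. The main obstacle is precisely this uniformity step: the lemma demands a single $\tau_\mathrm{stop}$ serving all arbitrarily small switch-on times $\tau_\free$, so one must verify that passing from $f$ to the $\tau_\free$-dependent $f_{\tau_\free}$ leaves untouched all of $\mu_f,L_f,\mu_\mathcal{M},L_\mathcal{M}$ that control the existence interval in Thm.~\ref{thm:quasilinear}, and likewise the $\tau$-Lipschitz constants used for (iv) — which works only because that dependence enters solely through a convex-combination coefficient valued in $[0,1]$. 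Everything else is bookkeeping with the cutoff.
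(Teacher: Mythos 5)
Your proposal is correct and follows essentially the same route as the paper: interpolate between $a_\tow^{(k+1)}$ and $f$ via a $[0,1]$-valued cutoff, observe that the constants in Thm.~\ref{thm:quasilinear} (hence $\tau_\mathrm{stop}$) are uniform in the cutoff, glue the resulting solution to $(a_\tow,\mathcal{M}_\tow)$ at $\tau_\init$, and obtain (iv) by letting $\tau_\free\to\tau_\init$. Your treatment is somewhat more explicit than the paper's at two points the paper leaves as ``obvious'' --- the $C^{k+1}$ matching at the seam (via the reproduction of $a_\tow$ on a neighbourhood of $\tau_\init$) and the quantitative Lipschitz-in-time bounds behind ``(iv) follows by continuity'' --- but the underlying argument is identical.
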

\begin{proof}
  For any $\chi \in C^\infty([\tau_\tow, \tau_\fin]; [0,1])$ define
  \begin{equation*}
    f_\chi(\tau, \vec{a}, \mathcal{M}) = \bigl( 1-\chi(\tau) \bigr) a_\tow^{(k+1)}(\tau) + \chi(\tau) f(\tau, \vec{a}, \mathcal{M}).
  \end{equation*}
  We want to apply Thm.~\ref{eq:quasilinear} with $f_\chi$ instead of $f$.
  Assumptions (i) and (ii) of Thm.~\ref{eq:quasilinear} continue to hold with the same constants.
  Estimating $\chi$ by~$1$, \eqref{eq:mu_f} in assumption (iii) can be replaced by
  \begin{equation}\label{eq:mu_f_chi}
    \abs{f_\chi(\tau, \vec{a}, \mathcal{M})} \leq \sup_{\eta \in I}{} \abs{a_\tow^{(k+1)}(\eta)} + \mu_f,
  \end{equation}
  which holds for $\tau \in I$, and the right-hand side of \eqref{eq:L_f} is unchanged.
  Hence Thm.~\ref{thm:quasilinear} can be applied to find $\tau_\mathrm{stop} \in (\tau_\init, \tau_\fin]$ and a solution
  \begin{equation*}
    (a_\chi, \mathcal{M}_\chi) \in C^{k+1}(J) \times C^1(J; \vec\ell^p(v))
  \end{equation*}
  in $J = [\tau_\init, \tau_\mathrm{stop}]$ for \eqref{eq:quasilinear} with $f$ replaced by $f_\chi$.

  Define $(a, \mathcal{M})$ by gluing $(a_\tow, \mathcal{M}_\tow)$ and $(a_\chi, \mathcal{M}_\chi)$.
  Choosing $\chi$ such that $\chi(\tau) = 0$ for $\tau \leq \tau_\init$ and $\chi(\tau) = 1$ for $\tau \geq \tau_\free$ with $\tau_\free \in (\tau_\init, \tau_\mathrm{stop})$ arbitrary, properties (i)--(iii) are obviously satisfied.
  Since estimates in the previous paragraph are uniform in $\chi$, (iv) follows by continuity of the solution, viz., if $\tau_\free$ is chosen sufficiently close to $\tau_\init$, the estimates hold.
  Finally, in the smooth case, apply Prop.~\ref{prop:quasilinear-smooth} to see that (v) holds.
\end{proof}

Applied to the traced SCE, where $V$ and $f$ are smooth for appropriately chosen $\mathcal{B}_R$ such that $a$ is bounded away from zero, this lemma together with the propagation of the Hadamard property on smooth spacetimes \cite{radzikowski-verch} shows:
\begin{proposition}
  If $\mathcal{M}_\tow$ is given by a Hadamard state on the FLRW spacetime with smooth scale function $a_\tow$, the solution $(a, \mathcal{M})$ in Lem.~\ref{lem:tow-in} has moments~$\mathcal{M}$ for a Hadamard state with respect to~$a$.
\end{proposition}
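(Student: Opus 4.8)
The plan is to transport the actual Hadamard state through the interval where the SCE is switched off, rather than to reconstruct a state from the moment sequence $\mathcal{M}$. The conceptually crucial observation to begin with is that the moment map $\omega\mapsto\mathcal{M}$ is far from injective: by~\eqref{eq:moments} the sequence $\mathcal{M}$ only records the coincidence-limit Taylor data of $\mathcal{G}-\mathcal{H}_l$, so a Hadamard state cannot be recovered from $\mathcal{M}$ alone. Accordingly, I would start from the genuine Hadamard state $\omega$ on the FLRW spacetime $(M_\tow,g_\tow)$ with scale factor $a_\tow$ postulated in the statement, let $\mathcal{G}_\tow$ denote its Cauchy data in the sense of~\eqref{eq:G-data}, and note that since $\omega_2$ solves the Klein--Gordon equation in both arguments, $\mathcal{M}_\tow$ is a $C^1$-solution of~\eqref{eq:quasilinear-2} for $a_\tow$ (as already assumed in Lem.~\ref{lem:tow-in}). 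By the uniqueness part of Thm.~\ref{thm:geom_weights} (resp.\ Thm.~\ref{thm:fact_weights}) this identifies $\mathcal{M}_\tow(\tau)$ with the moments of $\omega$ at time $\tau$ for all $\tau\in[\tau_\tow,\tau_\init]$.

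Next I would use Lem.~\ref{lem:tow-in} to obtain $(a,\mathcal{M})$ with $(a,\mathcal{M})=(a_\tow,\mathcal{M}_\tow)$ on $[\tau_\tow,\tau_\init]$; since $V$, $f$ and $a_\tow$ are smooth for the traced SCE on the relevant ball $\mathcal{B}_R$ (chosen to avoid $a=0$ and the logarithmic singularity~\eqref{eq:log-singularity}), property~(v) of the lemma gives $a\in C^\infty$. As $a=a_\tow$ on $\{\tau<\tau_\init\}$, gluing $a_\tow$ for $\tau<\tau_\init$ to $a$ for $\tau\ge\tau_\init$ yields a smooth FLRW metric on a globally hyperbolic spacetime $(M,g)$ that coincides with $(M_\tow,g_\tow)$ on that slab. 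On $(M,g)$ I would then define the quasi-free, homogeneous and isotropic state $\omega'$ by prescribing the Cauchy data $\mathcal{G}_\tow(\tau_\init)$ at $\{\tau=\tau_\init\}$ and evolving it with the Klein--Gordon equation for the potential $V$ belonging to $a$, using the initial-value formulation of~\cite{luders-roberts}; this is legitimate because $(M,g)$ and $(M_\tow,g_\tow)$ agree near $\tau_\init$, so $\mathcal{G}_\tow(\tau_\init)$ is admissible Cauchy data there and its positivity (the bound $\widehat{\mathcal{J}}\geq\frac14$) is preserved under the evolution. By construction $\omega'$ and $\omega$ share the same Cauchy data and the same field equation on $\{\tau<\tau_\init\}$, so $\omega'=\omega$ there, and in particular $\omega'$ is Hadamard on an open neighbourhood of a Cauchy surface $\{\tau=\tau_1\}$ with $\tau_1<\tau_\init$.

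The last two steps are short. First, since $(M,g)$ is smooth and globally hyperbolic, I would invoke the propagation of the Hadamard property~\cite{radzikowski-verch} to conclude that $\omega'$ is Hadamard on all of $(M,g)$. Second, I would verify that $\omega'$ realises $\mathcal{M}$: by~\eqref{eq:moments} the moments $\mathcal{M}'$ of $\omega'$ depend only on its Cauchy data, so $\mathcal{M}'(\tau_\init)=\mathcal{M}_\tow(\tau_\init)=\mathcal{M}(\tau_\init)$, and $\mathcal{M}'$ solves~\eqref{eq:moment-dynamics} for the potential of $a$; uniqueness of the evolution (Thm.~\ref{thm:geom_weights}/Thm.~\ref{thm:fact_weights}) then forces $\mathcal{M}'=\mathcal{M}$ on $[\tau_\init,\tau_\mathrm{stop}]$, while both equal $\mathcal{M}_\tow$ on $[\tau_\tow,\tau_\init]$. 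Hence $\mathcal{M}$ is the sequence of moments of the Hadamard state $\omega'$ with respect to $a$, which is the assertion.

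The main obstacle is precisely the non-injectivity of the moment map noted at the outset: one cannot build the desired Hadamard state directly out of $\mathcal{M}$, so the argument must instead carry the originally given Hadamard state $\omega$ forward through the interval where $a\equiv a_\tow$ and then propagate the Hadamard condition. Making this rigorous hinges on the smoothness of $a$ — property~(v) of Lem.~\ref{lem:tow-in} — which is exactly what is needed for~\cite{radzikowski-verch} to apply; the remaining care is in checking that the glued spacetime is genuinely smooth and globally hyperbolic and that evolving $\mathcal{G}_\tow(\tau_\init)$ on it gives back a well-defined state agreeing with $\omega$ on the overlap, but this involves no computation.
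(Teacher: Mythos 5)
Your proposal is correct and follows essentially the same route as the paper, which derives this proposition directly from Lem.~\ref{lem:tow-in} (smoothness of the glued scale factor via property (v)) combined with the propagation of the Hadamard property on smooth spacetimes \cite{radzikowski-verch}. Your additional step — identifying the moments of the propagated state with the evolved sequence $\mathcal{M}$ via uniqueness of the linear evolution — is exactly the compatibility the paper invokes implicitly in its remarks on reconstructing the quantum state, so you have merely made the paper's one-line argument explicit.
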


While the `tow in' construction described above can be used to prove the existence of Hadamard states fulfilling the trace equation of the SCE, the status of the energy constraint remains unsatisfactory:
Although the SCE preserves the energy constraint, this is not case for the deformed version used in the proof of Lem.~\ref{lem:tow-in}.
In other words, although the solutions constructed by that lemma satisfy the traced SCE, they do not satisfy the energy equation and hence not the full SCE.
To fix this issue, we need to amend this construction through an additional step.
Namely, we give up control over the initial datum of the third derivative of the scale factor, and change it in such a way that the energy constraint is satisfied again from $\tau_\free$ onwards.
That this can indeed be done consistently is shown in the following Theorem.

\begin{theorem}\label{thm:initial_data}
  Given $b_0, b_1, b_2 \in \RR$ such that
  \begin{equation}\label{eq:b-conditions}
    6 (3 c_3 + c_4) + \frac{1}{960\uppi^2} - \frac{6\xi-1}{96\uppi^2} - \frac{(6\xi-1)^2}{32\uppi^2} \log(b_0 \lambda_0) \neq 0,
    \quad
    b_0 > 0,
    \quad
    b_1 \neq 0,
  \end{equation}
  and $\varepsilon > 0$, there exists a time interval $J = [\tau_\free, \tau_\mathrm{stop}] \subset \RR$, a weight sequence $w$ and $(a, \mathcal{M}) \in C^\infty(J) \times C^\infty(J; \vec\ell^p(w))$ such that
  \begin{enumerate}
    \item there is a Hadamard state $\omega$ with associated moments $\mathcal{M}$ for the scale factor $a$,
    \item $(a, \omega)$ satisfies the SCE, viz., both trace equation \eqref{eq:traced-SCE} and the energy constraint \eqref{eq:constraint-SCE},
    \item $\norm[\big]{a^{(j)}(\tau_\free) - b_j} \leq \varepsilon$ for $j = 0,1,2$.
  \end{enumerate}
\end{theorem}
\begin{proof}
  Let $\chi \in C^\infty(\RR; [0,1])$ with $\supp\chi = [-1, 1]$ and $\chi(0) = 1$.
  For $\delta > 0$, define $\chi_\delta(x) = \chi(\frac{x}{\delta})$.

  Fix $[\tau_\tow, \tau_\fin] \subset \RR$ and $\tau_\init \in (\tau_\tow, \tau_\fin)$.
  Let $a_\tow \in C^\infty[\tau_\tow, \tau_\fin]$ with $a_\tow > 0$ be initially Minkowskian (i.e., $a_\tow = 1$ in a neighbourhood of $\tau_\tow$) and such that
  \begin{equation*}
    a_\tow^{(j)}(\tau_\init) = b_j \quad \text{for $j=0,1,2$}.
  \end{equation*}
  Then the functions
  \begin{equation*}
    a_{\tow,c,\delta}(\tau) = a_\tow(\tau) + \frac{c}{2} \int_{\tau_\tow}^\tau (\tau - \eta)^2 \chi_\delta(\eta-\tau_\init) \dif\eta,
  \end{equation*}
  defined for $c \in \RR$ and $\delta > 0$, satisfy
  \begin{align}
    \abs[\big]{a_{\tow,c,\delta}^{(j)}(\tau_\init) - b_j} &\leq c \delta^{3-j}
    \quad
    \text{for $j=0,1,2$}, \label{eq:a_tow-c-delta-estimate}
    \\
    a_{\tow,c,\delta}'''(\tau_\init) &= a_\tow'''(\tau_\init) + c. \notag
  \end{align}
  In particular, given any $c \in \RR$, if $\delta > 0$ is sufficiently small, the conditions \eqref{eq:b-conditions} hold with $b_0, b_1$ replaced by $a_{\tow,c,\delta}(\tau_\init)$ and $a_{\tow,c,\delta}'(\tau_\init)$.

  Let $\omega_\tow$ be a translation-invariant Hadamard state on Minkowski spacetime such that the associated moments are (at most) geometrically growing (e.g., choose the vacuum state).
  Denote $(a_{c,\delta}, \mathcal{M}_{c,\delta})$ the solution obtained from Lem.~\ref{lem:tow-in} with $a_{\tow,c,\delta}$ and initial moments $\mathcal{M}_\tow$ given by $\omega_\tow$.
  Observe that $\tau_\free$ and $\tau_\mathrm{stop}$ can be chosen uniformly for $c$ from any fixed bounded interval.

  Written as
  \begin{equation*}
    g(\vec{a}, \mathcal{M}, \tau) = a'''(\tau) - f_\mathrm{en}(a, a', a'', \mathcal{M}) \bigr|_\tau
  \end{equation*}
  with the right-hand side as defined in Sect.~\ref{sub:sce-energy}, it is evident that the energy constraint is continuous in $\vec{a}$ for $a > 0$ and $a' \neq 0$.
  Since $V$ and $f_\mathrm{en}$ only depend on lower than third order derivatives of the scale factor, it then follows from Thm.~\ref{thm:pert-geom_weights} and \eqref{eq:a_tow-c-delta-estimate} that for any $\gamma > 0$ there are $c_\pm \in \RR$ and $\delta > 0$ such that
  \begin{equation*}
    \pm g(\vec{a}_{c_\pm,\delta}, \mathcal{M}_{c_\pm,\delta}, \tau_\init) \geq \gamma
  \end{equation*}
  It then follows by (iv) of Lem.~\ref{lem:tow-in}, that one can choose $\tau_\free$ such that $\pm g(\vec{a}_{c_\pm,\delta}, \mathcal{M}_{c_\pm,\delta}, \tau_\free) > 0$.
  Moreover, given sufficiently small $\delta > 0$, Thm.~\ref{thm:continuous} implies the continuous dependence of $g(\vec{a}_{c,\delta}, \mathcal{M}_{c,\delta}, \tau_\free)$ on $c$.
  An application of the intermediate value theorem thus shows that there exists $c_0 \in (c_-,c_+)$ such that the associated solution $(a, \mathcal{M}) = (a_{c_0\delta}, \mathcal{M}_{c_0,\delta})$ satisfies the energy constraint $g(\vec{a}, \mathcal{M}) = 0$ at $\tau_\free$.

  As the energy constraint is propagated by solutions to the SCE, $(a, \mathcal{M})$ is a proper solution to the SCE (i.e., both the energy and the trace equation) in $[\tau_\free, \tau_\mathrm{stop}]$.
  As the Hadamard property is propagated on smooth spacetimes, $\mathcal{M}$ is associated to a Hadamard state, namely the state $\omega_\tow$ propagated on $a$.
  This shows (i) and (ii).
  Finally, (iii) follows from the triangle inequality using \eqref{eq:a_tow-c-delta-estimate} and (iv) of Lem.~\ref{lem:tow-in}.
\end{proof}

This theorem shows, in particular, that the set of solutions satisfying the Hadamard condition to the SCE is non-empty.
In fact, for any choice of approximate initial conditions for the scale factor, a nearby solution of the SCE exists.

\subsection{Reconstruction of the quantum state}

There is no obvious way to directly relate a sequence of moments~$\mathcal{M}$ to a quasi-free state.
Note that this is not a classical moment problem, as the degree $l$ of the counter terms in~\eqref{eq:moments} is neither fixed nor unique.
Here we can easily circumvent this problem:
Suppose that in addition to the initial data for the moments~$\mathcal{M}$ we are given initial data for the associated state, or rather the two-point function.
(Concretely, this is possible, for instance, in the approach described in Sect.~\ref{sub:initial-data}.)
Then, once we have solved the quasilinear system~\eqref{eq:quasilinear} for the scale factor and the moments, we can evolve the initial data for the two-point function with the obtained scale factor.
Necessarily, the evolved two-point function is compatible with the evolved moments.
Alternatively, we can augment~\eqref{eq:quasilinear} by adding a third equation for the two-point function and directly co-evolve it with the moments.

%------------------------------------------------------------------------------%
\section{Outlook}
\label{sec:outlook}

Here we restricted ourselves to the SCE with a free scalar field on flat cosmological spacetimes.
An extension of our approach to other non-interacting types of matter (e.g., the Dirac field) and non-flat cosmological spacetime seems feasible.
The former requires some modifications of the moment spaces, while the latter requires a careful treatment of homogeneous distributions on maximally symmetric spaces.
Currently we are also investigating the specialization of some of our methods to (cosmological) de Sitter spacetime \cite{nicolai:desitter}, where we expect to extend some of the results of \cite{juarez=aubry:desitter}.

The introduction of potential energy of the field (neglecting self-interaction) $-\lambda \langle{:}\phi^4{:}\rangle_\omega$ would lead to quadratic terms in $\mathcal{M}_{\varphi\varphi}$ and further correction terms in the trace equation that nevertheless still fall in the class of the abstract SCE~\eqref{eq:quasilinear}.
Such modifications could be of interest in the cosmology of the early universe on time scales well above the Planck time but below characteristic times of nuclear reactions, see~\cite{anderson-molina-david-cook,anderson-molina-sanders} for related work.

A much more ambitious would be the passage to non-cosmological spacetimes.
Space-dependent germs of distributional tensor structures that remain closed under successive applications of the Klein--Gordon operator would have to replace the expansion in radially symmetric homogeneous distributions.
The point-splitting limit of such an approach would result in an infinite hierarchy of coupled PDEs for the germ coefficients.
It is not obvious, if such a system can be set up or even be solved.

Another open question is if it possible to improve upon the techniques employed here to avoid what are essentially assumptions on the spatial analyticity of the state.
Since the difference $\mathcal{G}-\mathcal{H}$ does not satisfy a PDE away from the diagonal, it is currently not clear how that can be accomplished while at the same time still ensuring the correct regularization of the state and without adding arbitrarily high derivatives of the scale factor to system.
Working in the opposite direction, one could attempt to develop a fully analytic theory (also in time).
Such an analytic theory would inevitably involve analytic Hadamard states, it might offer other ideas to solve the SCE in non-cosmological spacetimes, using e.g.\ the construction of analytic Hadamard states developed in~\cite{gerard-wrochna}.

It is often argued that the higher derivatives of the metric appearing in the SCE can lead to runaway solutions which deviate significantly from classical solutions of the Einstein equation.
Although recent numerical work in collaboration with N.~Rothe suggests otherwise \cite{nicolai}, it might still be worthwhile to apply our methods to the order-reduced SCE (cf.\ \cite{flanagan-wald}), which is sometimes suggested as a better behaved approximation of the interaction between quantum matter and classical spacetime geometry.

Potentially, our new formulation of the SCE also has numerical consequences, as it avoids time integration of rapidly oscillating modes of the quantum state and integration in momentum space.
Theorem~\ref{thm:continuous} establishes certain bounds for the change of the solution under modification of the initial conditions.
This can be used to truncate $\mathcal{M}$ at a certain order with a controlled error for the solution of the SCE.
As the space of moments with zero entries after a prescribed order is invariant under the dynamics \eqref{eq:moment-dynamics}, such approximated initial conditions give rise to a finite dimensional ODE which can be numerically integrated in time using standard methods.
However, convergence rates as a function of the time interval need to be carefully examined to judge numerical viability.

%------------------------------------------------------------------------------%
\bigskip
\paragraph{Acknowledgments}
This work was supported by the DFG project ``Solutions and Stability of the semiclassical Einstein equation – a phase space approach''.
The authors thank C.~Bohlen, A.~Daletzki, M.~Friesen, T.~Hack, B.~Jacob, N.~Pinamonti, N.~Rothe and A.~Schenkel for useful discussions.
We also thank the anonymous referees for their valuable and constructive comments.
\bigskip

%------------------------------------------------------------------------------%
\appendix
\section{Appendix}

\subsection{Homogeneous distributions}
\label{sec:homogeneous}

Roughly following Chap.~3.2 of~\cite{hormander:1}, we define for $z \in \CC$ with $\Re z > -1$ the function on~$\RR$
\begin{equation}\label{eq:homogeneous-defn1}
  k_+^z \defn \begin{cases*}
    k^z & if $k > 0$, \\
    0   & if $k \leq 0$.
  \end{cases*}
\end{equation}
Since this function is locally integrable, it defines a distribution.
It can be extended to $z \in \CC \setminus \{-1,-2,\dotsc\}$ by analytic continuation.
To further extend $k_+^z$ to all $z \in \CC$, we define for $n \in \NN$ and any test function $f \in C^\infty_{\mathrm c}(\RR)$
\begin{equation}\label{eq:homogeneous-defn2}
  \langle k_+^{-n}, f \rangle \defn \frac{1}{(n-1)!} \biggl( -\int_0^\infty \log(k) f^{(n)}(k) \dif k + f^{(n-1)}(0) \sum_{j=1}^{n-1} \frac{1}{j} \biggr).
\end{equation}
Defined in this way, $k_+^z$ satisfies for $z \in \CC$ the homogeneity property $\langle k_+^z, k f \rangle = \langle k_+^{z+1}, f \rangle$.
Note that~\eqref{eq:homogeneous-defn2} is not the only possible extension of~\eqref{eq:homogeneous-defn1} to all of $z \in \CC$ -- different extensions differ by derivatives of the delta distribution at zero.

To calculate the (inverse) Fourier transform of $k_+^{-n}$, note first that
\begin{equation*}
  \langle k_+^{-1}, \e^{\im k x} \rangle = \lim_{\nu \to 0} \int_0^\infty k^{\nu-1} \e^{\im k x} \dif k - \frac{1}{\nu}.
\end{equation*}
Then we compute
\begin{align*}
  \int_0^\infty k^{\nu-1} \e^{\im k x} \dif k
  &= \Gamma(\nu) (-\im x)^{-\nu}
  = \bigl( \nu^{-1} - \gamma + \mathcal{O}(\nu) \bigr) \bigl( 1 - \nu \log(-\im x + 0) + \mathcal{O}(\nu^2) \bigr) \\
  &= \nu^{-1} - \gamma - \log(-\im x + 0) + \mathcal{O}(\nu)
\end{align*}
Therefore, we find
\begin{equation*}
  \langle k_+^{-1}, \e^{\im k x} \rangle = -\gamma - \log(-\im x + 0) = -\gamma - \log\abs{x} + \frac{\im\uppi}{2} \sgn(x),
\end{equation*}
which, by~\eqref{eq:homogeneous-defn2}, immediately implies that, for $n \in \NN$,
\begin{equation}\label{eq:k+-fourier}
  \langle k_+^{-n}, \e^{\im k x} \rangle = \frac{(\im x)^{n-1}}{\Gamma(n)} \mleft( \psi(n) - \log\abs{x} + \frac{\im\uppi}{2} \sgn(x) \mright),
\end{equation}
where we used the relation between the harmonic numbers and the Digamma function
\begin{equation*}
  \psi(n) = \sum_{j=1}^{n-1} \frac{1}{j} - \gamma,
  \quad
  n \in \NN.
\end{equation*}
For completeness, let us also recall the definition of the Digamma function: $\psi(z) \defn \Gamma'(z) / \Gamma(z)$ for $z \in \CC \setminus \{0, -1, -2, \dotsc\}$.

This motivates the definition of the distributions (for $r \geq 0$)
\begin{equation*}
  h_z(r) \defn \frac{\e^{\im z \uppi/2}}{2\uppi^2} \frac{r^{z-2}}{\Gamma(z)} \bigl( \log(r) - \psi(z) \bigr),
\end{equation*}
which extends analytically to $z \in \CC$.
In particular, we have for $n \in \NN_0$
\begin{equation*}
  h_{-n}(r) = \frac{\im^n n!}{2\uppi^2 r^{n+2}}
\end{equation*}
because both $1/\Gamma(z)$ and $\psi(z)/\Gamma(z)$ are entire functions with
\begin{equation*}
  \frac{1}{\Gamma(-n)} = 0
  \quad\text{and}\quad
  \frac{\psi(-n)}{\Gamma(-n)} = (-1)^{n+1} \Gamma(n+1).
\end{equation*}

Taking the imaginary part of~\eqref{eq:k+-fourier}, we find
\begin{equation*}
  h_n(r) = \frac{1}{2\uppi^2r} \Im \langle k_+^{-n}, k \e^{\im k r} \rangle = \frac{1}{2\uppi^2r} \Im \langle k_+^{-n+1}, \sin(k r) \rangle
\end{equation*}
for even $n$.
Moreover, $h_z$ satisfies the important homogeneity property
\begin{equation}\label{eq:homog_h}
  -\upDelta_r h_{z+2}(r) = h_z(r),
\end{equation}
where we recall that $\upDelta_r$ denotes the (three dimensional) radial Laplacian.

%------------------------------------------------------------------------------%
\subsection{Weighted sequence spaces}
\label{sec:weighted-spaces}

Let $w = (w_n)$ be a sequence of strictly positive numbers, called the \emph{weights}.
By $\ell^p(w)$, $p \geq 1$, we denote the space of complex sequences $x = (x_n)$ with convergent norm
\begin{equation*}
  \norm{x}_{p,w} \defn \Bigl( \sum_n {}\abs{w_n^{-1} x_n}^p \Bigr)^{1/p}.
\end{equation*}
If $p = \infty$, we denote by $\ell^\infty(w)$ the space of complex sequences with convergent norm
\begin{equation*}
  \norm{x}_{\infty,w} \defn \sup_n w_n^{-1} \abs{x_n}.
\end{equation*}
These are the \emph{weighted $\ell^p$ spaces}.
If $w_n=1$, we omit the weight and denote by $\ell^p$ with norm $\norm{\,\cdot\,}_p$ the ordinary $\ell^p$ spaces.
Note that $\ell^p(w)$ is reflexive for $1 < p < \infty$.

Consider two weight sequences $v,w$.
Then it is easily seen that
\begin{equation*}
  \norm{x}_{p,v} \leq \sup_n \frac{w_n}{v_n} \norm{x}_{p,w}.
\end{equation*}

One of the most important operators on sequence spaces is the left-shift operator~$L$, formally defined by $L (x_0, x_1, x_2, \dotsc) = (x_1, x_2, \dotsc)$.
If $w$ are weights such that the sequence $(w_{n+1}/w_n)$ of ratios of consecutive weights is bounded, the left-shift operator is bounded on $\ell^p(w)$.
Indeed,
\begin{equation}\label{eq:L-norm}
  \norm{Lx}_{p,w} \leq \sup_n \frac{w_{n+1}}{w_n} \norm{x}_{p,w}.
\end{equation}
However, if $(w_{n+1}/w_n)$ is unbounded, also the left-shift operator is unbounded.

More generally, for $m \in \NN$ and two sequences $v,w$ weights, we have
\begin{equation*}%\label{eq:L-norm-power}
  \norm{L^m x}_{p,v} \leq \sup_n \frac{w_{n+m}}{v_n} \norm{x}_{p,w}.
\end{equation*}
Finally, note that, if $L$ is unbounded on $\ell^p(w)$, its resolvent set is empty and its point spectrum fills the entire complex plane.

%------------------------------------------------------------------------------%
\subsection{Some inequalities}

\begin{lemma}\label{lem:binom-ineq1}
  For $0 \leq m \leq n \in \NN_0$ and $p \in (0,1)$,
  \begin{equation*}
    \binom{n}{m} (1-p)^n \leq \binom{\floor{m/p}}{m} (1-p)^{\floor{m/p}},
  \end{equation*}
  where $\floor{\,\cdot\,}$ denotes the floor function.
\end{lemma}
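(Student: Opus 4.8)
The plan is to treat the left-hand side as a function of the integer $m$ for fixed $n$ and $p$, or rather to compare the quantity $\binom{n}{m}(1-p)^n$ against its value at the ``optimal'' index $n = \lfloor m/p \rfloor$. First I would fix $m$ and $p \in (0,1)$ and consider the sequence $c_n \defn \binom{n}{m}(1-p)^n$ for $n \geq m$. The idea is to show that $c_n$ increases up to roughly $n \approx m/p$ and decreases afterwards, so that its maximum over $n$ is attained at (or adjacent to) $n = \lfloor m/p \rfloor$; since the inequality must hold for \emph{all} $n$, it suffices to show $c_n \le c_{\lfloor m/p\rfloor}$.

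The key computation is the ratio of consecutive terms:
\begin{equation*}
  \frac{c_{n+1}}{c_n} = \frac{\binom{n+1}{m}}{\binom{n}{m}} (1-p) = \frac{n+1}{n+1-m} (1-p).
\end{equation*}
This ratio is a decreasing function of $n$ (for $n \ge m$), it exceeds $1$ precisely when $(n+1)(1-p) > n+1-m$, i.e.\ when $n+1 < m/p$, and it is $\le 1$ once $n + 1 \ge m/p$. Hence $c_n$ is unimodal in $n$: it is nondecreasing while $n+1 < m/p$ and nonincreasing once $n \ge m/p - 1$. Therefore the maximum of $c_n$ over $n \ge m$ is attained at some $n^\star$ with $m/p - 1 \le n^\star \le m/p$, and since $n^\star$ is an integer one may take $n^\star = \lfloor m/p \rfloor$ (checking the boundary case where $m/p$ is itself an integer, in which $c_{m/p - 1} = c_{m/p}$, so $\lfloor m/p\rfloor$ still works). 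Combining, $c_n \le c_{n^\star} = c_{\lfloor m/p\rfloor}$ for every $n \ge m$, which is exactly the claimed inequality. One should also handle the edge cases $m = 0$ (both sides equal $(1-p)^n$ vs.\ $(1-p)^0 = 1$, and $\lfloor 0 \rfloor = 0$, so the inequality $(1-p)^n \le 1$ holds) and $m = n$ (then $\lfloor m/p \rfloor \ge m$ since $p < 1$, and the unimodality argument applies verbatim).

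I do not expect a serious obstacle here; the only mildly delicate point is the arithmetic around whether $\lfloor m/p\rfloor$ lands on the increasing or decreasing side of the peak and the tie case when $m/p \in \NN$. This is resolved by noting that the ratio $c_{n+1}/c_n \ge 1$ iff $n \le m/p - 1$, so $c_n$ is nondecreasing for $n \le \lceil m/p\rceil - 1$ and in particular $c_n \le c_{\lfloor m/p\rfloor}$ for all $n \le \lfloor m/p \rfloor$, while for $n \ge \lfloor m/p\rfloor$ the sequence is nonincreasing, giving the bound in the other direction. Putting the two monotonicity ranges together covers all $n \ge m$ and completes the argument.
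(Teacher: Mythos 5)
Your proposal is correct and follows essentially the same route as the paper: both compute the ratio of consecutive terms $\frac{n+1}{n+1-m}(1-p)$ and locate the maximum at $n=\floor{m/p}$ by determining where this ratio crosses $1$. Your treatment of the edge cases ($m=0$, the tie when $m/p\in\NN$, and $\floor{m/p}\geq m$) is slightly more explicit than the paper's, but the argument is the same.
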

\begin{proof}
  The case $m = 0$ is obvious.
  For $m > 0$, we calculate the ratio of successive terms on the left-hand side:
  \begin{equation*}
    \frac{\binom{n+1}{m} (1-p)^{n+1}}{\binom{n}{m} (1-p)^n} = \frac{n+1}{n+1-m} (1-p).
  \end{equation*}
  Then we observe that
  \begin{equation*}
    \frac{n+1}{n+1-m} (1-p) \geq 1
  \end{equation*}
  if and only if $n+1 \leq m/p$ to find the maximum at $n = \floor{m/p}$.
\end{proof}

Note that this lemma can also be stated in the language of probability theory: $n = \floor{m/p}$ maximizes the probability of getting exactly $m$ failures for a random variable following the binomial distribution with parameters~$n$ (number of trials) and~$p$ (probability of success).

An important inequality, accurately describing the asymptotics of the Gamma function, is Stirling's inequality
\begin{equation*}
  \sqrt{2\uppi}\, x^{x+\frac12} \e^{-x} < \Gamma(x+1) < x^{x+\frac12} \e^{-x+1},
  \quad x > 0.
\end{equation*}
This double inequality can be improved in various ways, e.g., for $x \geq 1$,
\begin{equation}\label{eq:stirling}
  \sqrt{2\uppi}\, x^{x+\frac12} \e^{-x+(12x+1)^{-1}} < \Gamma(x+1) < \sqrt{2\uppi}\, x^{x+\frac12} \e^{-x+(12x)^{-1}},
\end{equation}
which can be obtained from~\cite{shi:gamma}, where also sharper bounds are presented.

Another useful inequality for the Gamma function is Gautschi's inequality.
For $x > 0$ and $s \in (0,1)$, we have
\begin{equation}\label{eq:gautschi}
  x^{1-s} < \frac{\Gamma(x+1)}{\Gamma(x+s)} < (x+1)^{1-s},
\end{equation}
see e.g.\ (5.6.4) of~\cite{nist}, which follows from the strict log-convexity of the Gamma function.

Combining the inequalities above, we find
\begin{proposition}\label{prop:binomial}
  For $m \in \NN$ and $p \in (0,1)$,
  \begin{equation}\label{eq:max-binomial-ineq}
    \max_n \binom{n+m}{m} p^m (1-p)^n \leq \min\biggl\{1,\frac{2}{\sqrt{2\uppi m}} (1-p)^{-\frac12}\biggr\}
  \end{equation}
  and, as $m \to \infty$,
  \begin{equation}\label{eq:max-binomial-asymp}
    \max_n \binom{n+m}{m} p^m (1-p)^n \sim \frac{1}{\sqrt{2\uppi m}} (1-p)^{-\frac12}.
  \end{equation}
\end{proposition}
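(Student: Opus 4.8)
The plan is to collapse the maximisation to a single binomial coefficient and then estimate that coefficient with Stirling's inequality~\eqref{eq:stirling}. First I would observe that $\binom{n+m}{m}p^m(1-p)^n$ is the probability that a $\mathrm{Bin}(n+m,p)$-distributed random variable equals $m$, so it is bounded by $1$; this gives the first alternative in the minimum in~\eqref{eq:max-binomial-ineq}. For the rest, apply Lemma~\ref{lem:binom-ineq1} with its ``$n$'' taken to be the present $n+m$ (so that its hypothesis ``$n\geq m$'' holds automatically): this yields $\binom{n+m}{m}(1-p)^{n+m}\leq\binom{M}{m}(1-p)^{M}$ with $M\defn\floor{m/p}$, and multiplying by $p^m(1-p)^{-m}$ shows that
\begin{equation*}
  \max_{n}\binom{n+m}{m}p^m(1-p)^n=\binom{M}{m}p^m(1-p)^{M-m},\qquad M=\floor*{\tfrac mp}.
\end{equation*}
So it remains to bound the right-hand side by $2(2\uppi m)^{-1/2}(1-p)^{-1/2}$.

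I would split into two cases. If $M=m$ (equivalently $p>\tfrac m{m+1}$), the quantity is simply $p^m$; using $p^m\leq\e^{-m(1-p)}$ one gets $p^m\sqrt{1-p}\leq\sup_{q>0}\e^{-mq}\sqrt q=(2m\e)^{-1/2}$, and $(2m\e)^{-1/2}\leq 2(2\uppi m)^{-1/2}$ (equivalently $\uppi\leq 4\e$). If $M>m$, I would first use that $p\mapsto p^m(1-p)^{M-m}$ is maximised at $p=m/M$, so $p^m(1-p)^{M-m}\leq(m/M)^m((M-m)/M)^{M-m}$; then Stirling's inequality~\eqref{eq:stirling} (all three arguments being $\geq1$ in this case) makes all powers of $\e$ cancel and gives
\begin{equation*}
  \binom{M}{m}\Bigl(\frac mM\Bigr)^{m}\Bigl(\frac{M-m}{M}\Bigr)^{M-m}\leq\frac{\e^{1/12}}{\sqrt{2\uppi}}\sqrt{\frac{M}{m(M-m)}}.
\end{equation*}
Finally, since $M\leq m/p$ and $M-m\geq1$, a short case distinction (according to whether $m(1-p)\geq 2p$ or not) gives $\tfrac{M(1-p)}{M-m}<2$, hence $\sqrt{M/(m(M-m))}<\sqrt{2/(m(1-p))}$; combined with $\e^{1/12}\leq\sqrt2$ (since $\tfrac1{12}\leq\tfrac12\log2$) this produces the claimed bound, with considerable slack.

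For the asymptotic~\eqref{eq:max-binomial-asymp} I would note that for fixed $p$ and $m\to\infty$ all of $M$, $m$ and $M-m$ tend to infinity, so I can rerun the previous estimate with the sharp two-sided form of~\eqref{eq:stirling}, whose correction factors then tend to $1$. Writing $M=m/p-\theta$ with $\theta\in[0,1)$, one has $\binom{M}{m}p^m(1-p)^{M-m}=\sqrt{M/(2\uppi m(M-m))}\cdot(Mp/m)^m\cdot(M(1-p)/(M-m))^{M-m}\cdot\e^{o(1)}$; a short Taylor expansion shows that the logarithms of the two middle factors equal $-\theta p+O(1/m)$ and $+\theta p+O(1/m)$ respectively, so their product is $1+O(1/m)$ uniformly in $\theta$. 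Since $M/(M-m)\to1/(1-p)$, this gives $\max_n\binom{n+m}{m}p^m(1-p)^n\sim(2\uppi m(1-p))^{-1/2}$, which is~\eqref{eq:max-binomial-asymp}.

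The main obstacle is the regime where $M-m$ is small (i.e.\ $p$ close to $1$), where the convenient approximation $p\approx m/M$ is poor and the Stirling estimate for $(M-m)!$ is unavailable or weak. I circumvent this by peeling off the case $M=m$ entirely — where the bound is elementary — and by checking that the cheap inequality $M(1-p)/(M-m)<2$ survives all the way down to $M-m=1$; the factor $2$ in the statement is precisely the slack this requires, and it is also why~\eqref{eq:max-binomial-ineq} is not asymptotically sharp.
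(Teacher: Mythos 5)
Your proof is correct, and it reaches the result by a genuinely different route after the first step, which you share with the paper: both arguments use Lem.~\ref{lem:binom-ineq1} to locate the maximizer at $n+m=\floor{m/p}$. From there the paper passes to the \emph{continuous} binomial coefficient $\binom{m/p}{m}$ of Gamma functions by means of Gautschi's inequality~\eqref{eq:gautschi}; a single application of Stirling~\eqref{eq:stirling} then produces exactly $(2\uppi m)^{-1/2}(1-p)^{-1/2}$ times the correction factor $\bigl(1+p/(m(1-p))\bigr)^{\{m/p\}}$, which is bounded by $2$ (giving~\eqref{eq:max-binomial-ineq}) and tends to $1$ (giving~\eqref{eq:max-binomial-asymp}), so both claims fall out of a single computation. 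You instead stay with the integer $M=\floor{m/p}$, which forces the case split $M=m$ versus $M>m$, discrete Stirling estimates, and a separate Taylor expansion for the asymptotics; in exchange you avoid Gautschi's inequality altogether and every step is elementary. Your individual estimates all check out: $\uppi\le 4\e$ and $\e^{1/12}\le\sqrt2$ are true, and the one step you leave slightly vague, $M(1-p)/(M-m)<2$, is cleanest if you note that $M(1-p)\le m(1-p)/p$ and $M-m=\floor{m(1-p)/p}$ (because $m$ is an integer), so the ratio is at most $y/\floor{y}<2$ with $y=m(1-p)/p$ and $\floor{y}\ge1$. Both proofs correctly expose the factor $2$ in~\eqref{eq:max-binomial-ineq} as slack coming from rounding $m/p$ to an integer rather than from the Stirling estimate itself.
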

\begin{proof}
  Denote by $\{x\} = x - \floor{x}$ the fractional part of a real number~$x$.
  We calculate
  \begin{align*}
    \max_n \binom{n+m}{m} p^m (1-p)^n
    &= \binom{\floor{m/p}}{m} p^m (1-p)^{\floor{m/p}-m} \\
    &\leq \binom{m/p}{m} p^m (1-p)^{m/p-m} \biggl(1 + \frac{p}{m (1-p)}\biggr)^{\{m/p\}} \\
    &< \frac{1}{\sqrt{2\uppi m}} (1-p)^{-\frac12} \biggl(1 + \frac{p}{m (1-p)}\biggr)^{\{m/p\}} \\
    &\leq \frac{1}{\sqrt{2\uppi m}} (1-p)^{-\frac12} \biggl(1 + \biggl\{\frac{m}{p}\biggr\} \frac{p}{m(1-p)} \biggr),
  \end{align*}
  where we applied (in this order) Lem.~\ref{lem:binom-ineq1}, Gautschi's inequality~\eqref{eq:gautschi}, Stirling's inequality~\eqref{eq:stirling} and Bernoulli's inequality.
  Finally, an application of the inequality
  \begin{equation*}
    \biggl\{\frac{m}{p}\biggr\} \frac{p}{m(1-p)} = \biggl\{\frac{m (1-p)}{p}\biggr\} \frac{p}{m(1-p)} \leq \min\biggl\{1, \frac{p}{m(1-p)}\biggr\}
  \end{equation*}
  yields~\eqref{eq:max-binomial-ineq}, and from
  \begin{equation*}
    \lim_{m \to \infty} \biggl\{\frac{m}{p}\biggr\} \frac{p}{m(1-p)} = 0
  \end{equation*}
  we obtain~\eqref{eq:max-binomial-asymp}.
\end{proof}

\begin{proposition}\label{prop:ratio-sum-ineq}
  Let $n \in \NN$.
  We have
  \begin{equation*}
    \sum_{m=1}^{\ceil{\frac{n+1}{2}}} \frac{(m-1)!}{(n-m)!} \leq \begin{cases*}
      \frac32         & odd $n$, \\
      \frac{n}{2} + 1 & even $n$.
    \end{cases*}
  \end{equation*}
\end{proposition}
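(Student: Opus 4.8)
The plan is to analyze the sum $\sum_{m=1}^{M} \frac{(m-1)!}{(n-m)!}$ with $M = \ceil{\frac{n+1}{2}}$ by comparing consecutive summands and recognizing that, over the relevant range of $m$, the ratio of successive terms is at most $1$ so that the largest term is the last one. Concretely, the ratio of the $(m+1)$-st term to the $m$-th is $\frac{m!/(n-m-1)!}{(m-1)!/(n-m)!} = \frac{m}{n-m}$, which is $\le 1$ exactly when $m \le n/2$. Since $M \le \frac{n}{2}+1$, essentially all terms except possibly the very last are nondecreasing as $m$ grows, so the bulk of the sum is controlled by its final one or two terms. The cleanest route is to treat the odd and even cases separately, matching the case split in the statement.

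For odd $n = 2k-1$ we have $M = k$, and the terms increase up to $m = k-1$ (where $m/(n-m) = (k-1)/k < 1$) and then the step from $m=k-1$ to $m=k$ has ratio $(k-1)/(k-1) = 1$, so the last two terms are equal and are the maximal ones. One then bounds $\sum_{m=1}^{k} \frac{(m-1)!}{(2k-1-m)!}$ by splitting off the last term and dominating the geometric-like tail: since the ratios are products of fractions $< 1$ with a clear geometric decay away from the top, the whole sum is at most a small constant multiple of the top term $\frac{(k-1)!}{(k-1)!} = 1$; a careful accounting (or a short induction on $k$) gives the bound $\tfrac32$. I would likely just verify the base cases $n=1,3$ directly and run an induction: assuming the bound for $n-2$, relate $\sum_{m=1}^{M_n} \frac{(m-1)!}{(n-m)!}$ to the corresponding sum for $n-2$ by factoring out and using the crude inequality $(n-m)! \ge (n-2)\cdot(n-2-m)!$-type comparisons; this reduces the new sum to the old one plus a vanishingly small correction.

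For even $n = 2k$ we have $M = k+1$, and now the ratio $m/(n-m)$ equals $1$ at $m=k$ and exceeds $1$ at $m=k$ going to $m=k+1$ (ratio $k/(k-1) > 1$), so the final term $\frac{k!}{(k-1)!} = k$ is genuinely the largest, and the sum is dominated by it: one gets $\sum_{m=1}^{k+1} \frac{(m-1)!}{(2k-m)!} \le k + (\text{lower-order terms}) \le \frac{n}{2}+1 = k+1$. Again an induction on $k$ or a direct termwise domination (bounding the first $k$ terms by a geometric series summing to at most $1$, plus the top term $k$) closes it.

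The main obstacle is purely bookkeeping: making the "geometric tail" estimate tight enough that the odd case lands at exactly $\tfrac32$ and the even case at exactly $\tfrac n2+1$, rather than at some larger constant. I expect the induction to be the safest way to get sharp constants, with the key inductive step being a clean comparison between the sum at level $n$ and the sum at level $n-2$; the only delicate point there is handling the change in the upper limit $\ceil{\frac{n+1}{2}}$ (which increases by exactly $1$ when $n$ increases by $2$) and showing the newly added top term plus the rescaling of the old terms does not overshoot. Everything else is elementary factorial manipulation.
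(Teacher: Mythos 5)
Your overall strategy --- isolate the dominant last summand, control the remaining tail, and close the estimate by an induction in steps of $2$ --- is viable, but the proposal as written rests on a concrete computational error that propagates into false intermediate claims. The ratio of consecutive summands is
\begin{equation*}
  \frac{m!/(n-m-1)!}{(m-1)!/(n-m)!} \;=\; \frac{m!}{(m-1)!}\cdot\frac{(n-m)!}{(n-m-1)!} \;=\; m(n-m),
\end{equation*}
not $m/(n-m)$. Hence the summands are rapidly \emph{increasing} throughout the range $1 \le m \le \ceil{\frac{n+1}{2}}$ (each is at least $n-1$ times the previous one), your description of the term profile is internally inconsistent (a ratio $\le 1$ would make the terms decrease, not increase), and the claim that for odd $n=2k-1$ the last two terms are equal is false: for $n=5$ they are $\tfrac16$ and $1$. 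Moreover the decisive quantitative step --- that the tail below the top term sums to at most $\tfrac12$ in the odd case (it equals exactly $\tfrac12$ at $n=3$, so there is no room for a sloppy geometric bound) and to at most $1$ in the even case --- is never actually carried out; ``geometric-like decay plus careful accounting'' is precisely the part that constitutes the proof.

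That said, the induction you gesture at does close once the arithmetic is repaired: writing $t_m^{(n)}=\frac{(m-1)!}{(n-m)!}$ one has $t_m^{(n+2)} = t_m^{(n)}/\bigl((n+2-m)(n+1-m)\bigr)$, so in passing from $n$ to $n+2$ every old summand shrinks by at least $(n+2-M)(n+1-M)$ with $M=\ceil{\frac{n+1}{2}}$, while a single new top summand is appended (equal to $1$ for odd $n$, to $\tfrac{n}{2}+2$ for even $n$); together with the base cases $n=1,2,3$ this yields the stated bounds. The paper argues differently: for odd $n$ it indexes the summands by their distance $j$ from the top, so the $j$-th largest term is $\frac{(k-j)!}{(k+j)!}$ with $k=\frac{n-1}{2}$, observes that this is decreasing in $n$ and hence at most $\frac{1}{(2j)!}$ (and at most $\tfrac16$ for $j=1$, $n\ge5$), and bounds the entire family of sums by $1+\tfrac16+\sum_{j\ge2}\frac{1}{(2j)!}=\cosh 1-\tfrac13<\tfrac32$, checking $n=1,3$ by hand. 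Either route works; yours needs the corrected ratio and an explicit tail or inductive estimate before it is a proof.
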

\begin{proof}
  We consider only the odd case, the proof for the even case proceeds analogously.
  The smallest summand in the sum for $n$ is $1/(n-1)!$, which is larger than the second smallest summand $1/n!$ in the sum for $n+2$.
  Therefore we can bound all sums uniformly in $n$ by summing up the smallest summands for each $n$.
  If we proceed like that for $n \geq 5$, we obtain
  \begin{equation*}
    1 + \frac16 + \sum_{n=2}^\infty \frac{1}{(2n)!} = \cosh 1 - \frac13 < \frac32.
  \end{equation*}
  For $n=1$ and $n=3$, the sums yield $1$ and $\frac32$, respectively.
\end{proof}

%------------------------------------------------------------------------------%
\subsection{Expansion of Synge's world function}
\label{sec:expansion}

To compute the (truncated) Hadamard parametrix in a given spacetime, it is necessary (among other things) to find Synge's world function $\sigma$.
The world function $\sigma(x,y)$ is defined as half the square signed geodesic distance between the points $x$ and $y$ (if the two points lie in a geodesically convex neighbourhood) and satisfies the relation
\begin{equation}\label{eq:synge}
  2 \sigma = \bigl((\nabla^\mu \otimes \one) \sigma\bigr) \bigl((\nabla_\mu \otimes \one) \sigma\bigl).
\end{equation}
In fact, the relation~\eqref{eq:synge} together with the coincidence limits
\begin{equation}\label{eq:synge-coinc}
  [\sigma] = 0,
  \quad
  [(\nabla_\mu \otimes \one) \sigma] = 0,
  \quad
  [(\nabla_\mu \nabla_\nu \otimes \one) \sigma] = g_{\mu\nu}
\end{equation}
uniquely defines Synge's world function.

An expansion of Synge's world function $\sigma(x,y)$ in terms of the coordinate distance $\delta x$ between the points $x$ and $y$ can be obtained in the following way~\cite{siemssen:phd}:
We make the Ansatz (in the sense of formal power series)
\begin{equation*}
  \sigma(x,y) = \sum_n \frac{1}{n!} \varsigma_{\mu_1 \dotsm \mu_n}(x) \delta x^{\mu_1} \delta x^{\mu_n}.
\end{equation*}
As a consequence of~\eqref{eq:synge} and~\eqref{eq:synge-coinc}, we find the recurrence relation
\begin{align*}
  2(1-n) \varsigma_{\mu_1 \dotsm \mu_n} & = \sum_{j=2}^{n-2} \binom{n}{j}\, g^{\nu \rho} \bigl(\partial_\nu \varsigma_{(\mu_1 \dotsm \mu_j|} - \varsigma_{(\mu_1 \dotsm \mu_j| \nu}\bigr) \bigl(\partial_\rho \varsigma_{|\mu_{j+1} \dotsm \mu_n)} - \varsigma_{|\mu_{j+1} \dotsm \mu_n) \rho}\bigr) \\&\quad - 2n \partial_{(\mu_1} \varsigma_{\mu_2 \dotsm \mu_n)}
\end{align*}
together with `initial' coefficients $\varsigma = 0, \varsigma_\mu = 0, \varsigma_{\mu\nu} = g_{\mu\nu}$.

%------------------------------------------------------------------------------%
\subsection{Some technical proofs}
\label{sec:proofs}

\begin{proof}[of Thm.~\ref{thm:quasilinear}]\label{proof:thm:quasilinear}
  For $\vec{a} = (a_0, a_1, \dotsc, a_k) \in C(I; \mathcal{B}_R)$, define
  \begin{subequations}\label{eq:contraction-map}\begin{align}
    \phi[\vec{a}](\tau) &\defn \bigl( a_1(\tau), a_2(\tau), \dotsc, a_k(\tau), f(\tau, \vec{a}, \mathcal{M}[\vec a]) \bigr), \\
    \Phi[\vec a](\tau) &\defn \vec{a}_\init + \int_{\tau_\init}^\tau \phi[\vec{a}](\eta) \dif\eta.
  \end{align}\end{subequations}
  By assumptions (ii) and (iii), we have
  \begin{equation}\label{eq:solution-estimate}
    \norm{\Phi[\vec a](\tau) - \vec{b}_c} \leq \norm{\vec{a}_\init - \vec{b}_c} + \int_{\tau_\init}^\tau \norm{\phi[\vec{a}](\eta)} \dif\eta \leq \frac{R}{2} + (\tau - \tau_\init) (\norm{\vec{b}_c} + R + \mu_f),
  \end{equation}
  which proves that $\Phi$ closes on $C(J; \mathcal{B}_R)$, where $J = [\tau_\init,\tau_\mathrm{stop}]$ for sufficiently small $\tau_\mathrm{stop} \leq \tau_\fin$.

  If $\vec{a}, \tilde{\vec a} \in C(J; \mathcal{B}_R)$, we calculate (using assumptions (iii) and (iv)) for $\tau \in J$
  \begin{align*}
    \norm{\Phi[\vec a](\tau) - \Phi[\tilde{\vec a}](\tau)}
    &\leq \int_{\tau_\init}^\tau \norm{\phi[\vec{a}](\eta)-\phi[\tilde{\vec a}](\eta)} \dif\eta \\
    &\leq \int_{\tau_\init}^\tau \bigl( (1+L_f) \norm{\vec{a}(\eta)-\tilde{\vec a}(\eta)} + L_f {\norm{\mathcal{M}[\vec a](\eta)-\mathcal{M}[\tilde{\vec a}](\eta)}}_{p,v} \bigr) \dif\eta \\
    &\leq (1+L_f) (\tau_\mathrm{stop}-\tau_\init) {\norm{\vec{a}-\tilde{\vec a}}}_{C(J; \RR^{k+1})} \\&\qquad + L_f L_{\mathcal M} \int_{\tau_\init}^\tau\int_{\tau_\init}^{\eta} \norm{\vec{a}(\zeta)-\tilde{\vec a}(\zeta)} \dif\zeta \dif\eta \\
    &\leq \bigl( (1+L_f) (\tau_\mathrm{stop}-\tau_\init) + \tfrac{1}{2} L_f L_{\mathcal M} (\tau_\mathrm{stop}-\tau_\init)^2 \bigr) {\norm{\vec{a}-\tilde{\vec a}}}_{C(J; \RR^{k+1})}.
  \end{align*}
  Therefore
  \begin{equation*}
    {\norm{\Phi[\vec a] - \Phi[\tilde{\vec a}]}}_{C(J; \RR^{k+1})} \leq \bigl( (1+L_f) (\tau_\mathrm{stop}-\tau_\init) + \tfrac{1}{2} L_f L_{\mathcal M} (\tau_\mathrm{stop}-\tau_\init)^2 \bigr) {\norm{\vec{a}-\tilde{\vec a}}}_{C(J; \RR^{k+1})}.
  \end{equation*}
  which shows that $\Phi$ is a contraction map if $\tau_\mathrm{stop}$ is sufficiently small.

  It now follows by the Banach fixed-point theorem that $\Phi$ has a unique fixed point $\vec{a} \in C^1(J; \mathcal{B}_R)$.
  By the form of \eqref{eq:contraction-map}, $\vec{a}$ is such that $\vec{a} = (a, a', \dotsc, a^{(k)})$ for some $a \in C^{k+1}(J)$.
  This yields the unique solution to~\eqref{eq:quasilinear} with the properties stated in the theorem.
\end{proof}

\begin{proof}[of Thm.~\ref{thm:continuous}]\label{proof:thm:continuous}
  The two solutions $a(\tau)$ and $\tilde{a}(\tau)$ satisfy
  \begin{align*}
    \vec{a}(\tau) &= \vec{a}_\init + \int_{\tau_\init}^\tau \bigl( a'(\eta), \dotsc, a^{(k)}(\eta), f(\eta, \vec{a}, \mathcal M[\vec a]) \bigr) \dif\eta, \\
    \tilde{\vec a}(\tau) &= \tilde{\vec a}_\init + \int_{\tau_\init}^\tau \bigl( \tilde{a}'(\eta), \dotsc, \tilde{a}^{(k)}(\eta), \tilde{f}(\eta, \tilde{\vec a}, \tilde{\mathcal M}[\tilde{\vec a}]) \bigr) \dif\eta
  \end{align*}
  and thus
  \begin{equation*}
    \norm{\vec{a}(\tau) - \tilde{\vec a}(\tau)} \leq \norm{\vec{a}_\init - \tilde{\vec a}_\init} + \int_{\tau_\init}^\tau \bigl( \norm{\vec{a}(\eta) - \tilde{\vec a}(\eta)} + \abs{f(\eta, \vec{a}, \mathcal{M}[\vec a]) - \tilde{f}(\eta, \tilde{\vec a}, \tilde{\mathcal M}[\tilde{\vec a}])} \bigr) \dif\eta.
  \end{equation*}
  Since
  \begin{align*}\MoveEqLeft
    \abs{f(\tau, \vec{a}, \mathcal{M}[\vec a]) - \tilde{f}(\tau, \tilde{\vec a}, \tilde{\mathcal M}[\tilde{\vec a}])} \\
    &\leq \abs{f(\tau, \vec{a}, \mathcal{M}[\vec a]) - \tilde{f}(\tau, \vec{a}, \mathcal{M}[\vec a])} + \abs{\tilde{f}(\tau, \vec{a}, \mathcal{M}[\vec a]) - \tilde{f}(\tau, \tilde{\vec a}, \tilde{\mathcal M}[\tilde{\vec a}])} \\
    &\leq \mu_f' + L_f \bigl( \norm{\vec{a}(\tau) - \tilde{\vec a}(\tau)} + \norm{\mathcal{M}[\vec a](\tau) - \tilde{\mathcal M}[\tilde{\vec a}](\tau)}_{p,v} \bigr) \\
    &\leq \mu_f' + L_f \norm{\vec{a}(\tau) - \tilde{\vec a}(\tau)} + L_f L_\mathcal{M}' \mleft( \norm{\mathcal{M}_\init - \tilde{\mathcal M}_\init}_{p,w} + \int_{\tau_\init}^\tau \norm{\vec{a}(\eta) - \tilde{\vec a}(\eta)} \dif\eta \mright)
  \end{align*}
  with the constants as defined in the theorem, we find
  \begin{align*}\MoveEqLeft
    \norm{\vec{a}(\tau) - \tilde{\vec a}(\tau)} - \norm{\vec{a}_\init - \tilde{\vec a}_\init}
    \\&\leq \int_{\tau_\init}^\tau \biggl( \mu_f' + L_f L_\mathcal{M}' \norm{\mathcal{M}_\init - \tilde{\mathcal M}_\init}_{p,w} + \bigl((1 + L_f) + L_f L_\mathcal{M}' (\tau - \eta) \bigr) \norm{\vec{a}(\eta) - \tilde{\vec a}(\eta)} \biggr) \dif\eta.
  \end{align*}
  This implies inequality~\eqref{eq:dependence-ineq} by a generalization of Gr{\"o}nwall's inequality (e.g., Thm.~1.4 of~\cite{bainov-simeonov}).
\end{proof}

%------------------------------------------------------------------------------%
\small

\end{document}